\documentclass[journal]{IEEEtran}
\usepackage{amsmath,amssymb,amsthm,bm}
\usepackage{graphicx}
\usepackage{cuted}
\usepackage{capt-of}
\usepackage{placeins}
\usepackage[hidelinks]{hyperref}
\hypersetup{
  pdftitle={Fundamental Limits of Adaptive Beamforming Under Finite Training},
  pdfauthor={Zhiyong Cheng, Shengyao Chen, and Di Song},
  pdfsubject={Sharp first-order local minimax limits of adaptive beamforming from finite training data},
  pdfkeywords={adaptive beamforming, finite training, fundamental limits, local asymptotic minimax theory, MVDR, Fisher information},
  hypertexnames=false
}
\newtheorem{theorem}{Theorem}
\newtheorem{lemma}{Lemma}
\newtheorem{proposition}{Proposition}
\newtheorem{corollary}{Corollary}
\newtheorem{assumption}{Assumption}
\theoremstyle{remark}
\newtheorem{remark}{Remark}
\newtheorem*{informalresult}{Informal main result}

\DeclareMathOperator{\tr}{tr}
\newcommand{\E}{\mathbb{E}}
\newcommand{\R}{\mathbb{R}}
\newcommand{\Cx}{\mathbb{C}}
\newcommand{\ba}{\bar{\mathbf a}}
\newcommand{\w}{\mathbf w}
\newcommand{\x}{\mathbf x}
\newcommand{\q}{\mathbf q}
\newcommand{\wopt}{\w_\star}
\newcommand{\Rb}{\mathbf R}
\newcommand{\Mb}{\mathbf M}
\newcommand{\Jb}{\mathbf J}
\newcommand{\Ib}{\mathbf I}
\newcommand{\Sb}{\mathbf S}
\newcommand{\fb}{\mathbf f}
\newcommand{\Ab}{\mathbf A}
\newcommand{\Bb}{\mathbf B}
\newcommand{\Xb}{\mathbf X}
\newcommand{\Yb}{\mathbf Y}

\newcommand{\Hb}{\mathbf H}
\newcommand{\herm}{^{\!*}}
\newcommand{\inv}{^{-1}}
\newcommand{\at}{\tilde{\mathbf a}}
\newcommand{\qt}{\tilde{\mathbf q}}
\newcommand{\tha}{{\bm\theta}}
\newcommand{\thL}{{\bm\theta_L}}

\begin{document}

\title{Fundamental Limits of Adaptive Beamforming Under Finite Training}
\author{Zhiyong Cheng,~\IEEEmembership{Member,~IEEE},
Shengyao Chen,~\IEEEmembership{Member,~IEEE}, and
Di Song,~\IEEEmembership{Member,~IEEE}%
\thanks{Z. Cheng is with the School of Computer and Artificial Intelligence, Chaohu University (e-mail: chengzhiyong@ieee.org).}%
\thanks{S. Chen is with the School of Electronic and Optical Engineering, Nanjing University of Science and Technology (e-mail: chenshengyao@njust.edu.cn). (Corresponding author: Shengyao Chen.)}%
\thanks{D. Song is with the School of Physics and Electronic Engineering, Nanyang Normal University (e-mail: dsong1\_13@njust.edu.cn).}%
}
\maketitle

\begin{abstract}
Finite training reduces the output signal-to-interference-plus-noise ratio (SINR) of an adaptive beamformer, and a natural question is how much of this loss is unavoidable. This paper determines this question, providing a beamforming counterpart of the Cram\'er--Rao bound in spectral estimation. An exact identity expresses the SINR loss as a bounded function of the error in the clairvoyant minimum-variance distortionless-response (MVDR) weight. It yields a local asymptotic minimax lower bound over all measurable data-dependent beamforming rules, including biased and irregular rules. The first-order coefficient is $\tr(\Mb\Jb_{\rm eff}^{-1})$, where $\Jb_{\rm eff}$ describes the information in the training data and $\Mb$ measures the sensitivity of the output SINR. Matching constructions determine this coefficient in two complex-Gaussian models. For an $N$-sensor uniform linear array with $K$ distinct point interferers and $2K+1\le N$, a data-driven split one-step beamformer attains the coefficient $C_\theta\le K$ at every interior scene of a fixed compact regular parameter set. For unrestricted covariance matrices, sample matrix inversion (SMI) attains the coefficient $N-1$ through the classical Reed--Mallett--Brennan law. The difference quantifies the first-order value of finite-source structure. Geometric formulas and numerical results describe the dependence on interference power and array geometry, and the finite-sample departure near a weak-source boundary.
\end{abstract}

\begin{IEEEkeywords}
Adaptive beamforming, fundamental limits, local asymptotic minimax theory.
\end{IEEEkeywords}

\section{Introduction}

\IEEEPARstart{A}{daptive} beamforming is used in radar, sonar, radio astronomy, and wireless communications to preserve a signal from a prescribed look direction while suppressing interference and noise. If the ensemble interference-plus-noise covariance is known, the minimum-variance distortionless-response (MVDR) beamformer maximizes the output signal-to-interference-plus-noise ratio (SINR) under the distortionless constraint \cite{VanTrees2002}. In practice the covariance is unknown, so the weight must be formed from $n$ signal-free training snapshots; the resulting weight is random, and its output SINR falls below the clairvoyant (known-covariance) value.

A basic question is how large this loss must be. In line spectral estimation and direction-of-arrival estimation, the analogous question is answered by the Cram\'er--Rao bound (CRB), which relates Fisher information to estimation error and has long served as an algorithm-independent benchmark for estimator efficiency and system design \cite{StoicaNehorai1989,StoicaNehorai1990}. A beamforming counterpart must be stated in terms of the output-SINR loss, must account for both the information in the training data and the effect of parameter errors on the optimum weight, and must apply to the whole class of data-dependent beamforming rules.

The classical finite-training benchmark is the Reed--Mallett--Brennan (RMB) law for sample matrix inversion (SMI) \cite{ReedMallettBrennan1974}: under independent complex-Gaussian training, the normalized output SINR of SMI has an exact beta distribution, and its mean loss has first-order coefficient $N-1$ for an array of $N$ sensors. This is the origin of the familiar rule of about two training snapshots per sensor for a $3$-decibel (dB) loss. But the RMB law only determines what SMI achieves. Here we ask a different question: what is the minimum output-SINR loss allowed by the training model itself, including its first-order coefficient, and which rule attains it?

The first step is an exact representation of the loss. Let $\Rb$ denote the interference-plus-noise covariance, let $\ba_0$ denote the known look-direction steering vector, i.e., the array's response to a unit signal from the look direction, and let $\wopt$ denote the clairvoyant MVDR weight. For a candidate weight $\hat\w$, denote the normalized output-SINR loss by $L(\hat\w;\Rb)$. The loss is unchanged by a nonzero scaling of $\hat\w$, so the weight can be normalized to have unit look gain. For a vector $\mathbf u$, define $\|\mathbf u\|_{\Rb}^{2}=\mathbf u\herm\Rb\mathbf u$, where $(\cdot)\herm$ denotes the Hermitian transpose. We show that
\begin{equation}\label{eq:intro-identity}
L(\hat\w;\Rb)
=\frac{\|\hat\w-\wopt\|_{\Rb}^{2}}
{P_\star+\|\hat\w-\wopt\|_{\Rb}^{2}},
\qquad
P_\star=\frac{1}{\ba_0\herm\Rb^{-1}\ba_0},
\end{equation}
where $P_\star$ is the clairvoyant MVDR output power. The identity in (\ref{eq:intro-identity}) shows that the SINR loss is a bounded function of the error in the optimum weight, with a quadratic leading term near the optimum. Finite-training beamforming can therefore be studied as estimation of the MVDR weight under the metric set by the beamforming task; no covariance estimator or beamforming algorithm needs to be fixed at this stage.

Let $\bm\vartheta$ denote the state of a regular finite-dimensional training model. Two matrices govern the first-order loss: the per-snapshot Fisher information $\Jb_{\vartheta}$, which measures how well nearby states can be distinguished, and the loss-curvature matrix $\Mb_{\vartheta}$, which measures how errors in those states change the output SINR. Together they give the coefficient
\begin{equation*}
C_{\vartheta}=\tr(\Mb_{\vartheta}\Jb_{\vartheta}^{-1}),
\end{equation*}
where $\tr(\cdot)$ denotes the matrix trace. In the regular fixed-dimensional regime, the $1/n$ rate is common to many models; the coefficient $C_{\vartheta}$ determines their different training requirements. A CRB calculation gives this form for unbiased or locally unbiased estimation, but it does not give a converse for arbitrary biased or irregular beamforming rules. We use the H\'ajek--Le Cam local asymptotic minimax theorem \cite{Hajek1972,vanderVaart1998} and (\ref{eq:intro-identity}) to obtain such a converse under the exact bounded output-SINR loss.

\begin{informalresult}
For every regular finite-dimensional training model formulated in Section~\ref{sec:framework}, $C_{\vartheta}/n$ is a first-order local minimax lower bound on the expected output-SINR loss over all measurable data-dependent beamforming rules. For complex-Gaussian training on an $N$-sensor uniform linear array (ULA) with $K$ distinct point interferers and $2K+1\le N$, a fully data-driven split one-step beamformer attains the bound at every interior scene of a fixed compact regular parameter set. The converse and achievability meet at
\begin{equation*}
\frac{C_\theta}{n},
\qquad
0<C_\theta=\tr(\Mb\Jb_L^{-1})
\le K,
\end{equation*}
where $\theta$ denotes the finite-source scene, $\Mb$ is its loss-curvature matrix, and $\Jb_L$ is the efficient Fisher information for the source directions and relative powers. For the unrestricted complex-Gaussian covariance model, the same converse and the SMI achievability meet at $(N-1)/n$. The SMI achievability follows from the classical RMB law.
\end{informalresult}

The attaining ULA construction has two stages: a least-squares fit on a small pilot block of snapshots gives a rough scene estimate, and one Fisher-scoring (Newton-type) correction on the remaining, independent snapshots makes it efficient. Because the covariance determines the scene uniquely and the parameter set is compact, the estimation error has bounded moments of every order; this is what converts the local expansion into a statement about the expected SINR loss. A quantitative lower bound on the source separation is needed only for the geometric estimates that are uniform in the array size, not for achievability at fixed $N$ and $K$.

The coefficient $C_\theta$ has a simple reading: $\Jb_L^{-1}$ describes the uncertainty that remains in the source directions and relative powers, and $\Mb$ keeps the part that changes the output SINR. For a small target mean loss $\epsilon>0$, an efficient rule therefore requires approximately $C_\theta/\epsilon$ training snapshots to first order.

A finer description comes from the eigenvalues of $\Jb_L^{-1/2}\allowbreak\Mb\allowbreak\Jb_L^{-1/2}$. They determine the limiting distribution of the scaled loss and identify which parameter directions matter to the beamforming task. We also derive a source-wise decomposition; at high interference-to-noise ratio (INR), each interferer contributes one half of a squared cosine between the projected look direction and the projected array-manifold tangent (the derivative of the steering vector with respect to the source direction).

The numerical results compare practical structured reconstructions with the finite-source coefficient, verify the sharp SMI coefficient, and trace the finite-sample onset of the local regime near a weak-source boundary, where an interferer's power is barely large enough for reliable localization.

\subsection{Relations to Prior Art}\label{sec:prior}

Sample-covariance beamformers under finite training have been analyzed from many angles: exact distributions and perturbation expansions are available for SMI and related processors \cite{Steinhardt1991,RaghunathReddy1992,Richmond1996}, while diagonal loading, covariance shrinkage, and random-matrix corrections reduce the instability of covariance inversion \cite{CoxZeskindOwen1987,ChenWieselEldarHero2010,MestreLagunas2006,YangMcKayCouillet2018}. These analyses give detailed finite-sample performance, but always for specified processors and estimator families; they do not determine the best first-order coefficient over the full rule class.

Another line of work exploits the low-dimensional structure of the interference. Dominant-mode and reduced-rank beamformers restrict adaptation to an estimated interference subspace \cite{AbrahamOwsley1990,WageBuck2014,WageBuck2015}; for one strong interferer, the mean dominant-mode-rejection loss depends on the number of training snapshots. Parametric reconstruction methods estimate the source directions and powers and then form the MVDR weight from the reconstructed model \cite{SantosZoltowskiRangaswamy2007}. Interference structure can evidently remove the array-size dependence of SMI; Section~\ref{sec:ula} identifies the model-wide coefficient behind this effect.

The MVDR weight has also been studied directly as the quantity to be estimated: Souloumiac derived a CRB for unbiased estimates of the minimum-variance weight and proposed a biased modification of the sample weight \cite{Souloumiac1996}, Ollila and Koivunen obtained the influence function and asymptotic covariance of scatter plug-in MVDR weights \cite{OllilaKoivunen2009}, and Besson developed Stein-type modifications of the sample MVDR filter \cite{Besson2024}. We share this view of the weight as the object of interest, but we work with the exact bounded output-SINR loss and a local minimax criterion over the full rule class. We also give attaining rules for the two models studied below.

Beyond the CRB, the Ziv--Zakai, Chazan--Zakai--Ziv, and Weiss--Weinstein bounds give nonlocal or prior-dependent lower bounds on parameter mean-square error \cite{ZivZakai1969,ChazanZakaiZiv1975,WeissWeinstein1985}, and they are the tools of choice when a local quadratic approximation is insufficient. The present problem differs in both action and loss: the action is a beamforming weight, and the loss is the exact bounded output-SINR degradation. It is Lemma~\ref{lem:transfer} that connects this task loss to the local asymptotic minimax bound.

Robust adaptive beamforming treats a different source of loss: it protects against steering-vector mismatch, covariance uncertainty, or distributional ambiguity through a prescribed uncertainty set or ambiguity model \cite{VorobyovGershmanLuo2003,GuLeshem2012,WangDaiLi2025,HuangHuangVorobyovLuo2026}. In the present problem the training model is fixed and the loss comes from finite sampling alone, so the resulting minimax bound concerns sampling uncertainty rather than a worst-case design over model uncertainty.

\subsection{Notation}\label{sec:notation}

Bold lowercase and uppercase letters denote vectors and matrices. The superscripts $(\cdot)\herm$ and $(\cdot)^\top$ denote the Hermitian transpose and transpose. We use $\|\cdot\|$ for the Euclidean or spectral norm and $\|\cdot\|_F$ for the Frobenius norm. For Hermitian matrices, $\Ab\succeq\Bb$ means that $\Ab-\Bb$ is positive semidefinite. The set of $N\times N$ Hermitian positive-definite matrices is $\mathbb H_{++}^{N}$. For $\Xb\succ0$, $\Xb^{1/2}$ is its Hermitian positive-definite square root, $\langle\mathbf u,\mathbf v\rangle_{\Xb}=\mathbf u\herm\Xb\mathbf v$, and $\|\mathbf u\|_{\Xb}^2=\mathbf u\herm\Xb\mathbf u$. For a nonzero vector $\mathbf v$ and a full-column-rank matrix $\Ab$, define $\mathbf P_{\mathbf v}^{\perp}=\Ib-\mathbf v\mathbf v\herm/\|\mathbf v\|^2$ and $\mathbf P_{\Ab}^{\perp}=\Ib-\Ab(\Ab\herm\Ab)^{-1}\Ab\herm$. The unit torus is $\mathbb T=\R/\mathbb Z$, with distance $|t|_{\mathbb T}=\min_{\ell\in\mathbb Z}|t-\ell|$. All derivatives are with respect to real coordinates; $Df$ denotes the Jacobian or Fr\'echet derivative and $\partial_a f$ a coordinate derivative.

The distributions $\mathcal{CN}(\bm\mu,\Rb)$ and $\mathcal N(\bm\mu,\mathbf\Sigma)$ are circular complex Gaussian and real Gaussian, respectively. We write $\E_{\vartheta}$ for expectation under the state $\bm\vartheta$, $\Rightarrow$ for convergence in distribution, and $a\wedge b=\min\{a,b\}$. The orders $O(\cdot)$ and $o(\cdot)$ are deterministic. We write $x\lesssim y$ when $x\le Cy$ for a regime-dependent constant independent of $N$ and $n$, and use $\gtrsim$ and $\asymp$ for the reverse and two-sided relations. In fixed-dimensional compact-set arguments, constants may depend on the chosen set, $N$, and $K$; dimension-uniform bounds state their dependencies explicitly.

\subsection{Organization}

The rest of this paper is organized as follows. Section~\ref{sec:framework} defines the training model, the output-SINR loss, and the local risk. Section~\ref{sec:general} gives the general local asymptotic minimax converse. Sections~\ref{sec:ula} and~\ref{sec:full} establish matching achievability for the finite-source ULA and unrestricted Gaussian models. Section~\ref{sec:visibility} interprets the finite-source coefficient. Section~\ref{sec:sim} presents the numerical results, and Sections~\ref{sec:discussion} and~\ref{sec:concl} close with the discussion and conclusion. All proofs are deferred to Appendices~\ref{app:excess}--\ref{app:smi}.
\section{Statistical model and SINR loss}\label{sec:framework}

This section defines the training model, the beamforming loss, and the local risk. Let $\mathcal V\subset\R^d$ be an open parameter set, where $d$ is the state dimension, and let $\bm\vartheta\in\mathcal V$ denote the unknown interference state. The one-snapshot training model is $\{P_{\bm\vartheta}:\bm\vartheta\in\mathcal V\}$. A signal-free snapshot $\x\sim P_{\bm\vartheta}$ has zero mean and positive-definite covariance $\Rb(\bm\vartheta)$. The $n$ snapshots are independent, and $P_{\bm\vartheta}^{\otimes n}$ denotes their joint distribution. We refer to the one-snapshot family as the training model and to the product family as the $n$-snapshot experiment. The look-direction steering vector $\ba_0\ne\bm0$ is known. A beamforming rule is any measurable mapping from the snapshots to a weight $\hat\w_n(\x_1,\ldots,\x_n)$.

The clairvoyant MVDR weight and its output power are
\begin{equation}\label{eq:general-mvdr}
\begin{aligned}
\psi(\bm\vartheta)=\wopt(\bm\vartheta)
&=\frac{\Rb(\bm\vartheta)^{-1}\ba_0}
{\ba_0\herm\Rb(\bm\vartheta)^{-1}\ba_0},\\
P_\star(\bm\vartheta)
&=\frac{1}{\ba_0\herm\Rb(\bm\vartheta)^{-1}\ba_0}.
\end{aligned}
\end{equation}
The normalized output-SINR loss is
\begin{equation}\label{eq:loss}
L(\hat\w_n;\bm\vartheta)
=1-\frac{|\hat\w_n\herm\ba_0|^2}
{(\hat\w_n\herm\Rb(\bm\vartheta)\hat\w_n)
(\ba_0\herm\Rb(\bm\vartheta)^{-1}\ba_0)}\in[0,1].
\end{equation}
We set $L(\bm0;\bm\vartheta)=1$, which is also the loss of every nonzero weight with zero look gain. We write $\rho(\hat\w_n;\bm\vartheta)=1-L(\hat\w_n;\bm\vartheta)$ for the normalized output-SINR ratio.

\begin{lemma}[Exact excess-loss identity]\label{lem:excess}
For every $\hat\w$ satisfying $\hat\w\herm\ba_0=1$,
\begin{equation}\label{eq:excess}
L(\hat\w;\bm\vartheta)
=\frac{E}{P_\star+E},
\qquad
E=\|\hat\w-\psi(\bm\vartheta)\|_{\Rb(\bm\vartheta)}^2,
\end{equation}
where $E$ is the excess output power above the clairvoyant MVDR value.
\end{lemma}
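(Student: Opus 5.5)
The plan is to rewrite the output power of $\hat\w$ as a Pythagorean decomposition in the $\Rb$-inner product and then substitute into the loss (\ref{eq:loss}). Throughout, write $\Rb=\Rb(\bm\vartheta)$, $\wopt=\psi(\bm\vartheta)$, $P_\star=P_\star(\bm\vartheta)$, and $\mathbf e=\hat\w-\wopt$.

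First, because $\hat\w\herm\ba_0=1$ and $\wopt\herm\ba_0=1$ (the latter read off from (\ref{eq:general-mvdr})), the error satisfies the constraint-preserving condition $\mathbf e\herm\ba_0=0$. Next, I would compute the cross term in the $\Rb$-inner product. Since $\Rb\wopt=P_\star\ba_0$, we get
\begin{equation*}
\langle\mathbf e,\wopt\rangle_{\Rb}=\mathbf e\herm\Rb\wopt=P_\star\,\mathbf e\herm\ba_0=0 .
\end{equation*}
Hence the $\Rb$-orthogonality gives
\begin{equation*}
\hat\w\herm\Rb\hat\w=\wopt\herm\Rb\wopt+\|\mathbf e\|_{\Rb}^2 .
\end{equation*}
A direct evaluation gives $\wopt\herm\Rb\wopt=\ba_0\herm\Rb\inv\ba_0/(\ba_0\herm\Rb\inv\ba_0)^2=P_\star$. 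Therefore $\hat\w\herm\Rb\hat\w=P_\star+E$, which also justifies calling $E$ the excess output power.

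Finally, I would substitute into (\ref{eq:loss}). With $|\hat\w\herm\ba_0|^2=1$ and $\ba_0\herm\Rb\inv\ba_0=1/P_\star$, the ratio term equals $P_\star/(P_\star+E)$. This gives
\begin{equation*}
L(\hat\w;\bm\vartheta)=1-\frac{P_\star}{P_\star+E}=\frac{E}{P_\star+E}.
\end{equation*}
The convention $L(\bm0;\bm\vartheta)=1$ plays no role here, because $\hat\w\ne\bm0$ is forced by the unit look gain, and $P_\star+E\ge P_\star>0$ since $\Rb\succ0$. The argument has no real obstacle. The only step needing care is the orthogonality $\langle\mathbf e,\wopt\rangle_{\Rb}=0$, which hinges on the stationarity relation $\Rb\wopt\propto\ba_0$; that relation is the algebraic content of MVDR optimality.
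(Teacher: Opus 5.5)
Your proposal is correct and follows essentially the same route as the paper: both use $\hat\w\herm\ba_0=\wopt\herm\ba_0=1$ and $\Rb\wopt=P_\star\ba_0$ to eliminate the cross term, which gives $\hat\w\herm\Rb\hat\w=P_\star+E$, and then substitute into (\ref{eq:loss}). The paper also discusses scale normalization and the zero-look-gain case (extension by $E=+\infty$). Those cases fall outside the hypothesis $\hat\w\herm\ba_0=1$, so your argument fully covers the statement as posed.
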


The proof is given in Appendix~\ref{app:excess}. Since the loss is invariant to a nonzero scaling of the weight, every weight with nonzero look gain can be normalized to satisfy the distortionless constraint. For a small error, (\ref{eq:excess}) gives $L=E/P_\star+o(E/P_\star)$: to leading order, the loss is the squared error in the optimum weight under the metric $\Rb(\bm\vartheta)/P_\star(\bm\vartheta)$.

We next describe how the training data and the beamforming task enter the loss. At a regular interior point, let $\Jb_{\vartheta}\in\R^{d\times d}$ denote the Fisher information in one snapshot. Define the task metric $\mathbf W_{\vartheta}\in\Cx^{N\times N}$, the Jacobian $\mathbf G_{\vartheta}\in\Cx^{N\times d}$ of the MVDR map, and the loss-curvature matrix $\Mb_{\vartheta}\in\R^{d\times d}$ as
\begin{equation}\label{eq:general-WM}
\begin{aligned}
\mathbf W_{\vartheta}&=\frac{\Rb(\bm\vartheta)}{P_\star(\bm\vartheta)},
&\mathbf G_{\vartheta}&=D\psi(\bm\vartheta),\\
\Mb_{\vartheta}&=\operatorname{Re}\!\left(
\mathbf G_{\vartheta}\herm\mathbf W_{\vartheta}\mathbf G_{\vartheta}
\right).
\end{aligned}
\end{equation}
Here $\Jb_{\vartheta}^{-1}$ describes the remaining local uncertainty of the state, while $\Mb_{\vartheta}$ weights each state direction by its effect on the output SINR; their trace pairing is
\begin{equation}\label{eq:general-C}
C_{\vartheta}=\tr(\Mb_{\vartheta}\Jb_{\vartheta}^{-1}).
\end{equation}
The value in (\ref{eq:general-C}) is unchanged by a smooth reparametrization of the state.

Some state coordinates may change the data distribution without changing the optimum weight; a Schur complement removes their effect from the Fisher information. Write $\bm\vartheta=(\bm\eta,\bm\nu)$, where $\bm\eta$ contains the coordinates that affect the MVDR weight and $\bm\nu$ contains the nuisance coordinates. If $D_{\nu}\psi=\bm0$, define
\begin{equation}\label{eq:general-Jeff}
\Jb_{\rm eff}
=\Jb_{\eta\eta}-\Jb_{\eta\nu}\Jb_{\nu\nu}^{-1}\Jb_{\nu\eta}.
\end{equation}
Then the block inverse identity gives
\begin{equation}\label{eq:general-profile}
C_{\vartheta}=\tr(\Mb_{\eta}\Jb_{\rm eff}^{-1}),
\qquad
\Mb_{\eta}=\operatorname{Re}\!\left((D_{\eta}\psi)\herm\mathbf W_{\vartheta}D_{\eta}\psi\right).
\end{equation}
Thus the nuisance coordinates are removed before the coefficient is evaluated. We use (\ref{eq:general-profile}) for the finite-source model.

At a fixed state, the infimum over all measurable rules is trivially zero, since one can choose the constant rule $\hat\w_n\equiv\psi(\bm\vartheta)$; a nontrivial algorithm-independent bound must therefore test the same rule over nearby states. Let $\bm h\in\R^d$ be a local perturbation and let $r>0$ be the neighborhood radius, and write $\bm\vartheta_{n,\bm h}=\bm\vartheta+\bm h/\sqrt n$ for the local alternatives, which remain distinguishable at the first-order scale. We define the local risk as
\begin{equation}\label{eq:local-risk}
\mathcal R_{n,r}(\bm\vartheta)
=\inf_{\hat\w_n}\sup_{\|\bm h\|\le r}
\E_{\bm\vartheta_{n,\bm h}}
\big[L(\hat\w_n;\bm\vartheta_{n,\bm h})\big],
\end{equation}
where the infimum is over all measurable beamforming rules and the supremum is over $\|\bm h\|\le r$. The expectation $\E_{\bm\vartheta_{n,\bm h}}$ is taken under $P_{\bm\vartheta_{n,\bm h}}^{\otimes n}$. Throughout the first-order analysis, the model dimension is fixed while the number of snapshots tends to infinity. In the ULA model, $N$ and $K$ are fixed. The next section derives the local minimax converse for this risk.

\section{A general local asymptotic minimax converse}\label{sec:general}

Fix an interior point $\bm\vartheta\in\mathcal V$, and assume that the one-snapshot model is quadratic-mean differentiable at $\bm\vartheta$, that $\Jb_{\vartheta}$ is nonsingular, and that $\Rb(\cdot)$ is continuously differentiable and positive definite nearby. These conditions imply local asymptotic normality of the $n$-snapshot experiment and differentiability of the MVDR map in (\ref{eq:general-mvdr}). Informally, quadratic-mean differentiability is a smoothness requirement on the square root of the one-snapshot density, and local asymptotic normality states that, at the $1/\sqrt n$ scale, the $n$-snapshot experiment is approximated by a Gaussian shift experiment with information matrix $\Jb_{\vartheta}$.

\begin{lemma}[Transfer to the exact SINR loss]\label{lem:transfer}
Let $\hat\w_n$ be any measurable beamforming rule. Normalize its output to unit look gain whenever the look gain is nonzero, and use $\ba_0/(\ba_0\herm\ba_0)$ on the zero-look-gain event. Denote the resulting action by $T_n$, and let
\begin{equation}\label{eq:transfer-loss}
\ell_n=\left\|T_n-\psi(\bm\vartheta_{n,\bm h})\right\|_{\Rb(\bm\vartheta_{n,\bm h})/P_\star(\bm\vartheta_{n,\bm h})}^2.
\end{equation}
For every truncation level $A>0$, the following inequality holds pointwise on the sample space:
\begin{equation}\label{eq:main-transfer}
nL(\hat\w_n;\bm\vartheta_{n,\bm h})
\ge
\frac{(n\ell_n)\wedge A}{1+A/n}.
\end{equation}
\end{lemma}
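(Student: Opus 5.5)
We reduce the claim to the exact excess-loss identity of Lemma~\ref{lem:excess} and then handle the bounded map $x\mapsto x/(1+x)$ by truncation. Throughout, fix the sample point and write $\bm\vartheta'=\bm\vartheta_{n,\bm h}$, $\Rb'=\Rb(\bm\vartheta')$, $P'=P_\star(\bm\vartheta')$, $\psi'=\psi(\bm\vartheta')$. Everything is deterministic once the sample is fixed, so the inequality is purely algebraic.

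\textbf{Step 1: reduce to a unit-look-gain action.} We split into two events.

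On the event $\hat\w_n\herm\ba_0\ne0$, the loss (\ref{eq:loss}) is invariant under nonzero scaling. Hence $L(\hat\w_n;\bm\vartheta')=L(T_n;\bm\vartheta')$ with $T_n\herm\ba_0=1$.

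On the zero-look-gain event (including $\hat\w_n=\bm0$), $L(\hat\w_n;\bm\vartheta')=1$ by convention. The substituted action $T_n=\ba_0/(\ba_0\herm\ba_0)$ is distortionless, so $L(T_n;\bm\vartheta')\le 1=L(\hat\w_n;\bm\vartheta')$.

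In both cases $L(\hat\w_n;\bm\vartheta')\ge L(T_n;\bm\vartheta')$. It therefore suffices to bound $nL(T_n;\bm\vartheta')$ from below.

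\textbf{Step 2: apply the identity.} Since $T_n\herm\ba_0=1$, Lemma~\ref{lem:excess} gives
\[
L(T_n;\bm\vartheta')=\frac{E}{P'+E},\qquad E=\|T_n-\psi'\|_{\Rb'}^2 .
\]
Dividing numerator and denominator by $P'$, and noting from (\ref{eq:transfer-loss}) that $\ell_n=E/P'$, we get
\[
L(T_n;\bm\vartheta')=\frac{\ell_n}{1+\ell_n},
\qquad\text{so}\qquad
nL(T_n;\bm\vartheta')=\frac{n\ell_n}{1+(n\ell_n)/n}.
\]

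\textbf{Step 3: truncate.} Set $x=n\ell_n\ge0$ and $g(x)=x/(1+x/n)$. The function $g$ is nondecreasing on $[0,\infty)$, since $g(x)=n\,x/(n+x)$.

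If $x\le A$, then $1+x/n\le 1+A/n$, hence
\[
g(x)\ge \frac{x}{1+A/n}=\frac{x\wedge A}{1+A/n}.
\]
If $x>A$, then by monotonicity
\[
g(x)\ge g(A)=\frac{A}{1+A/n}=\frac{x\wedge A}{1+A/n}.
\]

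Combining this with Step 1 yields (\ref{eq:main-transfer}) pointwise.

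The only delicate point is the zero-look-gain substitution in Step 1. We do not need $T_n$ to be close to $\psi'$ there; we only need that the substituted action cannot have larger loss than $1$, which holds because every loss value lies in $[0,1]$. The rest is elementary.
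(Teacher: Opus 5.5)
Your proof is correct and follows the paper's argument: you apply the exact excess-loss identity to get $nL=n\ell_n/(1+\ell_n)$, then split into the cases $n\ell_n\le A$ and $n\ell_n>A$ using monotonicity of $x\mapsto nx/(n+x)$. The only cosmetic difference is on the zero-look-gain event, where you use $L(\hat\w_n)=1\ge L(T_n)$ and rerun the same bound, while the paper simply notes that $nL=n\ge A/(1+A/n)$; both are valid.
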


The proof is given in Appendix~\ref{app:general}. Lemma~\ref{lem:transfer} converts a minimax bound for the truncated quadratic loss into one for the exact bounded output-SINR loss, and it does so pointwise on the sample space, covering even rules whose output has zero look gain.

\begin{theorem}[General local asymptotic minimax SINR converse]\label{thm:general}
Under the preceding conditions,
\begin{equation}\label{eq:general-converse}
\lim_{r\to\infty}\liminf_{n\to\infty}
\inf_{\hat\w_n}\sup_{\|\bm h\|\le r}
 n\,\E_{\bm\vartheta_{n,\bm h}}
\big[L(\hat\w_n;\bm\vartheta_{n,\bm h})\big]
\ge C_{\vartheta},
\end{equation}
where $C_{\vartheta}$ is defined in (\ref{eq:general-C}).
\end{theorem}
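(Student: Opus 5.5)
The plan is to reduce the exact SINR loss to a truncated quadratic loss on the MVDR weight via Lemma~\ref{lem:transfer}, and then apply the H\'ajek--Le Cam local asymptotic minimax theorem for the estimation of the differentiable functional $\psi(\bm\vartheta)$. Fix $A>0$. Taking expectations in (\ref{eq:main-transfer}) gives, for every rule and every $\bm h$,
\[
n\,\E_{\bm\vartheta_{n,\bm h}}\big[L(\hat\w_n;\bm\vartheta_{n,\bm h})\big]\ge \frac{1}{1+A/n}\,\E_{\bm\vartheta_{n,\bm h}}\big[(n\ell_n)\wedge A\big].
\]
Hence it suffices to lower-bound the local minimax risk of the truncated quadratic loss $(n\ell_n)\wedge A$, with $T_n$ viewed as an estimator of $\psi(\bm\vartheta_{n,\bm h})$. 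The factor $1/(1+A/n)$ tends to $1$ as $n\to\infty$ for fixed $A$.

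The next step removes the dependence of the metric on $\bm h$. We have $n\ell_n=\|\sqrt n(T_n-\psi(\bm\vartheta_{n,\bm h}))\|^2_{\mathbf W_{\vartheta_{n,h}}}$, and $\mathbf W$ is continuous at $\bm\vartheta$. So for any $\delta>0$ and $n$ large, uniformly in $\|\bm h\|\le r$, we get $\mathbf W_{\vartheta_{n,h}}\succeq(1-\delta)\mathbf W_\vartheta$. Therefore $(n\ell_n)\wedge A\ge (1-\delta)\big(\|\sqrt n(T_n-\psi(\bm\vartheta_{n,\bm h}))\|^2_{\mathbf W_\vartheta}\wedge A\big)$. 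Identify $\Cx^N$ with $\R^{2N}$. The map $\mathbf u\mapsto\|\mathbf u\|^2_{\mathbf W_\vartheta}\wedge A$ is then a bowl-shaped, bounded, lower-semicontinuous loss $\ell_A$ on $\R^{2N}$, and $\psi$ is differentiable at $\bm\vartheta$ with real Jacobian $\mathbf G_\vartheta$ by the $C^1$ assumption on $\Rb$. Quadratic-mean differentiability with nonsingular $\Jb_\vartheta$ gives LAN. The local asymptotic minimax theorem (van der Vaart, Thm.~8.11) then yields
\[
\lim_{r\to\infty}\liminf_{n\to\infty}\inf_{T_n}\sup_{\|\bm h\|\le r}\E_{\bm\vartheta_{n,\bm h}}\,\ell_A\big(\sqrt n(T_n-\psi(\bm\vartheta_{n,\bm h}))\big)\ge \E\,\ell_A(\mathbf Z),
\]
where $\mathbf Z$ is the real Gaussian $\mathcal N(\bm 0,\mathbf G_\vartheta\Jb_\vartheta^{-1}\mathbf G_\vartheta^{\top})$ in real coordinates. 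One point needs checking: the infimum over all measurable beamforming rules $\hat\w_n$ is at least the infimum over all estimators $T_n$, since each rule induces some $T_n$. So the bound transfers to the original risk.

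It remains to let $A\to\infty$ and $\delta\to0$. By monotone convergence, $\E[\|\mathbf Z\|^2_{\mathbf W_\vartheta}\wedge A]\uparrow\E\|\mathbf Z\|^2_{\mathbf W_\vartheta}$. Writing $\mathbf Z=\mathbf G_\vartheta\mathbf g$ with $\mathbf g\sim\mathcal N(\bm0,\Jb_\vartheta^{-1})$ in $\R^d$, we get $\|\mathbf G_\vartheta\mathbf g\|^2_{\mathbf W_\vartheta}=\mathbf g^\top\operatorname{Re}(\mathbf G_\vartheta\herm\mathbf W_\vartheta\mathbf G_\vartheta)\mathbf g=\mathbf g^\top\Mb_\vartheta\mathbf g$, because $\mathbf g$ is real. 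Its mean is $\tr(\Mb_\vartheta\Jb_\vartheta^{-1})=C_\vartheta$. Since the left side of (\ref{eq:general-converse}) does not depend on $A$ or $\delta$, taking the supremum over these gives the claim.

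The main obstacle is bookkeeping in the order of limits. The truncation is essential, because the theorem needs a bounded loss and $n\ell_n$ can be unbounded. The metric continuity must hold uniformly over $\|\bm h\|\le r$ before $n\to\infty$. I would also verify that the version of the H\'ajek--Le Cam theorem used allows the $\lim_r\liminf_n$ ordering with the supremum inside and covers arbitrary (possibly randomized) measurable estimators. If needed, I would restate it through the asymptotic representation theorem along subsequences.
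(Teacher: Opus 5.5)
Your proposal follows the paper's argument closely. It uses the transfer inequality of Lemma~\ref{lem:transfer} and a comparison $\mathbf W_{\vartheta_{n,\bm h}}\succeq(1-\delta)\mathbf W_{\vartheta}$ that is uniform on the $r$-ball. It then applies the H\'ajek--Le Cam bound to the truncated $\mathbf W_\vartheta$-quadratic loss of the normalized action, and lets $A\to\infty$ using $\E\|\mathbf G_\vartheta\mathbf g\|^2_{\mathbf W_\vartheta}=\tr(\Mb_\vartheta\Jb_\vartheta^{-1})$.

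The step you flagged is a real missing piece, not just bookkeeping. \cite[Thm.~8.11]{vanderVaart1998} fixes one estimator sequence and takes a supremum over finite sets $I$. So your display with $\inf_{T_n}$ inside $\liminf_n$ over expanding balls does not follow from it verbatim.

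The paper fills this with Lemma~\ref{lem:ball}, a short diagonal argument. Suppose the ball risk had $\liminf_n$ below the Gaussian bound minus $3\epsilon$ for every $r$. Then choose $n_m\uparrow\infty$ and near-minimax actions on the ball of radius $m$, and complete them to a full sequence. This contradicts Theorem~8.11, because any finite $I$ eventually lies inside those balls. With that lemma inserted, the rest of your proof goes through as written.
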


Theorem~\ref{thm:general} applies to every measurable beamforming rule, biased and irregular rules included (no unbiasedness, consistency, or asymptotic normality is assumed). The training model enters through $\Jb_{\vartheta}$ and the beamforming task through $\Mb_{\vartheta}$; consequently, two training models with the same covariance map can have different beamforming limits if their observations carry different Fisher information. The proof, given in Appendix~\ref{app:general}, applies the H\'ajek--Le Cam theorem to the truncated quadratic loss in Lemma~\ref{lem:transfer} and then uses (\ref{eq:main-transfer}) to recover the exact output-SINR loss. Sections~\ref{sec:ula} and~\ref{sec:full} establish matching achievability for the finite-source ULA and unrestricted Gaussian models.

\section{The finite-source ULA limit}\label{sec:ula}

We now apply the general converse to a finite-source ULA and construct an attaining beamformer. The ULA has $N$ sensors at half-wavelength spacing. Its unit-norm steering vector at spatial frequency $\phi$ is $\ba(\phi)=N^{-1/2}(e^{i2\pi m\phi})_m\in\Cx^N$, where $m$ ranges over $\{-(N-1)/2,\ldots,(N-1)/2\}$. Half-integer indices are used when $N$ is even. The look direction is $\theta_0$, and $\ba_0=\ba(\theta_0)$. All source directions lie in a compact interval $\Phi_{\rm op}\subset(\theta_0-1/2,\theta_0+1/2)$.

For $1\le\ell\le n$, the signal-free training snapshots satisfy
\begin{equation}\label{eq:R}
\x_\ell\sim\mathcal{CN}(\bm0,\Rb(\tha)),
\qquad
\Rb(\tha)=\sigma^2\Ib+\sum_{j=1}^Kp_j\ba(\phi_j)\ba(\phi_j)\herm,
\end{equation}
where $K\ge1$ is the known and fixed number of interferers. The quantities $\phi_j$ and $p_j>0$ are the spatial frequency and power of source $j$, and $\sigma^2>0$ is the noise variance. The finite-source scene is $\tha=(\phi_1,\ldots,\phi_K,p_1,\ldots,p_K,\sigma^2)$. We use the ordering $\phi_1<\cdots<\phi_K$ and define
\begin{equation*}
\ba_j=\ba(\phi_j),
\qquad
\dot\ba_j=\frac{\partial\ba(\phi_j)}{\partial\phi_j},
\qquad
\q_j=(\Ib-\ba_j\ba_j\herm)\dot\ba_j.
\end{equation*}
For the centered ULA, $\ba_j\herm\dot\ba_j=0$, so $\q_j=\dot\ba_j$. Let $\Theta_{\rm reg}$ denote the scenes with distinct source directions, positive source powers and noise variance, and $2K+1\le N$. Appendix~\ref{app:reg} proves pointwise regularity on $\Theta_{\rm reg}$ and the compact-set bounds used below.

The MVDR weight is unchanged when all powers are multiplied by the same positive number. We therefore use the relative log interference-to-noise ratios (INRs) $\beta_j=\log(p_j/\sigma^2)$ and the global log scale $\gamma=\log\sigma^2$. Let $\bm\phi=(\phi_1,\ldots,\phi_K)$ and $\bm\beta=(\beta_1,\ldots,\beta_K)$. In the coordinates $\tha=(\bm\phi,\bm\beta,\gamma)$, the weight depends only on $\thL=(\bm\phi,\bm\beta)$.

For the attaining rule, fix an operational interval $\Phi_{\rm op}=[\phi_-,\phi_+]$, power bounds $0<p_-<p_+$, noise bounds $0<\sigma_-^2<\sigma_+^2$, and a source separation $\Delta>0$. These bounds are fixed before sampling. In the $(\bm\phi,\bm\beta,\gamma)$ chart, define the compact parameter set $\Theta$ by
\begin{equation}\label{eq:Theta-chart}
\begin{gathered}
\phi_-\le\phi_1,\qquad \phi_K\le\phi_+,\\
\phi_{j+1}-\phi_j\ge\Delta\quad(1\le j<K),
\qquad
\phi_K-\phi_1\le1-\Delta,\\
\log\sigma_-^2\le\gamma\le\log\sigma_+^2,
\qquad
\log p_-\le\beta_j+\gamma\le\log p_+.
\end{gathered}
\end{equation}
For $K=1$, the two gap constraints in (\ref{eq:Theta-chart}) are vacuous. We choose the bounds so that $\Theta$ has nonempty interior. It is a compact convex polytope contained in the regular ordered chart, and every interior regular scene can be contained in a set of this form. With $N$ and $K$ fixed, Lemma~\ref{lem:compactreg} gives bounded covariance eigenvalues, uniformly nonsingular Fisher information, and score moments on $\Theta$. No quantitative lower bound on $N\Delta$ is required for this construction.

Since $\partial_\gamma\psi=\bm0$, the efficient Fisher information (\ref{eq:general-Jeff}) for $\thL$ is
\begin{equation}\label{eq:Jeff}
\Jb_L
=\Jb_{(\phi\beta)}
-\Jb_{(\phi\beta)\gamma}\Jb_{\gamma\gamma}^{-1}\Jb_{\gamma(\phi\beta)},
\end{equation}
where $\Jb_{(\phi\beta)}$, $\Jb_{(\phi\beta)\gamma}$, and $\Jb_{\gamma\gamma}$ are, respectively, the $\thL$-by-$\thL$, $\thL$-by-$\gamma$, and $\gamma$-by-$\gamma$ blocks of the full Fisher information. For real coordinates indexed by $a$ and $b$, the per-snapshot Fisher information has the Slepian--Bangs form
$\Jb_{ab}=\tr(\Rb^{-1}\partial_a\Rb\,\Rb^{-1}\partial_b\Rb)$. The loss-curvature matrix $\Mb\in\R^{2K\times2K}$, specializing (\ref{eq:general-WM}), is
\begin{equation}\label{eq:Mdef}
\Mb_{ab}
=\frac{\operatorname{Re}\langle\partial_a\wopt,\partial_b\wopt\rangle_{\Rb}}{P_\star},
\end{equation}
where $a$ and $b$ now index the $2K$ coordinates of $\thL$. Define
\begin{equation*}
C_\theta=\tr(\Mb\Jb_L^{-1})=\sum_{s=1}^{2K}\lambda_s,
\qquad
\lambda_s=\lambda_s\!\left(\Jb_L^{-1/2}\Mb\Jb_L^{-1/2}\right).
\end{equation*}
The eigenvalues $\lambda_s$ measure how strongly the locally estimable parameter directions affect the output-SINR loss.

We attain the converse with a split one-step construction. To this end, let $\omega\in(1/2,1)$ be the split exponent. A pilot block of size $m_n=\min\{\lceil n^\omega\rceil,n-1\}$ gives an initial structured estimate. The remaining $n_2=n-m_n$ snapshots provide one Fisher-scoring correction. With $\bm\vartheta\in\Theta$ as the trial scene, the pilot is the least-squares fit
\begin{equation}\label{eq:pilotdef}
\tilde\tha_n\in\operatorname*{arg\,min}_{\bm\vartheta\in\Theta}
\|\hat\Rb_1-\Rb(\bm\vartheta)\|_F^2,
\qquad
\hat\Rb_1=\frac1{m_n}\sum_{\ell\le m_n}\x_\ell\x_\ell\herm.
\end{equation}
If the minimizer is not unique, $\tilde\tha_n$ denotes the lexicographically least one. The update block performs one Fisher-scoring step, and the resulting scene estimate is inserted into the MVDR formula:
\begin{equation}\label{eq:onestepdef}
\hat\tha_n=\Pi_\Theta\!\left[\tilde\tha_n+\Jb(\tilde\tha_n)^{-1}S_n(\tilde\tha_n)\right],
\qquad
\hat\w_n=\wopt\!\left(\Rb(\hat\tha_n)\right).
\end{equation}
For a trial parameter $t\in\Theta$ and a real coordinate indexed by $a$, the one-snapshot score coordinate is
\begin{equation*}
\dot\ell_a(\x;t)
=\x\herm\Rb(t)^{-1}\partial_a\Rb(t)\Rb(t)^{-1}\x
-\tr\!\left(\Rb(t)^{-1}\partial_a\Rb(t)\right).
\end{equation*}
Let $\dot\ell(\x;t)$ collect all score coordinates. Then $S_n(t)=n_2^{-1}\sum_{\ell>m_n}\dot\ell(\x_\ell;t)$ is the average update-block score, and $\Pi_\Theta$ is the Euclidean projection onto $\Theta$. The split makes the pilot independent of the score correction. The condition $\omega>1/2$ gives $m_n^{-1/2}=o(n^{-1/4})$, while $\omega<1$ gives $n_2/n\to1$. Appendix~\ref{app:onestep} proves that the estimator is measurable, that the pilot localizes with all moments bounded, and that the one-step expansion holds.

\begin{theorem}[Finite-source ULA limit]\label{thm:main}
Fix an interior scene $\tha\in\Theta_{\rm reg}$ and consider the local alternatives $\tha_{n,\bm h}=\tha+\bm h/\sqrt n$. Then
\begin{equation}\label{eq:structured-converse}
\lim_{r\to\infty}\liminf_{n\to\infty}
\inf_{\hat\w_n}\sup_{\|\bm h\|\le r}
 n\,\E_{\tha_{n,\bm h}}[L(\hat\w_n;\tha_{n,\bm h})]\ge C_\theta.
\end{equation}
If $\tha$ is an interior point of a fixed compact set $\Theta$ in (\ref{eq:Theta-chart}), the split one-step beamformer in (\ref{eq:pilotdef}) and (\ref{eq:onestepdef}) satisfies
\begin{equation}\label{eq:structured-ach}
\sup_{\|\bm h\|\le H}
\left|
 n\,\E_{\tha_{n,\bm h}}[L(\hat\w_n;\tha_{n,\bm h})]-C_\theta
\right|\longrightarrow0
\end{equation}
for every fixed radius $H>0$.
\end{theorem}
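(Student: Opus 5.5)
For the converse (\ref{eq:structured-converse}) I will apply Theorem~\ref{thm:general} directly in the chart $\tha=(\bm\phi,\bm\beta,\gamma)$. The chart is open near an interior regular scene, and Appendix~\ref{app:reg} supplies the hypotheses: quadratic-mean differentiability of the complex-Gaussian family with smooth $\Rb(\tha)$, a nonsingular Slepian--Bangs information $\Jb$, and positive definiteness of $\Rb$. The theorem then yields the bound $\tr(\Mb_\vartheta\Jb_\vartheta^{-1})$. Because $\partial_\gamma\psi=\bm0$, the block-inverse identity (\ref{eq:general-profile}) reduces this to $\tr(\Mb\Jb_L^{-1})=C_\theta$. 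One point needs care: the local alternatives in (\ref{eq:general-converse}) live in the chart coordinates, while the theorem is stated for $\tha+\bm h/\sqrt n$. I will either take the chart itself as the parametrization or invoke the reparametrization invariance of $C_\vartheta$ noted after (\ref{eq:general-C}), together with the fact that a diffeomorphism maps $\bm h$-balls into comparable balls, so the $r\to\infty$ limit is unaffected.

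For achievability I will first establish a uniform local expansion of the estimator under $\tha_{n,\bm h}$ with $\|\bm h\|\le H$. Since the pilot uses only $m_n$ snapshots, identifiability of $\tha$ from $\Rb(\tha)$ on the compact set $\Theta$ and smoothness give a Lipschitz-type inverse bound $\|\tilde\tha_n-\tha_{n,\bm h}\|\lesssim\|\hat\Rb_1-\Rb(\tha_{n,\bm h})\|_F$ near the truth. Combined with the global identifiability margin on $\Theta$ away from the truth, this gives $\E\|\tilde\tha_n-\tha_{n,\bm h}\|^q\lesssim m_n^{-q/2}$ uniformly for every $q$. Conditioning on the pilot and Taylor expanding the score gives
\[
\tilde\tha_n+\Jb(\tilde\tha_n)^{-1}S_n(\tilde\tha_n)-\tha_{n,\bm h}=\Jb(\tha_{n,\bm h})^{-1}S_n(\tha_{n,\bm h})+R_n .
\]
Here $\E\|R_n\|^q$ is bounded by terms of order $\|\tilde\tha_n-\tha_{n,\bm h}\|^2$ plus $\|\tilde\tha_n-\tha_{n,\bm h}\|\,\|S_n-\E S_n\|$. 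These are $O(m_n^{-1})+O(m_n^{-1/2}n^{-1/2})=o(n^{-1/2})$ because $\omega>1/2$. Since $\tha$ is interior, the projection $\Pi_\Theta$ is inactive with probability tending to one, and in any case it is nonexpansive toward interior points. It follows that $\sqrt n(\hat\tha_n-\tha_{n,\bm h})\Rightarrow\mathcal N(\bm0,\Jb^{-1})$ uniformly in $\bm h$, with all moments bounded uniformly, using score moments from Lemma~\ref{lem:compactreg} and $n_2/n\to1$.

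I will then transfer this to the loss through Lemma~\ref{lem:excess}. The weight $\hat\w_n=\psi(\hat\tha_n)$ has unit look gain, so $nL=nE/(P_\star+E)$ with $E=\|\psi(\hat\tha_n)-\psi(\tha_{n,\bm h})\|_{\Rb}^2$. A delta-method expansion, with $\psi$ smooth on $\Theta$ and uniformly Lipschitz, gives
\[
nL=\bm Z_n^\top\Mb_{\tha_{n,\bm h}}\bm Z_n+o_P(1),\qquad \bm Z_n=\sqrt n(\hat\tha_n-\tha_{n,\bm h}),
\]
where only the $\thL$ block of $\Mb$ is nonzero. The limit has mean $\tr(\Mb\Jb^{-1})=\tr(\Mb\Jb_L^{-1})=C_\theta$. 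Uniform integrability follows from $nL\le nE/P_\star\lesssim\|\bm Z_n\|^2$ together with the bounded higher moments, so the means converge. Continuity of $\Mb$ and $\Jb$ at $\tha$ handles $\tha_{n,\bm h}\to\tha$ uniformly over $\|\bm h\|\le H$. To get the uniform statement I will argue by subsequences: take $\bm h_n$ attaining the supremum up to $1/n$ and show convergence along every such sequence.

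The main obstacle is the pilot moment bound. Convergence in probability is not enough here. I need tail bounds on $\|\tilde\tha_n-\tha\|$ valid over the whole of $\Theta$, including near-boundary minimizers and ties, and these must come from a quantitative identifiability inequality $\|\Rb(\bm\vartheta)-\Rb(\tha')\|_F\ge c\,\|\bm\vartheta-\tha'\|$ on $\Theta\times\Theta$. I will try to derive this from injectivity, which uses $2K+1\le N$ and the Vandermonde structure, together with a nonsingular Jacobian and compactness. Gaussian concentration of $\hat\Rb_1$ would then finish the argument.
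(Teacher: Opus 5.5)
Your proposal is correct and follows the paper's proof. The converse is Theorem~\ref{thm:general} plus the Schur-complement reduction (\ref{eq:general-profile}). Achievability uses the same three steps: the least-squares pilot moment bound (a global identification gap combined with a local inverse bound), the one-step expansion on the independent update block with the projection controlled by exit probability and moments, and the transfer through Lemma~\ref{lem:excess} using $nL_n\lesssim\|\sqrt n(\hat\tha_n-\tha_{n,\bm h})\|^2$.

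The differences are cosmetic:
\begin{itemize}
\item You obtain the mean by weak convergence, uniform integrability, and near-maximizing sequences $\bm h_n$. The paper instead computes the expectation of the quadratic form directly from the $L^2$ expansion.
\item Your global bound $\|\Rb(\bm\vartheta)-\Rb(\bm\vartheta')\|_F\ge c\|\bm\vartheta-\bm\vartheta'\|$ on $\Theta\times\Theta$ replaces the paper's event split. The bound does hold, by compactness, convexity, injectivity, and Jacobian injectivity on all of $\Theta$.
\item Your remainder list should also include the empirical-Hessian fluctuation term $(\mathbf K_n+\Jb_n)d_n$, where $\mathbf K_n$ is the update-block average of the score derivative at $\tha_{n,\bm h}$ and $d_n=\tilde\tha_n-\tha_{n,\bm h}$. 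It has the same order $m_n^{-1/2}n^{-1/2}$ as the score cross term, so the conclusion is unaffected.
\end{itemize}
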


Theorem~\ref{thm:main} is proved in Appendix~\ref{app:thm}. Let $\mathcal R_{n,r}^{\rm ULA}(\tha)$ denote the risk in (\ref{eq:local-risk}) for the finite-source model. At every interior point of $\Theta$, Theorem~\ref{thm:main} gives
\begin{equation*}
\begin{split}
\lim_{r\to\infty}\liminf_{n\to\infty}
 n\,\mathcal R_{n,r}^{\rm ULA}(\tha)
&=\lim_{r\to\infty}\limsup_{n\to\infty}
 n\,\mathcal R_{n,r}^{\rm ULA}(\tha)\\
&=C_\theta.
\end{split}
\end{equation*}
Indeed, the converse bounds the first limit from below, while the attaining rule bounds the second from above on every fixed local ball. Thus $C_\theta$ is the sharp first-order local minimax coefficient at every regular interior scene covered by the compact construction.

\begin{proposition}[Finite-source dimension bound]\label{prop:visbound}
At every interior scene $\tha\in\Theta_{\rm reg}$, the loss curvature and the efficient Fisher information satisfy
\begin{equation}\label{eq:matrixbound}
\bm0\preceq\Mb\preceq\frac12\Jb_L.
\end{equation}
Consequently,
\begin{equation}\label{eq:trbound}
0\le\lambda_s\le\frac12,
\qquad
0<C_\theta\le K.
\end{equation}
\end{proposition}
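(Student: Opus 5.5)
The plan is to reduce both inequalities in (\ref{eq:matrixbound}) to a pointwise comparison of quadratic forms in a single perturbation direction $\bm u$ of the full scene $\tha=(\bm\phi,\bm\beta,\gamma)$. The Schur complement will then be handled at the end. Write $d\Rb=\sum_a u_a\partial_a\Rb$ and whiten it as $\Xb=\Rb^{-1/2}d\Rb\,\Rb^{-1/2}$, which is Hermitian. By the Slepian--Bangs form, $\bm u^\top\Jb\bm u=\tr(\Xb^2)=\|\Xb\|_F^2$.

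For the loss curvature, I will differentiate (\ref{eq:general-mvdr}). Using $d(\Rb^{-1})=-\Rb^{-1}d\Rb\,\Rb^{-1}$ and the quotient rule gives
\[
d\wopt=-(\Ib-\wopt\ba_0\herm)\Rb^{-1}d\Rb\,\wopt.
\]
Set $\mathbf b=\Rb^{-1/2}\ba_0$, so that $\|\mathbf b\|^2=1/P_\star$, and set $\mathbf v=\sqrt{P_\star}\,\mathbf b$, a unit vector. Then $\Rb^{1/2}\wopt=\sqrt{P_\star}\,\mathbf v$ and $\Rb^{1/2}\wopt\ba_0\herm\Rb^{-1/2}=\mathbf v\mathbf v\herm$. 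This gives the key identity
\[
\frac{\|d\wopt\|_{\Rb}^2}{P_\star}=\big\|\mathbf P_{\mathbf v}^{\perp}\Xb\mathbf v\big\|^2 ,
\]
so the quadratic form of the full-coordinate curvature matrix $\Mb_{\vartheta}$ in direction $\bm u$ is $\|\mathbf P_{\mathbf v}^{\perp}\Xb\mathbf v\|^2$. Now decompose $\Xb$ in an orthonormal basis whose first vector is $\mathbf v$. Since $\Xb$ is Hermitian, the off-diagonal block $\mathbf P_{\mathbf v}^{\perp}\Xb\mathbf v$ appears twice in $\|\Xb\|_F^2$, once as a column and once as a row. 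Hence
\[
\|\Xb\|_F^2\ge 2\|\mathbf P_{\mathbf v}^{\perp}\Xb\mathbf v\|^2,
\]
that is, $\bm0\preceq\Mb_{\vartheta}\preceq\tfrac12\Jb$ on the full $(2K+1)$-dimensional scene.

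To pass to $\thL$, fix $\bm u_L\in\R^{2K}$ and minimize $\Jb$ over the $\gamma$-component $t$. The minimum of $(\bm u_L,t)^\top\Jb(\bm u_L,t)$ over $t$ equals $\bm u_L^\top\Jb_L\bm u_L$ by (\ref{eq:Jeff}). Because $\partial_\gamma\wopt=\bm0$, the curvature term does not depend on $t$ and equals $\bm u_L^\top\Mb\bm u_L$. Taking the minimizing $t$ therefore yields $\Mb\preceq\frac12\Jb_L$; positive semidefiniteness of $\Mb$ is immediate. Congruence by $\Jb_L^{-1/2}$ then gives $0\preceq\Jb_L^{-1/2}\Mb\Jb_L^{-1/2}\preceq\frac12\Ib$, so $0\le\lambda_s\le\frac12$. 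Summing the $2K$ eigenvalues gives $C_\theta\le K$.

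The step I expect to be the main obstacle is the strict positivity $C_\theta>0$. Since $\Jb_L\succ0$ at regular interior scenes, this is equivalent to $\Mb\ne\bm0$, i.e., to the existence of one direction with $\mathbf P_{\mathbf v}^{\perp}\Xb\mathbf v\ne\bm0$. I will first try a $\beta_j$ direction, where $\Xb\propto\mathbf c_j\mathbf c_j\herm$ with $\mathbf c_j=\Rb^{-1/2}\ba_j$. This fails only if $\mathbf c_j\parallel\mathbf v$ or $\mathbf c_j\perp\mathbf v$, i.e., $\ba_j\parallel\ba_0$ or $\ba_j\herm\Rb^{-1}\ba_0=0$. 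The first case is excluded once $\phi_j\ne\theta_0$ on the admissible interval. If the second holds for all $j$, then $\Rb^{-1}\ba_0=\ba_0/\sigma^2$ and the orthogonality reduces to $\ba_j\herm\ba_0=0$ for all $j$. In that remaining case I will switch to a $\phi_j$ direction, with $\Xb\propto\Rb^{-1/2}(\dot\ba_j\ba_j\herm+\ba_j\dot\ba_j\herm)\Rb^{-1/2}$. There I will show that $\dot\ba_j\herm\ba_0\ne0$ whenever $\ba_j\herm\ba_0=0$, because the Dirichlet kernel has simple zeros, and that $\mathbf P_{\mathbf v}^{\perp}\Xb\mathbf v\ne\bm0$ follows from linear independence of $\{\ba_0,\ba_j,\dot\ba_j\}$, which is guaranteed by the Vandermonde structure under $2K+1\le N$. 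If the look direction can coincide with a source frequency, that degenerate case will need separate treatment; I would note it and check whether $\Theta_{\rm reg}$ excludes it.
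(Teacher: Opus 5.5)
Your proof of $\bm0\preceq\Mb\preceq\frac12\Jb_L$, $0\le\lambda_s\le\frac12$ and $C_\theta\le K$ is correct and is essentially the paper's. You use the same whitened sensitivity $\mathbf P_{\mathbf v}^{\perp}\Xb\mathbf v$ and the same count of the two conjugate off-diagonal blocks of a Hermitian matrix. The only difference is the profiling step. The paper writes $\Jb_L$ as the Frobenius Gram of the trace-free parts $\mathbf E_a^\circ$, using that the whitened $\gamma$-derivative is $\Ib$, which $\mathbf P_{\mathbf v}^{\perp}(\cdot)\mathbf v$ annihilates. You instead minimize over the $\gamma$-component. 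Minimizing $\|\Xb+t\Ib\|_F^2$ over $t$ is exactly this centering, so the two routes are equivalent. Your variational form uses only $\partial_\gamma\wopt=\bm0$, so it applies verbatim to any nuisance block with $D_\nu\psi=\bm0$. Your reduction of $C_\theta>0$ to $\Mb\ne\bm0$ also matches the paper. So does your treatment of the case $\phi_j\ne\theta_0$ for all $j$: the power directions force $\Rb^{-1}\ba_0=\ba_0/\sigma^2$, and then the angle direction and the simple zeros of the Dirichlet kernel finish it.

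The gap is the case you leave open, $\phi_j=\theta_0$ for some $j$, and it is not excluded. $\Theta_{\rm reg}$ only requires distinct directions, positive powers and noise variance, and $2K+1\le N$. The interval $\Phi_{\rm op}\subset(\theta_0-1/2,\theta_0+1/2)$ may contain $\theta_0$. Proposition~\ref{prop:highinr} has to impose $\theta_0\notin\{\phi_j\}$ as a separate hypothesis. In this case your argument genuinely stalls. The power direction of source $j$ vanishes, and $\ba_j\herm\Rb^{-1}\ba_0=\ba_0\herm\Rb^{-1}\ba_0>0$, so you can never reach $\Rb^{-1}\ba_0=\ba_0/\sigma^2$. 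The missing step is one line, and the paper opens its positivity proof with it: use the angle direction of that same source. With $\ba_j=\ba_0$ and $\Xb=p_j\Rb^{-1/2}(\dot\ba_j\ba_0\herm+\ba_0\dot\ba_j\herm)\Rb^{-1/2}$, the second term maps $\mathbf v$ into $\operatorname{span}\{\mathbf v\}$. Hence $\mathbf P_{\mathbf v}^{\perp}\Xb\mathbf v=p_jP_\star^{-1/2}\,\mathbf P_{\mathbf v}^{\perp}\Rb^{-1/2}\dot\ba_j$. This is nonzero because $\dot\ba_j\ne\bm0$ and $\ba_j\herm\dot\ba_j=0$, so $\dot\ba_j$ cannot be proportional to $\ba_0=\ba_j$. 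Thus $\Mb\ne\bm0$ whenever a source sits at the look direction. Together with your argument for the remaining case, this completes $C_\theta>0$.
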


The upper bound depends only on the number of interferers, not on the array size. In particular, $2K+1\le N$ gives $C_\theta\le K\le(N-1)/2<N-1$. The proof is given in Appendix~\ref{app:vis}.

\begin{corollary}[Loss distribution]\label{cor:dist}
Under the achievability conditions of Theorem~\ref{thm:main}, let $\bm h_n$ be any sequence of local shifts satisfying $\|\bm h_n\|\le H$, and define
$L_n=L(\hat\w_n;\tha+\bm h_n/\sqrt n)$. Then
\begin{equation}\label{eq:lossdist}
nL_n\ \Rightarrow\ \sum_{s=1}^{2K}\lambda_s Z_s^2.
\end{equation}
Here $Z_1,\ldots,Z_{2K}$ are independent standard normal variables, and all moments converge.
\end{corollary}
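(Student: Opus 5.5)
The plan is to combine the one-step expansion of $\hat\tha_n$ under the local alternatives with a second-order expansion of the loss in the scene error, and then to upgrade convergence in distribution to convergence of moments using the bounded-moment properties established in Appendix~\ref{app:onestep}.

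\textbf{Step 1: asymptotic normality of the scene error.} Write $\tha_n=\tha+\bm h_n/\sqrt n$. The one-step construction gives, under $P_{\tha_n}^{\otimes n}$,
\begin{equation*}
\sqrt n(\hat\tha_n-\tha_n)=\Jb(\tha_n)^{-1}\frac{\sqrt n}{n_2}\sum_{\ell>m_n}\dot\ell(\x_\ell;\tha_n)+o_P(1),
\end{equation*}
uniformly over $\|\bm h_n\|\le H$. Three inputs drive this. First, the pilot is localized at rate $m_n^{-1/2}=o(n^{-1/4})$. Second, the score is smooth in $t$ on the compact $\Theta$, so the quadratic remainder is $O_P(m_n^{-1})=o_P(n^{-1/2})$. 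Third, projection onto $\Theta$ is asymptotically inactive because $\tha$ is interior. Because $n_2/n\to1$ and the score has bounded higher moments, the Lindeberg CLT for triangular arrays applies. It gives $\sqrt n(\hat\tha_n-\tha_n)\Rightarrow\mathcal N(\bm0,\Jb(\tha)^{-1})$ along any such sequence $\bm h_n$. Projecting onto the $\thL$ coordinates, the $\thL$-block of $\Jb^{-1}$ is $\Jb_L^{-1}$ by the Schur complement (\ref{eq:Jeff}).

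\textbf{Step 2: local expansion of the loss.} Normalize $\hat\w_n$ to unit look gain; the MVDR map already has $\wopt\herm\ba_0=1$. Lemma~\ref{lem:excess} gives
\begin{equation*}
nL_n=\frac{nE_n/P_\star}{1+E_n/P_\star},\qquad E_n=\|\wopt(\hat\tha_n)-\wopt(\tha_n)\|^2_{\Rb(\tha_n)}.
\end{equation*}
The map $\wopt$ depends only on $\thL$ and is $C^1$ near $\tha$. A Taylor expansion therefore yields
\begin{equation*}
nE_n/P_\star=\mathbf u_n^\top\Mb\,\mathbf u_n+o_P(1),\qquad \mathbf u_n=\sqrt n(\hat\tha_{L,n}-\tha_{L,n}),
\end{equation*}
using continuity of $\Mb$ and $\Rb$ at $\tha$. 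Meanwhile $E_n\to0$ in probability, so the denominator tends to $1$. The continuous mapping theorem then gives $nL_n\Rightarrow \mathbf u^\top\Mb\mathbf u$ with $\mathbf u\sim\mathcal N(\bm0,\Jb_L^{-1})$. Writing $\mathbf u=\Jb_L^{-1/2}\mathbf z$ and diagonalizing $\Jb_L^{-1/2}\Mb\Jb_L^{-1/2}$ by an orthogonal matrix gives the weighted chi-square law $\sum_s\lambda_sZ_s^2$ in (\ref{eq:lossdist}).

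\textbf{Step 3: moment convergence (main obstacle).} Convergence of all moments follows from uniform integrability, for which it suffices that $\sup_n\E_{\tha_n}[(nL_n)^p]<\infty$ for every $p$. Since $L_n\le E_n/P_\star$ and $\wopt$ is Lipschitz on the compact $\Theta$ (covariance eigenvalues are bounded there by Lemma~\ref{lem:compactreg}), we have $nL_n\lesssim n\|\hat\tha_n-\tha_n\|^2$. The task thus reduces to showing that $\E\|\sqrt n(\hat\tha_n-\tha_n)\|^{2p}$ is bounded uniformly in $n$ and $\|\bm h_n\|\le H$. I would split on the event that the pilot lies in a small neighborhood of $\tha$.

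\emph{On that event,} the one-step map is a smooth function of the pilot error and of the centered score average. Rosenthal-type inequalities for the score average, combined with the bounded pilot moments of Appendix~\ref{app:onestep} and the independence between blocks, bound all moments of the scaled error. The quadratic term is of order $n^{1/2}\|\tilde\tha_n-\tha_n\|^2$, whose moments are $O(n^{1/2}m_n^{-1})\to0$. The Lipschitz property of $\Pi_\Theta$ keeps the projected error no larger than the unprojected one, relative to $\tha_n\in\Theta$.

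\emph{On the complement,} $nL_n\le n$ trivially. The complement has probability $O(n^{-q})$ for every $q$, by the all-order pilot moment bounds and Markov's inequality. Its contribution $n^p\,P(\cdot)$ therefore vanishes.

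This uniform moment control is the delicate step. It is also exactly what simultaneously yields (\ref{eq:structured-ach}) and the "all moments converge" clause of the corollary.
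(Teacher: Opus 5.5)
Your proposal is correct and follows the paper's proof of Corollary~\ref{cor:dist} essentially step for step. Both arguments use the one-step expansion and a triangular-array CLT for $\sqrt n(\hat\thL_n-\thL_n)$, the quadratic expansion of the normalized excess via Lemma~\ref{lem:excess} with continuous mapping and diagonalization, and uniform integrability from the pointwise bound $nL_n\le C\|\sqrt n(\hat\thL_n-\thL_n)\|^2$. The one difference is in your Step~3: you re-derive the uniform moment bound by splitting on a pilot-localization event, whereas the paper simply invokes the all-order moment bound already proved in Lemma~\ref{lem:onestep}. There the Taylor remainders are controlled by envelopes that are uniform over the convex set $\Theta$, so no localization event is needed.
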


The eigenvalues $\lambda_s$ determine the limiting mean, first-order quantiles, and outage probabilities. The proof is given in Appendix~\ref{app:thm}.

\begin{remark}[Risk conventions]\label{rem:risk}
Write $e=E/P_\star$, so that $L=e/(1+e)$. Note that the beamforming literature also uses the expected excess output power $\E[e]$ and the ratio-of-expectations loss $\E[e]/(1+\E[e])$. Jensen's inequality gives
\begin{equation}\label{eq:risk-order}
\E[L]\le\frac{\E[e]}{1+\E[e]}\le\E[e].
\end{equation}
Therefore, the converse for the bounded loss also bounds the other two conventions. For the one-step beamformer, the all-moment result in Appendix~\ref{app:thm} gives $n^2\E[e^2]=O(1)$, so the three risks have the same first-order coefficient $C_\theta$.
\end{remark}

\section{The unrestricted Gaussian limit}\label{sec:full}

We next apply Theorem~\ref{thm:general} to the unrestricted complex-Gaussian covariance model
\begin{equation*}
\x_\ell\sim\mathcal{CN}(\bm0,\Rb),\quad 1\le\ell\le n,
\qquad
\Rb\in\mathbb H_{++}^{N}.
\end{equation*}
For $n\ge N$, define the sample covariance $\hat\Rb$ and the SMI weight $\hat\w_{\rm SMI}$ by
\begin{equation*}
\hat\Rb=\frac1n\sum_{\ell=1}^{n}\x_\ell\x_\ell\herm,
\qquad
\hat\w_{\rm SMI}
=\frac{\hat\Rb^{-1}\ba_0}{\ba_0\herm\hat\Rb^{-1}\ba_0}.
\end{equation*}
Note that the sample covariance is positive definite almost surely. Let $\{\Hb_a\}_{a=1}^{N^2}$ be a real basis of Hermitian matrices, and let $\bm\vartheta\in\R^{N^2}$ be the local coordinates of $\Rb$. For covariance $\Rb(\bm\vartheta)$, define $L_{\rm SMI}(\bm\vartheta)=L(\hat\w_{\rm SMI};\Rb(\bm\vartheta))$ and $\rho_{\rm SMI}(\bm\vartheta)=1-L_{\rm SMI}(\bm\vartheta)$. The Gaussian experiment is quadratic-mean differentiable, with
\begin{equation}\label{eq:full-fisher}
(\Jb_{\mathrm{unstr}})_{ab}=\tr(\Rb^{-1}\Hb_a\Rb^{-1}\Hb_b).
\end{equation}
For a Hermitian direction $\Hb$, the differential of the optimum weight is
\begin{equation}\label{eq:full-derivative}
D\psi_{\Rb}[\Hb]
=-\Rb^{-1}\Hb\wopt
+(\ba_0\herm\Rb^{-1}\ba_0)(\wopt\herm\Hb\wopt)\wopt.
\end{equation}
Substituting (\ref{eq:full-fisher}) and (\ref{eq:full-derivative}) into $\tr(\Mb_{\vartheta}\Jb_{\vartheta}^{-1})$ gives a coefficient that does not depend on $\Rb$, mirroring the covariance independence of the RMB law itself.

\begin{theorem}[Unrestricted Gaussian limit]\label{thm:smi}
For a local direction $\bm h$, write $\bm\vartheta_{n,\bm h}=\bm\vartheta+\bm h/\sqrt n$. For $N\ge2$, the general converse in Theorem~\ref{thm:general} evaluates to
\begin{equation}\label{eq:unstr-converse}
\begin{aligned}
&\lim_{r\to\infty}\liminf_{n\to\infty}
\inf_{\hat\w_n}\sup_{\|\bm h\|\le r}
 n\,\E_{\bm\vartheta_{n,\bm h}}
 \big[L(\hat\w_n;\bm\vartheta_{n,\bm h})\big]\ge N-1.
\end{aligned}
\end{equation}
For $n\ge N$, the SMI output-SINR ratio satisfies
\begin{equation}\label{eq:rmb-beta}
\begin{aligned}
\rho_{\rm SMI}(\bm\vartheta)&\sim\operatorname{Beta}(n-N+2,N-1),\\
\E_{\bm\vartheta}[L_{\rm SMI}(\bm\vartheta)]&=\frac{N-1}{n+1}.
\end{aligned}
\end{equation}
The distribution and mean in (\ref{eq:rmb-beta}) are independent of $\Rb$. Hence, for every fixed $r<\infty$ and all sufficiently large $n$,
\begin{equation}\label{eq:unstr-ach}
\sup_{\|\bm h\|\le r}
 n\,\E_{\bm\vartheta_{n,\bm h}}
 \big[L_{\rm SMI}(\bm\vartheta_{n,\bm h})\big]
=\frac{n(N-1)}{n+1}.
\end{equation}
Consequently,
\begin{equation}\label{eq:unstr-expanding-ach}
\lim_{r\to\infty}\lim_{n\to\infty}
\sup_{\|\bm h\|\le r}
 n\,\E_{\bm\vartheta_{n,\bm h}}
 \big[L_{\rm SMI}(\bm\vartheta_{n,\bm h})\big]
=N-1.
\end{equation}
Let $\mathcal R_{n,r}^{\rm full}(\Rb)$ denote the risk in (\ref{eq:local-risk}) in the chosen covariance chart. Then
\begin{equation}\label{eq:unstr-law}
\begin{aligned}
\lim_{r\to\infty}\liminf_{n\to\infty}
 n\,\mathcal R_{n,r}^{\rm full}(\Rb)&=N-1,\\
\lim_{r\to\infty}\limsup_{n\to\infty}
 n\,\mathcal R_{n,r}^{\rm full}(\Rb)&=N-1.
\end{aligned}
\end{equation}
\end{theorem}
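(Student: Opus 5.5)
The plan has three parts: compute $C_\vartheta$ for the unrestricted chart and apply Theorem~\ref{thm:general}; recall the RMB law for SMI; and combine the two.

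\textbf{Computing the coefficient.} Both $\Jb$ and $\Mb$ are quadratic forms on the real space of Hermitian directions, and $C_\vartheta=\tr(\Mb\Jb^{-1})$ is chart invariant. I will therefore evaluate it in a whitened chart. Write $\Hb=\Rb^{1/2}\Bb\Rb^{1/2}$, so that $\Jb$ becomes the Frobenius inner product $\tr(\Bb\Bb)$ on Hermitian $\Bb$, and $\tr(\Mb\Jb^{-1})$ becomes the sum of the quadratic form $\Bb\mapsto\|D\psi_{\Rb}[\Hb]\|_{\Rb}^2/P_\star$ over a Frobenius-orthonormal basis of Hermitian matrices. Set $\mathbf u=\Rb^{-1/2}\ba_0/\|\Rb^{-1/2}\ba_0\|$, so that $\Rb^{1/2}\wopt=\sqrt{P_\star}\,\mathbf u$. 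From (\ref{eq:full-derivative}),
\[
\Rb^{1/2}D\psi_{\Rb}[\Hb]=\sqrt{P_\star}\,\bigl(-\Bb\mathbf u+(\mathbf u\herm\Bb\mathbf u)\mathbf u\bigr)=-\sqrt{P_\star}\,\mathbf P_{\mathbf u}^{\perp}\Bb\mathbf u .
\]
The quadratic form is therefore $\|\mathbf P_{\mathbf u}^{\perp}\Bb\mathbf u\|^2$. Next I choose an orthonormal basis $\mathbf u=\mathbf e_1,\mathbf e_2,\ldots,\mathbf e_N$ and the standard Hermitian basis:
\begin{itemize}
\item the diagonal matrices $\mathbf e_k\mathbf e_k\herm$;
\item $(\mathbf e_j\mathbf e_k\herm+\mathbf e_k\mathbf e_j\herm)/\sqrt2$;
\item $i(\mathbf e_j\mathbf e_k\herm-\mathbf e_k\mathbf e_j\herm)/\sqrt2$.
\end{itemize}
Only the off-diagonal pairs with one index equal to $1$ contribute. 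Each of them gives $\|\mathbf P_{\mathbf u}^{\perp}\Bb\mathbf u\|^2=1/2$. There are $2(N-1)$ such basis elements, so the sum is $N-1$. Taking the real part in $\Mb$ is harmless, because we only evaluate diagonal entries in this orthonormal basis. Substituting into Theorem~\ref{thm:general}, whose hypotheses (QMD, nonsingular $\Jb_{\rm unstr}$, smooth positive-definite $\Rb$) hold trivially for the Gaussian family, gives (\ref{eq:unstr-converse}).

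\textbf{RMB law for SMI.} For (\ref{eq:rmb-beta}) I will cite, or briefly rederive, the RMB argument. Whiten by $\Rb^{-1/2}$ and rotate so that $\ba_0$ is aligned with $\mathbf e_1$. This reduces to $\Rb=\Ib$, so $n\hat\Rb\sim\mathcal{CW}(n,\Ib)$, and $\rho$ becomes
\[
\rho=\frac{(\mathbf e_1\herm\mathbf S^{-1}\mathbf e_1)^2}{\mathbf e_1\herm\mathbf S^{-2}\mathbf e_1}.
\]
The standard partitioned-Wishart argument (Bartlett or Schur-complement decomposition) then shows $\rho\sim\operatorname{Beta}(n-N+2,N-1)$. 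The mean loss is $(N-1)/(n+1)$. Invariance of the law in $\Rb$ follows from the whitening.

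\textbf{Achievability and sharpness.} Since the law is independent of $\Rb$, every local alternative in a ball has the same risk. This gives (\ref{eq:unstr-ach}) once $n\ge N$ and $\Rb(\bm\vartheta_{n,\bm h})\succ0$, which holds for large $n$ on a fixed ball. Letting $n\to\infty$ and then $r\to\infty$ gives (\ref{eq:unstr-expanding-ach}). Finally, (\ref{eq:unstr-law}) follows by sandwiching. The lower limit is bounded below by $N-1$ from the converse. The upper limit is bounded above, using SMI as a candidate rule in the infimum, by $\limsup_n n(N-1)/(n+1)=N-1$.

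The main obstacle is the trace computation, specifically keeping the real/Hermitian bookkeeping and the factors of $1/2$ correct. The Beta law is classical.
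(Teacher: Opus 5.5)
Your proposal is correct and follows the paper's proof essentially step for step: the chart-invariant trace is evaluated at the whitened point $\Rb=\Ib$, $\ba_0=\mathbf e_1$; the RMB law is cited, with whitening giving covariance independence; and the converse and the SMI risk sandwich each other to give (\ref{eq:unstr-law}). The only difference is in the trace evaluation: the paper's primary route uses the covariance--information identity $n\operatorname{Cov}(\widehat{\bm\vartheta})=\Jb_{\mathrm{unstr}}^{-1}$ together with $n\E|\widehat R_{i1}|^2=1$, whereas your Frobenius-orthonormal basis count ($2(N-1)$ directions, each of curvature $1/2$) is exactly the paper's secondary ``Hermitian tangent geometry'' reading of the same constant.
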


The converse in Theorem~\ref{thm:smi} follows by evaluating the general coefficient on the full Hermitian covariance tangent space; the exact distribution in (\ref{eq:rmb-beta}) is the classical RMB law \cite{ReedMallettBrennan1974}. The full proof is given in Appendix~\ref{app:smi}. Since its risk is independent of $\Rb$, SMI is first-order minimax efficient. The theorem complements the RMB law by showing that its leading coefficient is also the best guarantee over all measurable rules in the unrestricted model.

Theorem~\ref{thm:smi} is a local minimax result. A loading or shrinkage rule tuned to one covariance may reduce the risk at that covariance, but it may then pay with a larger risk along nearby alternatives; the local minimax criterion requires the same rule to control all $1/\sqrt n$ alternatives at once.

Theorems~\ref{thm:main} and~\ref{thm:smi} give
\begin{equation}\label{eq:model-gap}
(N-1)-C_\theta>0.
\end{equation}
Both models have the same $1/n$ rate, but their sharp coefficients differ. The gap in (\ref{eq:model-gap}) therefore measures the first-order value of the finite-source model: the reduction from $N-1$ to $C_\theta$ comes from additional structural information, not from any inefficiency of SMI in the model it fits.

\section{Geometry of the finite-source limit}\label{sec:visibility}

Proposition~\ref{prop:visbound} gives the scale of $C_\theta$; we now examine how the coefficient depends on the scene. The whitened representation below separates the information in the training data from the sensitivity of the MVDR weight, and it leads to a source-wise decomposition and a high-INR limit.

In the $\Rb$-whitened basis (coordinates in which the interference-plus-noise covariance becomes the identity), define
\begin{equation*}
\begin{aligned}
\mathbf u&=\frac{\Rb^{-1/2}\ba_0}{\sqrt\zeta},
&\zeta&=\ba_0\herm\Rb^{-1}\ba_0=P_\star^{-1},\\
\mathbf E_a&=\Rb^{-1/2}(\partial_a\Rb)\Rb^{-1/2}.
\end{aligned}
\end{equation*}
Let $\mathbf P_{\mathbf u}^{\perp}=\Ib-\mathbf u\mathbf u\herm$ and $\mathbf E_a^\circ=\mathbf E_a-N^{-1}\tr(\mathbf E_a)\Ib$. Then
\begin{equation}\label{eq:whitenedforms}
\Mb_{ab}=\operatorname{Re}\big[(\mathbf P_{\mathbf u}^{\perp}\mathbf E_a\mathbf u)\herm(\mathbf P_{\mathbf u}^{\perp}\mathbf E_b\mathbf u)\big],
\qquad
(\Jb_L)_{ab}=\tr(\mathbf E_a^\circ\mathbf E_b^\circ).
\end{equation}
The two roles are visible in (\ref{eq:whitenedforms}): $\Jb_L$ measures the full whitened covariance derivative once the global scale is removed, whereas $\Mb$ keeps only the component that changes the MVDR weight in the look direction.

For source $j$, write $\Sb=\Rb^{-1}$ and define
\begin{equation}\label{eq:cosines}
\kappa_j=\frac{|\q_j\herm\Sb\ba_0|^2}
{(\q_j\herm\Sb\q_j)(\ba_0\herm\Sb\ba_0)},
\qquad
\chi_j=\frac{|\ba_j\herm\Sb\ba_0|^2}
{(\ba_j\herm\Sb\ba_j)(\ba_0\herm\Sb\ba_0)}.
\end{equation}
Both quantities are squared cosines in the whitened metric: $\kappa_j$ compares the look direction with the source tangent, and $\chi_j$ compares it with the source steering vector.

The following stronger conditions are used for the dimension-uniform estimates in this section. They are not needed for Theorem~\ref{thm:main} or Proposition~\ref{prop:visbound}.

\begin{assumption}[Quantitative separation]\label{ass:all}
(A1) The powers and noise variance satisfy the bounds $p_j\in[p_-,p_+]$ and $\sigma^2\in[\sigma_-^2,\sigma_+^2]$ defining $\Theta$, with these ranges fixed independently of $N$.

(A2) The source frequencies are separated by at least $\Delta>0$ on the unit torus. The dimensionless separation $\alpha_0=N\Delta$ satisfies $\alpha_0\ge C_0K$, where $C_0$ depends only on a fixed margin $\eta\in(0,1)$ and the ranges in (A1). The look direction is otherwise unrestricted.

(A3) The dimensions satisfy $2K+1\le(1-\eta)N$.
\end{assumption}

Separated frequencies keep the steering vectors nearly orthogonal: the Dirichlet kernel bounds their pairwise inner products by $O(1/\alpha_0)$, the standard route from (A2) to a well-conditioned steering family \cite{Moitra2015,LiLiao2019,LiLiaoFannjiang2020}. Appendix~\ref{app:a1} extends this argument to the tangent vectors $\q_j$ and fixes a sufficient value of $C_0$. For $K\ge2$, packing $K$ frequencies with pairwise gaps $\Delta\ge C_0K/N$ into the unit torus forces $N\gtrsim K^2$, so the quantitatively separated class is nonempty only for such $N$; as noted above, this concerns only the dimension-uniform estimates.

\begin{corollary}[Per-source accounting]\label{cor:persource}
Define
\begin{equation}\label{eq:Dsum}
C_{\rm src}=\sum_{j=1}^K\left[
\frac{\Mb_{\phi_j\phi_j}}{(\Jb_L)_{\phi_j\phi_j}}
+\frac{\Mb_{\beta_j\beta_j}}{(\Jb_L)_{\beta_j\beta_j}}
\right].
\end{equation}
Let
\begin{equation}\label{eq:epsilon-scene}
\begin{aligned}
\mathbf D_L&=\operatorname{diag}\big((\Jb_L)_{11},\ldots,(\Jb_L)_{2K,2K}\big),\\
\varepsilon_\theta&=\left\|\mathbf D_L^{-1/2}(\Jb_L-\mathbf D_L)\mathbf D_L^{-1/2}\right\|.
\end{aligned}
\end{equation}
If $\varepsilon_\theta<1$, then
\begin{equation}\label{eq:coupling}
\left|C_\theta-C_{\rm src}\right|\le\frac{\varepsilon_\theta}{1-\varepsilon_\theta}C_{\rm src}.
\end{equation}
Under Assumption~\ref{ass:all}, Lemma~\ref{lem:diagdom} further gives
\begin{equation}\label{eq:epsilon-upper}
\varepsilon_\theta\le\frac{K-1}{N-1}+C_\varepsilon K\rho_{\rm off},
\qquad
\rho_{\rm off}=O(1/\alpha_0),
\end{equation}
where $C_\varepsilon$ depends only on the power and noise ranges in (A1). This bound is a uniform sufficient estimate. The computable quantity in (\ref{eq:epsilon-scene}) can be used for a given scene. For every source,
\begin{equation}\label{eq:power-ratio}
\frac{\Mb_{\beta_j\beta_j}}{(\Jb_L)_{\beta_j\beta_j}}
=\frac{\chi_j(1-\chi_j)}{1-1/N}.
\end{equation}
\end{corollary}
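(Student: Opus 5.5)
Corollary~\ref{cor:persource} has three parts, and I will prove them in order: the coupling bound (\ref{eq:coupling}), the uniform estimate (\ref{eq:epsilon-upper}), and the exact power ratio (\ref{eq:power-ratio}).

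\emph{Coupling bound.} Write $\Jb_L=\mathbf D_L^{1/2}(\Ib+\mathbf E)\mathbf D_L^{1/2}$ with $\mathbf E=\mathbf D_L^{-1/2}(\Jb_L-\mathbf D_L)\mathbf D_L^{-1/2}$, so that $\|\mathbf E\|=\varepsilon_\theta<1$. Set $\tilde\Mb=\mathbf D_L^{-1/2}\Mb\mathbf D_L^{-1/2}\succeq\bm0$. Then
$C_\theta=\tr(\tilde\Mb(\Ib+\mathbf E)^{-1})$ and $C_{\rm src}=\tr\tilde\Mb$, which is exactly (\ref{eq:Dsum}). Since $(1-\varepsilon_\theta)\Ib\preceq\Ib+\mathbf E\preceq(1+\varepsilon_\theta)\Ib$, we get $(1+\varepsilon_\theta)^{-1}\Ib\preceq(\Ib+\mathbf E)^{-1}\preceq(1-\varepsilon_\theta)^{-1}\Ib$. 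Pairing with $\tilde\Mb\succeq\bm0$ in trace gives
\[
\frac{C_{\rm src}}{1+\varepsilon_\theta}\le C_\theta\le\frac{C_{\rm src}}{1-\varepsilon_\theta}.
\]
Both deviations are at most $\varepsilon_\theta C_{\rm src}/(1-\varepsilon_\theta)$, which is (\ref{eq:coupling}).

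\emph{Uniform estimate.} I will take (\ref{eq:epsilon-upper}) from Lemma~\ref{lem:diagdom} as the statement indicates. A Gershgorin or Frobenius bound on the normalized off-diagonal entries of $\Jb_L$ would plausibly give it. The $(K-1)/(N-1)$ term comes from the Schur-complement correction for $\gamma$, which couples the $\beta$-coordinates. The $\rho_{\rm off}$ term comes from Dirichlet-kernel cross inner products between the vectors $\ba_j$ and $\q_j$.

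\emph{Power ratio.} This is the main computational step, and I will do it in the whitened representation (\ref{eq:whitenedforms}). First, $\partial_{\beta_j}\Rb=p_j\ba_j\ba_j\herm$, because $p_j=\sigma^2e^{\beta_j}$ with $\gamma$ held fixed. Hence $\mathbf E_{\beta_j}=p_j\mathbf v\mathbf v\herm$ with $\mathbf v=\Rb^{-1/2}\ba_j$, and $\|\mathbf v\|^2=\ba_j\herm\Sb\ba_j=:s$.

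For the numerator, $\mathbf P_{\mathbf u}^\perp\mathbf E_{\beta_j}\mathbf u=p_j(\mathbf v\herm\mathbf u)\mathbf P_{\mathbf u}^\perp\mathbf v$. Its squared norm is $p_j^2|\mathbf v\herm\mathbf u|^2(s-|\mathbf v\herm\mathbf u|^2)$. Using $|\mathbf v\herm\mathbf u|^2=\chi_j s$ from (\ref{eq:cosines}), this becomes
\[
\Mb_{\beta_j\beta_j}=p_j^2s^2\chi_j(1-\chi_j).
\]

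For the denominator, $(\Jb_L)_{\beta_j\beta_j}=\tr(\mathbf E^{\circ2})=\tr(\mathbf E^2)-(\tr\mathbf E)^2/N=p_j^2(s^2-s^2/N)$.

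Taking the ratio gives $\chi_j(1-\chi_j)/(1-1/N)$, which is (\ref{eq:power-ratio}).

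The one point to check is that (\ref{eq:whitenedforms}) correctly encodes the Schur complement by centering $\mathbf E_a$. This requires $\partial_\gamma\Rb=\Rb$ in the $(\bm\phi,\bm\beta,\gamma)$ chart. That holds because all powers scale with $\sigma^2$ at fixed $\beta$, so $\mathbf E_\gamma=\Ib$ and projecting it out gives $\mathbf E_a^\circ$.
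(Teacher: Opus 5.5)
Your proposal is correct and follows essentially the paper's route: the coupling bound is the diagonal-normalization argument of Lemma~\ref{lem:diagdom}, Step~4, where your Loewner sandwich $(1+\varepsilon_\theta)^{-1}\Ib\preceq(\Ib+\mathbf E)^{-1}\preceq(1-\varepsilon_\theta)^{-1}\Ib$ replaces the paper's bound on $\|(\Ib+\mathbf F_L)^{-1}-\Ib\|$ and even gives the slightly sharper lower deviation $\varepsilon_\theta C_{\rm src}/(1+\varepsilon_\theta)$; as in the paper, (\ref{eq:epsilon-upper}) is simply cited from that lemma. For the power ratio you use the rank-one whitened score $\mathbf E_{\beta_j}=p_j\at_j\at_j\herm$ in (\ref{eq:whitenedforms}) rather than differentiating $\wopt$ directly, which is only a cosmetic difference, since both give $\Mb_{\beta_j\beta_j}=p_j^2g_j^2\chi_j(1-\chi_j)$ and $(\Jb_L)_{\beta_j\beta_j}=p_j^2g_j^2(1-1/N)$ (with your $s=g_j$).
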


The quantity $C_{\rm src}$ assigns one angle term and one relative-power term to each source. The coupling bound controls its relative error, which vanishes whenever $\varepsilon_\theta\to0$. In particular, (\ref{eq:epsilon-upper}) gives this conclusion when $K/N\to0$ and $K/\alpha_0\to0$ with the power and noise ranges fixed. Corollary~\ref{cor:persource} follows from Lemma~\ref{lem:diagdom} in Appendix~\ref{app:a1} and the per-source identities in Appendix~\ref{app:vis}.

For one interferer, the angle and relative-power blocks are orthogonal. Let $\kappa=\kappa_1$ and $\chi=\chi_1$ be the two squared cosines in (\ref{eq:cosines}); under the centered ULA convention, the relevant inner products are real, and Appendix~\ref{app:vis} gives
\begin{equation}\label{eq:k1-closed}
C_\theta
=\left[\frac12(\kappa+\chi)-2\kappa\chi\right]
+\frac{\chi(1-\chi)}{1-1/N}.
\end{equation}
The bracketed expression is the angle contribution, and the final term is the relative-power contribution. Both are determined by the whitened cosines in (\ref{eq:cosines}).

\begin{proposition}[High-INR limit]\label{prop:highinr}
Fix the array geometry and the noise variance. Let $p_j=\tau\nu_j$, where the normalized source powers $\nu_j$ are fixed in $[\nu_-,\nu_+]$ with $0<\nu_-\le\nu_+$, and let the common scale $\tau$ tend to infinity. Define the steering matrix $\Ab=[\ba_1,\ldots,\ba_K]$. Assume that $\Ab$ has full column rank, $\mathbf P_{\Ab}^{\perp}\ba_0\ne\bm0$, and $\mathbf P_{\Ab}^{\perp}\q_j\ne\bm0$, where $\mathbf P_{\Ab}^{\perp}$ is the orthogonal projector onto $\operatorname{span}(\Ab)^\perp$. Then
\begin{equation}\label{eq:highinr}
\tr(\Mb\Jb_L^{-1})\longrightarrow\frac12\sum_{j=1}^K\kappa_{j,\infty},
\end{equation}
where
\begin{equation}\label{eq:kappainf}
\kappa_{j,\infty}=
\frac{|(\mathbf P_{\Ab}^{\perp}\q_j)\herm(\mathbf P_{\Ab}^{\perp}\ba_0)|^2}
{\|\mathbf P_{\Ab}^{\perp}\q_j\|^2\|\mathbf P_{\Ab}^{\perp}\ba_0\|^2}.
\end{equation}
\end{proposition}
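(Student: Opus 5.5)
Our plan is an asymptotic computation in the whitened representation (\ref{eq:whitenedforms}). We parametrize $\Sb=\Rb^{-1}$ using the Woodbury identity $\Sb=\sigma^{-2}[\Ib-\Ab(\Ab\herm\Ab+\sigma^2\mathbf P^{-1})^{-1}\Ab\herm]$ with $\mathbf P=\operatorname{diag}(p_j)=\tau\operatorname{diag}(\nu_j)$. As $\tau\to\infty$,
\[
\sigma^2\Sb\to\mathbf P_{\Ab}^{\perp},
\]
with an $O(1/\tau)$ correction on $\operatorname{span}(\Ab)$. Since $\Ab$ has full column rank, this expansion is uniform in $\tau$.

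The first step is the loss-curvature matrix $\Mb$. We compute $\Mb$ from (\ref{eq:Mdef}) using the derivative of the MVDR weight, which gives
\[
\partial_a\wopt=-P_\star\mathbf P\,\Sb(\partial_a\Rb)\wopt ,
\]
where $\mathbf P$ here denotes a rank-correcting projection, as in (\ref{eq:full-derivative}). For the power coordinate, $\partial_{\beta_j}\Rb=p_j\ba_j\ba_j\herm$. The quantity $\Sb\ba_j$ is $O(1/\tau)$, so $p_j\Sb\ba_j$ converges to a finite vector that lies in $\operatorname{span}(\Ab)$ in the whitened sense; correspondingly, $\ba_j\herm\wopt\to0$ at rate $1/\tau$. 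Hence the power block of $\Mb$ tends to $0$. This is consistent with (\ref{eq:power-ratio}), since $\chi_j\to0$ whenever $\mathbf P_{\Ab}^{\perp}\ba_0\ne\bm0$.

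For the angle coordinate, $\partial_{\phi_j}\Rb=p_j(\dot\ba_j\ba_j\herm+\ba_j\dot\ba_j\herm)$. Applied to $\wopt$, the dominant term is $p_j(\ba_j\herm\wopt)\dot\ba_j$, which stays bounded because $p_j\ba_j\herm\wopt$ has a finite limit; call it $c_j$. Projecting with $\Sb$ then leaves $\mathbf P_{\Ab}^{\perp}\q_j$. The result is a rank-structured limit
\[
\Mb_{\phi_j\phi_k}\to\operatorname{Re}\bigl(\bar c_jc_k\,\xi_{jk}\bigr)
\]
for explicit inner products $\xi_{jk}$ of projected look-orthogonal tangent components.

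The second step is the efficient Fisher information $\Jb_L$. Using $(\Jb_L)_{ab}=\tr(\mathbf E_a^\circ\mathbf E_b^\circ)$, we expand the Slepian--Bangs entries. For the angles, the leading term grows like $\tau$, namely $(\Jb_L)_{\phi_j\phi_k}\approx 2p_j\,\delta_{jk}\|\mathbf P_{\Ab}^{\perp}\q_j\|^2/\sigma^2$ plus lower-order cross terms. The $\beta$ block tends to a nonsingular $O(1)$ matrix, and the angle--power cross block is $O(1)$. Taking a Schur complement, we find that $\Jb_L^{-1}$ on the angle block is
\[
\bigl(\operatorname{diag}(2p_j\|\mathbf P_{\Ab}^{\perp}\q_j\|^2/\sigma^2)\bigr)^{-1}\bigl(1+O(\tau^{-1/2})\bigr).
\]
The $\beta$ block of $\Jb_L^{-1}$ stays bounded, and it multiplies an $\Mb$ block that vanishes.

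The third step combines the two: $\tr(\Mb\Jb_L^{-1})$ reduces to
\[
\sum_j\frac{\Mb_{\phi_j\phi_j}}{(\Jb_L)_{\phi_j\phi_j}}+o(1).
\]
The factor $p_j$ cancels, giving $\Mb_{\phi_j\phi_j}\approx 2p_j\sigma^{-2}\,\bigl|(\mathbf P_{\Ab}^{\perp}\q_j)\herm\mathbf P_{\Ab}^{\perp}\ba_0\bigr|^2/\|\mathbf P_{\Ab}^{\perp}\ba_0\|^2$ times the coefficient that yields $\frac12$ after division, with the remaining normalizations producing $\kappa_{j,\infty}$ as in (\ref{eq:kappainf}). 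As a sanity check, the $K=1$ closed form (\ref{eq:k1-closed}) with $\chi\to0$ and $\kappa\to\kappa_{1,\infty}$ gives exactly $\frac12\kappa_{1,\infty}$; we will use this to calibrate the constants.

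The main obstacle is the bookkeeping of orders in $\tau$. We must show that $\Mb_{\phi\phi}$ and $(\Jb_L)_{\phi\phi}$ both scale like $\tau$ with matching constants; the earlier expressions also contain the factor $p_j^2$ through $p_j\ba_j\herm\wopt$, whose limit $c_j$ needs care. We must also show that the off-diagonal angle couplings contribute only $o(1)$. Our first approach is a $\tau^{-1/2}$ rescaling of the angle coordinates, $\tilde\phi_j=\sqrt{\tau}\,\phi_j$, so that all blocks have finite limits. We then take the limits entrywise and verify that the limiting rescaled $\Jb_L$ is block diagonal with an invertible diagonal angle block. This last point uses the conditions $\mathbf P_{\Ab}^{\perp}\q_j\ne\bm0$ together with the fact that the cross terms $\q_j\herm\mathbf P_{\Ab}^{\perp}\q_k$ enter only at lower order.
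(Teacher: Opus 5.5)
\textbf{There is a genuine gap: Step~1 keeps the wrong term of the angle sensitivity, and Steps~1 and~3 contradict each other.} Your overall plan matches the paper's: Woodbury expansion of $\Sb$, order counting in $\tau$, a Schur complement, and reduction to diagonal angle ratios.

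In Step~1 you take $p_j(\ba_j\herm\wopt)\dot\ba_j$ as the dominant part of $(\partial_{\phi_j}\Rb)\wopt$ and conclude that $\Mb_{\phi_j\phi_k}$ has a finite limit. You also (correctly) find $(\Jb_L)_{\phi_j\phi_j}\approx 2p_j\sigma^{-2}\|\mathbf P_{\Ab}^{\perp}\q_j\|^2\to\infty$. Together these would force $\tr(\Mb\Jb_L^{-1})\to0$, not the claimed limit.

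The error is that $\Mb$ is measured in the $\Rb$-metric, and $\Rb=\sigma^2\Ib+\tau\Ab\operatorname{diag}(\nu_j)\Ab\herm$ is of order $\tau$ on $\operatorname{span}(\Ab)$.
\begin{itemize}
\item The term you keep, $c_j\Sb\q_j\to c_j\sigma^{-2}\mathbf P_{\Ab}^{\perp}\q_j$, has $\Rb$-norm squared $|c_j|^2\q_j\herm\Sb\q_j=O(1)$.
\item The term you discard, $p_j(\q_j\herm\wopt)\Sb\ba_j$, is Euclidean-bounded and asymptotically in $\operatorname{span}(\Ab)$. But its $\Rb$-norm squared is $p_j^2|\q_j\herm\wopt|^2\,\ba_j\herm\Sb\ba_j=p_j|\q_j\herm\wopt|^2(1+O(\tau^{-1}))$, and $\q_j\herm\wopt$ has a finite, generically nonzero limit.
\end{itemize}
So the discarded term carries the entire leading order. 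The same metric inflation appears in your power-block argument; the conclusion survives there only because $|\ba_j\herm\wopt|^2=O(\tau^{-2})$.

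Step~3 then asserts the $p_j$-scaling with an unspecified constant ``calibrated'' on the $K=1$ formula (\ref{eq:k1-closed}). That formula relies on real inner products under the centered-ULA convention. It says nothing about general $K$ or the general geometry the proposition allows, so it cannot stand in for the computation.

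\textbf{The missing computation.} It is exact in whitened form. Set $\at_0=\Rb^{-1/2}\ba_0$, $\at_j=\Rb^{-1/2}\ba_j$, $\qt_j=\Rb^{-1/2}\q_j$, $c_{1j}=\ba_j\herm\Sb\ba_0$ and $c_{2j}=\q_j\herm\Sb\ba_0$. Then
\[
\Rb^{1/2}\partial_{\phi_j}\wopt=-(p_j/\zeta)\,\mathbf P_{\at_0}^{\perp}(c_{1j}\qt_j+c_{2j}\at_j).
\]
The Woodbury expansion gives:
\begin{itemize}
\item $c_{1j}=O(\tau^{-1})$, $\|\qt_j\|^2=O(1)$ and $\qt_j\herm\at_j=O(\tau^{-1})$;
\item $\|\at_j\|^2=(\tau\nu_j)^{-1}+O(\tau^{-2})$;
\item $\zeta\to\sigma^{-2}\|\mathbf P_{\Ab}^{\perp}\ba_0\|^2$ and $c_{2j}\to\sigma^{-2}(\mathbf P_{\Ab}^{\perp}\q_j)\herm(\mathbf P_{\Ab}^{\perp}\ba_0)$.
\end{itemize}
Only $|c_{2j}|^2\|\at_j\|^2$ survives at order $\tau^{-1}$, so
\[
\Mb_{\phi_j\phi_j}=\frac{p_j^2}{\zeta}\bigl\|\mathbf P_{\at_0}^{\perp}(c_{1j}\qt_j+c_{2j}\at_j)\bigr\|^2
=\frac{p_j}{\sigma^2}\,\frac{|(\mathbf P_{\Ab}^{\perp}\q_j)\herm(\mathbf P_{\Ab}^{\perp}\ba_0)|^2}{\|\mathbf P_{\Ab}^{\perp}\ba_0\|^2}+O(1),
\]
with no factor $2$. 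Dividing by $(\Jb_L)_{\phi_j\phi_j}=2p_j\sigma^{-2}\|\mathbf P_{\Ab}^{\perp}\q_j\|^2+O(1)$ gives exactly $\kappa_{j,\infty}/2$. The same expansions show that the off-diagonal angle entries of both $\Mb$ and $\Jb_L$ are $O(1)$. Your $\sqrt\tau$ rescaling of the angles is then the right device.

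\textbf{Smaller omissions.} Your reduction to diagonal ratios omits the cross term $2\tr(\Mb_{\phi\beta}[\Jb_L^{-1}]_{\beta\phi})$. It is $O(\tau^{-1})$, because positive semidefiniteness gives $|\Mb_{\phi_j\beta_k}|\le(\Mb_{\phi_j\phi_j}\Mb_{\beta_k\beta_k})^{1/2}=O(1)$ while $[\Jb_L^{-1}]_{\phi\beta}=O(\tau^{-1})$; but this needs to be stated. Your derivative formula also carries a spurious factor $P_\star$; the correct form is $\partial_a\wopt=-(\Ib-\wopt\ba_0\herm)\Sb(\partial_a\Rb)\wopt$.
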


For a ULA with distinct source directions and $2K+1\le N$, Vandermonde and confluent Vandermonde independence give the conditions in Proposition~\ref{prop:highinr} whenever $\theta_0\notin\{\phi_j\}$. Interestingly, as the INR increases, the source powers and tangent norms enter the loss curvature and the Fisher information at the same order and cancel in the trace. The power-estimation errors remain, but their first-order effect on the output-SINR loss vanishes, so the limit is set by the projected tangent geometry in (\ref{eq:kappainf}) alone.

\begin{corollary}[Look-separation bound]\label{cor:ceiling}
Under Assumption~\ref{ass:all}, suppose in addition that the look direction is separated from every source by at least $\Delta_\ell$ on the unit torus, and let $\alpha_\ell=N\Delta_\ell$. Then
\begin{equation}\label{eq:lookbound}
C_\theta\le C_v\frac{K}{\alpha_\ell^2},
\end{equation}
where $C_v$ depends only on the ranges in (A1) and on $\eta$.
\end{corollary}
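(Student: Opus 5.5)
We would bound $C_\theta$ through the per-source accounting of Corollary~\ref{cor:persource}, and show that every source-wise term is $O(1/\alpha_\ell^2)$. Under Assumption~\ref{ass:all}, Lemma~\ref{lem:diagdom} gives $\varepsilon_\theta\le (K-1)/(N-1)+C_\varepsilon K\rho_{\rm off}$. Choosing $C_0$ large enough and using (A3), this is at most a fixed constant $\bar\varepsilon<1$ depending only on $\eta$ and the ranges. Then (\ref{eq:coupling}) yields $C_\theta\le (1-\bar\varepsilon)^{-1}C_{\rm src}$. It therefore suffices to show
\begin{equation*}
\frac{\Mb_{\phi_j\phi_j}}{(\Jb_L)_{\phi_j\phi_j}}\lesssim\frac{1}{\alpha_\ell^2},
\qquad
\frac{\Mb_{\beta_j\beta_j}}{(\Jb_L)_{\beta_j\beta_j}}\lesssim\frac{1}{\alpha_\ell^2},
\end{equation*}
uniformly in $j$. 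Summing over the $K$ sources then gives (\ref{eq:lookbound}).

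For the power term, (\ref{eq:power-ratio}) gives the ratio $\chi_j(1-\chi_j)/(1-1/N)\le 2\chi_j$, so it suffices to bound the whitened cosine $\chi_j$. We would write $\Sb=\Rb^{-1}$ by Woodbury as $\sigma^{-2}(\Ib-\Ab\mathbf B\Ab\herm)$, where $\mathbf B$ is a $K\times K$ matrix. Under (A1)--(A2) the Gram matrix $\Ab\herm\Ab$ is within $O(K/\alpha_0)$ of $\Ib$, so $\mathbf B$ is uniformly bounded. The Dirichlet kernel then gives $|\ba_j\herm\ba_0|\lesssim 1/\alpha_\ell$ and $|\ba_k\herm\ba_0|\lesssim1/\alpha_\ell$. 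We would use these to show $|\ba_j\herm\Sb\ba_0|\lesssim 1/\alpha_\ell$. Combined with the lower bounds $\ba_j\herm\Sb\ba_j\gtrsim1$ and $\ba_0\herm\Sb\ba_0\gtrsim1$ (from bounded eigenvalues in the chosen ranges, with $\|\ba\|=1$), this gives $\chi_j\lesssim1/\alpha_\ell^2$.

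For the angle term, we would use the whitened form (\ref{eq:whitenedforms}). Since $\partial_{\phi_j}\Rb=p_j(\dot\ba_j\ba_j\herm+\ba_j\dot\ba_j\herm)$, the numerator $\Mb_{\phi_j\phi_j}$ is at most $\|\Rb^{-1/2}\partial_{\phi_j}\Rb\,\Rb^{-1}\ba_0\|^2 P_\star$. This splits into two terms:
\begin{itemize}
\item $|\ba_j\herm\Sb\ba_0|^2\,\|\Rb^{-1/2}\dot\ba_j\|^2$, which is of order $\alpha_\ell^{-2}N^2$, since $\|\dot\ba_j\|\asymp N$;
\item $|\dot\ba_j\herm\Sb\ba_0|^2$, again controlled by Woodbury plus the derivative of the Dirichlet kernel.
\end{itemize}
For the second term, $|\dot\ba(\phi)\herm\ba_0|\lesssim N/\alpha_\ell$ when $|\phi-\theta_0|\ge\Delta_\ell$, because the derivative of the normalized Dirichlet kernel is bounded by $N\cdot\min(1,1/(N|t|))$. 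This gives order $N^2/\alpha_\ell^2$. The denominator $(\Jb_L)_{\phi_j\phi_j}$ is of order $N^2$: it is bounded below by $\tr(\mathbf E^\circ\mathbf E^\circ)\gtrsim p_j^2\|\Rb^{-1/2}\q_j\|^2\|\Rb^{-1/2}\ba_j\|^2\gtrsim N^2$. This lower bound uses the tangent conditioning from Appendix~\ref{app:a1}, namely that $\q_j$ stays nearly orthogonal to the steering family, so its whitened norm is $\asymp N$. The ratio is therefore $\lesssim1/\alpha_\ell^2$.

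The main obstacle is the uniform control of the cross terms through $\mathbf B$. The quantity $\ba_0\herm\Ab\mathbf B\Ab\herm\dot\ba_j$ involves sums over the other sources $k\ne j$. A crude bound would give $K/\alpha_\ell\cdot N/\alpha_0$ rather than $N/\alpha_\ell$. We would handle this by writing $\mathbf B=\mathbf B_{\rm diag}+O(K/\alpha_0)$ and absorbing the off-diagonal sums using $\alpha_0\ge C_0K$. The diagonal part contributes $|\ba_0\herm\ba_j|\,|\ba_j\herm\dot\ba_j|=0$ by the centered convention, plus terms that are products of two Dirichlet values. These are bounded by $1/\alpha_\ell$ times $N/\alpha_0$. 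If this bookkeeping proves too lossy, we would fall back on the row-sum bounds for the Dirichlet kernel and its derivative used in Lemma~\ref{lem:diagdom}.
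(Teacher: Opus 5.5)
Your proposal is correct and follows the paper's proof essentially step for step. You reduce to the $2K$ diagonal ratios through Lemma~\ref{lem:diagdom} with $\varepsilon_\theta\le 1/2$, bound the power ratio by $2\chi_j$, and bound the angle ratio using $|\ba_j\herm\Sb\ba_0|\lesssim 1/\alpha_\ell$ and $|\q_j\herm\Sb\ba_0|\lesssim N/\alpha_\ell$ (i.e., $\chi_j,\kappa_j=O(\alpha_\ell^{-2})$) against $(\Jb_L)_{\phi_j\phi_j}\gtrsim p_j^2\|\qt_j\|^2\|\at_j\|^2$; the paper simply cites these whitened look coherences from the look-separation clause of Lemma~\ref{lem:coh} rather than rederiving them.

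One bookkeeping remark. The tangent cross term you call the main obstacle is already harmless, since $K/\alpha_0\le 1/C_0$. The delicate term is the power term: an operator-norm bound on $\mathbf B$ (or on $\mathbf B-\mathbf B_{\rm diag}$) can lose a factor $\sqrt K$ through $\|\Ab\herm\ba_0\|$. There you need the entrywise estimate $B_{jk}=O(\rho_{\mathrm{off}})$ from Lemma~\ref{lem:coh}, together with $K\rho_{\mathrm{off}}\le1$.
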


Corollary~\ref{cor:ceiling} bounds the sidelobe-scale contribution of interferers separated from the look direction. Proposition~\ref{prop:highinr} and Corollary~\ref{cor:ceiling} are proved in Appendix~\ref{app:vis}.

\section{Numerical results}\label{sec:sim}

This section verifies the sharp coefficients of Theorems~\ref{thm:main} and~\ref{thm:smi}, the geometric description of Section~\ref{sec:visibility}, and the finite-sample onset of the local regime. We simulate a ULA with $N=20$ and $K=3$ interferers at $\bm\phi=(-0.30,0.18,0.35)$. The look direction is broadside, i.e., $\theta_0=0$. The source powers are equal, and the noise variance is one. Unless otherwise specified, the INR is $20$ dB and $n\in\{40,60,100,160,260,420,700\}$. The first value is $n=2N$, the classical short-training point. All risks are computed from $3000$ independent Monte Carlo trials.

SMI is used for the unrestricted model. For the finite-source model, root multiple signal classification (root-MUSIC) is applied to the sample covariance to estimate the $K$ source frequencies \cite{Barabell1983}. The estimates are mapped to $\Phi_{\rm op}$ and sorted. Denote them by $\hat\phi_1,\ldots,\hat\phi_K$. Given these frequencies, root-MUSIC with nonnegative least squares (root-MUSIC$+$LS) estimates the source powers and noise variance from
\begin{equation*}
\min_{\tilde p_j\ge0,\,\tilde\sigma^2\ge0}
\left\|\hat\Rb-\tilde\sigma^2\Ib-
\sum_{j=1}^K\tilde p_j\ba(\hat\phi_j)\ba(\hat\phi_j)\herm\right\|_F^2.
\end{equation*}
The stochastic maximum likelihood (SML) objective is $\log\det\Rb+\tr(\Rb^{-1}\hat\Rb)$ in the ordered $(\bm\phi,\bm\beta,\gamma)$ chart \cite{StoicaNehorai1990}. Eight Fisher-scoring iterations initialized by root-MUSIC$+$LS give the numerical approximation labeled SML in the figures. The log-power chart parametrizes positive powers and noise variance, and the directions are restricted to the ordered operational interval. The coefficient $C_\theta$ is computed from (\ref{eq:R}), (\ref{eq:Jeff}), and (\ref{eq:Mdef}).

\begin{figure*}[!t]
\centering
\includegraphics[width=\textwidth]{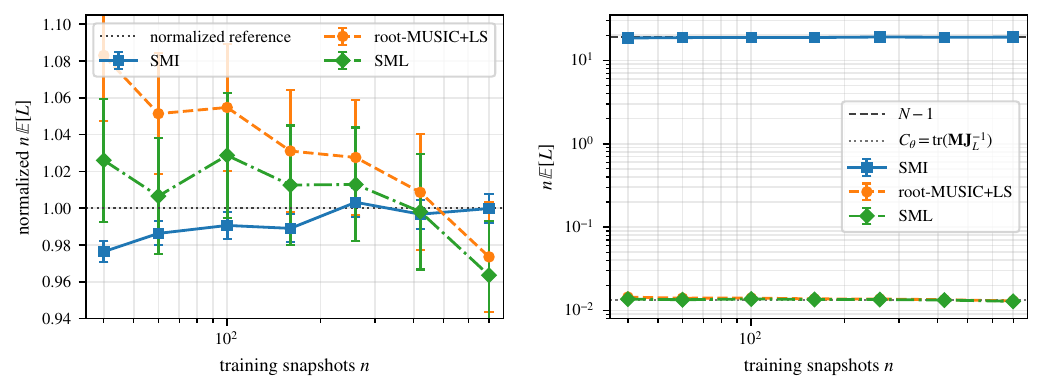}
\caption{First-order coefficients versus the number of training snapshots. Left: the practical structured losses are normalized by the finite-source converse coefficient $C_\theta=\tr(\Mb\Jb_L^{-1})$, while the SMI loss is normalized by its sharp coefficient $N-1$. Unity marks the corresponding first-order reference. Right: the same results on the original logarithmic scale. Error bars show two Monte Carlo standard errors.}
\label{fig:validation}
\end{figure*}

It should be noted that Theorem~\ref{thm:main} establishes achievability with the split one-step rule; here we evaluate the practical root-MUSIC$+$LS and eight-step SML reconstructions against the same benchmark.

Figure~\ref{fig:validation} compares the two model-specific coefficients. The left panel shows the normalized risks, and the right panel shows $n\E[L]$ on the original scale. It is seen that SMI follows the exact value $n(N-1)/(n+1)$ and approaches $N-1=19$. At $n=2N$, the normalized values of root-MUSIC$+$LS and SML are about $1.08$ and $1.03$, respectively, and over the simulated range both structured curves remain between about $0.96$ and $1.08$ relative to $C_\theta\approx0.0133$. The practical methods thus stay close to the finite-source converse coefficient, while SMI agrees with Theorem~\ref{thm:smi}. The error bars do not resolve deviations from $C_\theta$ below a few percent. For this scene, $(N-1)/C_\theta\approx1.4\times10^3$. The coupling measure in (\ref{eq:epsilon-scene}) is $\varepsilon_\theta=0.1053$, with $C_{\rm src}=0.01334408$ and $|C_\theta-C_{\rm src}|/C_\theta=8.49\times10^{-5}$; the bound in (\ref{eq:coupling}) is conservative because it controls the full off-diagonal Fisher block in spectral norm.

\begin{figure}[!t]
\centering
\includegraphics[width=\columnwidth]{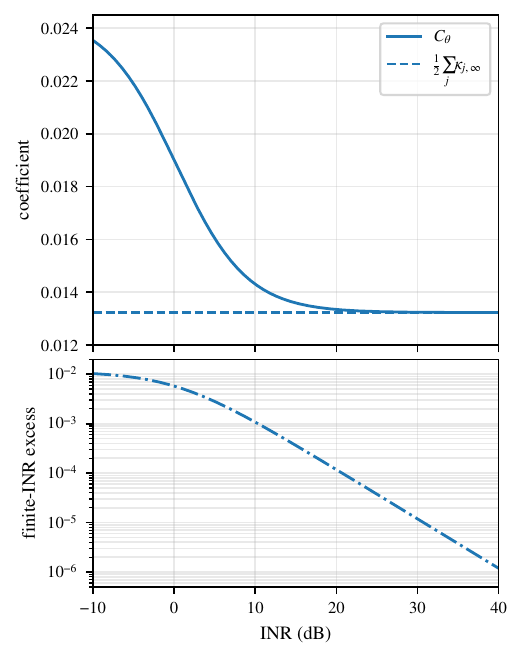}
\caption{Finite-source coefficient versus INR\@. Top: $C_\theta$ and the high-INR limit $\frac12\sum_j\kappa_{j,\infty}$ in Proposition~\ref{prop:highinr}. Bottom: the finite-INR excess $C_\theta-\frac12\sum_j\kappa_{j,\infty}$ on a logarithmic scale.}
\label{fig:visibility}
\end{figure}

Figure~\ref{fig:visibility} examines the high-INR limit. The coefficient $C_\theta$ approaches $\frac12\sum_j\kappa_{j,\infty}\approx0.01323$; above $10$ dB, the finite-INR excess in the lower panel decreases by about one order of magnitude for each additional $10$ dB, and at $20$ dB the difference is $1.18\times10^{-4}$. This trend agrees with Proposition~\ref{prop:highinr}.

\begin{figure}[!t]
\centering
\includegraphics[width=\columnwidth]{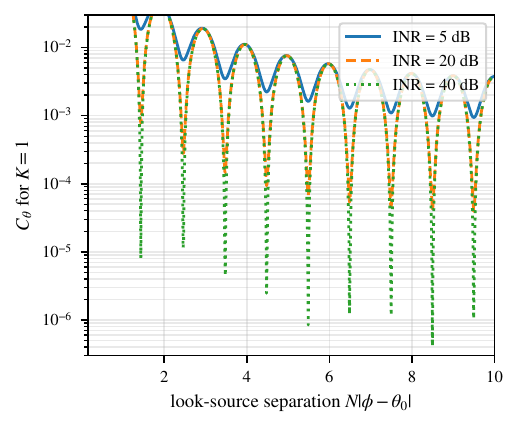}
\caption{The coefficient $C_\theta$ for one interferer versus the look-source separation measured in beamwidths. The oscillations follow the centered-ULA identity in (\ref{eq:k1-closed}). All curves remain far below the finite-source dimension bound $C_\theta\le1$.}
\label{fig:k1}
\end{figure}

Figure~\ref{fig:k1} shows how $C_\theta$ changes as one interferer moves while $N=20$ and the look direction remains fixed. The coefficient varies by several orders of magnitude across the sidelobes and nulls of the array response. Away from the look direction, the curves at $20$ and $40$ dB are nearly the same because the high-INR cancellation is already effective. The sidelobe envelope is consistent with Corollary~\ref{cor:ceiling}, and the oscillations follow (\ref{eq:k1-closed}).

\begin{figure*}[!t]
\centering
\includegraphics[width=0.80\textwidth]{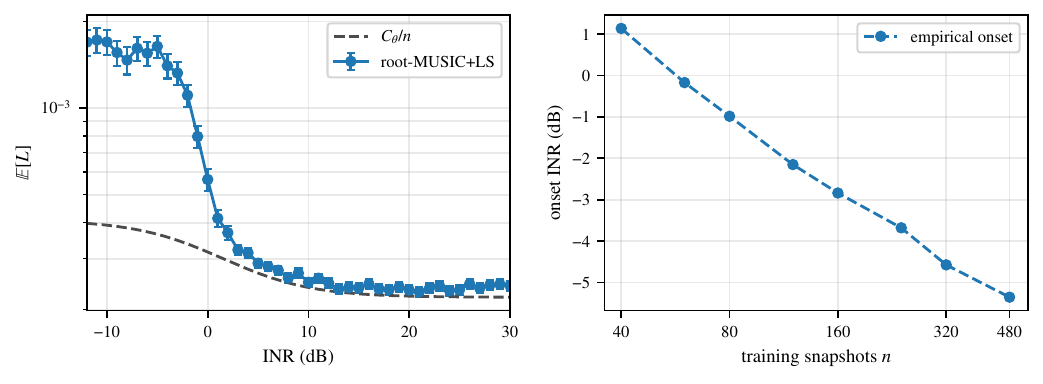}
\caption{Empirical onset of the local regime for root-MUSIC$+$LS. Left: mean loss versus source INR at $n=60$ and the reference $C_\theta/n$. Right: the largest INR for which the loss exceeds $2C_\theta/n$, versus the number of snapshots. The line connects the simulated points.}
\label{fig:threshold}
\end{figure*}

Figure~\ref{fig:threshold} identifies the empirical onset of the local regime for root-MUSIC$+$LS. The array geometry is fixed while the common source INR varies. At $n=60$, the measured loss is close to $C_\theta/n$ at moderate and high INR but increases rapidly once the source powers are too small for reliable localization. We define the empirical onset INR as the largest INR for which the loss exceeds $2C_\theta/n$. It is seen that the onset moves to lower INR as $n$ increases. This experiment characterizes the finite-sample localization stage of root-MUSIC$+$LS, whereas the converse is stated for each fixed regular scene.

\FloatBarrier
\section{Discussion}\label{sec:discussion}

The coefficient $C_\theta$ separates model information from algorithmic efficiency. A practical rule can be compared with $C_\theta/n$ within the finite-source model, while $N-1-C_\theta$ measures the additional first-order cost of fitting an unrestricted covariance. The bound $C_\theta\le K$ shows that this structural gain does not require the interferers to be widely separated. Close sources can nevertheless make the compact-set constants large and delay the finite-sample onset of the local regime. The stronger separation assumptions give explicit control of the geometric approximation and conditioning uniformly in $N$.

The general converse also applies to other calibrated arrays. Let $\mathcal A$ denote the array geometry, let $\xi$ be a direction coordinate on its steering manifold $\ba_{\mathcal A}(\xi)$, and consider $\Rb=\sigma^2\Ib+\sum_jp_j\ba_{\mathcal A}(\xi_j)\ba_{\mathcal A}(\xi_j)\herm$. Theorem~\ref{thm:general} applies whenever the training model is regular and locally identifiable. The one-step argument also gives achievability when the compact-set regularity and pilot-localization conditions in Appendix~\ref{app:onestep} hold. For the ULA, Vandermonde uniqueness supplies identifiability, while Dirichlet-kernel estimates give the additional dimension-uniform bounds.

The finite-source limit treats $K$ as known and fixed. An unknown model order leads to a model-selection problem, and source collisions lead to a singular local model; both cases require different asymptotic analyses.

\section{Conclusion}\label{sec:concl}

This paper established a task-level information limit for adaptive beamforming under finite training. An exact excess-loss identity and the H\'ajek--Le Cam theorem give a local asymptotic minimax converse over all measurable beamforming rules, including biased and irregular rules. A split one-step construction attains the finite-source coefficient $C_\theta$ on fixed compact regular sets, and $0<C_\theta\le K$ holds at every regular ULA scene. In the unrestricted Gaussian model, SMI attains the coefficient $N-1$. These matching results determine the sharp first-order constants and quantify the value of finite-source structure through $N-1-C_\theta$. The source-wise and high-INR formulas explain the coefficient geometrically, while the numerical results illustrate the finite-sample onset of the local regime.

\FloatBarrier
\appendices
\section{Proof of Lemma~\ref{lem:excess}}\label{app:excess}
The proof first removes the irrelevant scale of the candidate weight and then
uses the distortionless constraint to eliminate the cross term. The ratio
\[
\rho(\hat\w)
=\frac{|\hat\w\herm\ba_0|^2}
{(\hat\w\herm\Rb\hat\w)(\ba_0\herm\Rb\inv\ba_0)}
\]
is invariant under $\hat\w\mapsto\alpha\hat\w$ for every $\alpha\ne0$.
Thus, on $\{\hat\w\herm\ba_0\ne0\}$, we may take
$\alpha=1/\overline{\hat\w\herm\ba_0}$ and assume
$\hat\w\herm\ba_0=1$. Put
\[
\bm\delta=\hat\w-\wopt,\qquad
E=\|\bm\delta\|_{\Rb}^{2}.
\]
Both weights are distortionless, hence $\bm\delta\herm\ba_0=0$. Moreover,
the definition of $\wopt$ gives
\[
\Rb\wopt=P_\star\ba_0,
\qquad
\wopt\herm\Rb\wopt=P_\star.
\]
Consequently,
\[
\begin{aligned}
\hat\w\herm\Rb\hat\w
&=\wopt\herm\Rb\wopt
  +2\operatorname{Re}(\bm\delta\herm\Rb\wopt)
  +\bm\delta\herm\Rb\bm\delta\\
&=P_\star
  +2P_\star\operatorname{Re}(\bm\delta\herm\ba_0)
  +E
=P_\star+E.
\end{aligned}
\]
Since $\ba_0\herm\Rb\inv\ba_0=P_\star^{-1}$, the preceding identity gives
\[
\rho(\hat\w)=\frac{P_\star}{P_\star+E},
\qquad
L(\hat\w;\Rb)=1-\rho(\hat\w)
=\frac{E}{P_\star+E}.
\]
Writing $x=E/P_\star$ yields $L=x/(1+x)\le x$ and
$L=x(1+o(1))$ as $x\to0$.

It remains only to cover the part of the sample space excluded by the
normalization. If $\hat\w\herm\ba_0=0$ and $\hat\w\ne\bm0$, the numerator
of \eqref{eq:loss} is zero, so $L=1$. For $\hat\w=\bm0$, the same value is
the convention in \eqref{eq:loss}. Both cases agree with the extension
$E=+\infty$, for which $E/(P_\star+E)=1$ by continuity.
\hfill$\square$

\section{Proof of Theorem~\ref{thm:general}}\label{app:general}

The proof has four steps. We normalize an arbitrary beamformer on the whole sample space, compare the local covariance metric with its value at the center of a fixed neighborhood, apply the local asymptotic minimax theorem to the optimum-weight functional, and transfer the resulting quadratic bound to the exact bounded SINR loss.

Fix the base point $\bm\vartheta$ and write
\begin{equation*}
\psi_0=\psi(\bm\vartheta),
\qquad
P_{\star,0}=P_\star(\bm\vartheta),
\qquad
\mathbf W_0=\frac{\Rb(\bm\vartheta)}{P_{\star,0}}.
\end{equation*}
For a local alternative $\bm\vartheta_{n,\bm h}=\bm\vartheta+\bm h/\sqrt n$, define $\psi_{n,\bm h}$, $P_{\star,n,\bm h}$, and $\mathbf W_{n,\bm h}$ in the same way. The continuity and positive definiteness assumptions imply that, for each fixed $r<\infty$, there is a sequence $\varepsilon_n(r)\to0$ such that, writing $\varepsilon_n=\varepsilon_n(r)$,
\begin{equation}\label{eq:general-metric-compare}
(1-\varepsilon_n)\mathbf W_0
\preceq\mathbf W_{n,\bm h}
\preceq(1+\varepsilon_n)\mathbf W_0,
\qquad
\|\bm h\|\le r.
\end{equation}

For an arbitrary measurable beamformer $\hat\w_n$, define the normalized estimator
\begin{equation}\label{eq:general-normalize}
T_n=
\begin{cases}
\displaystyle\frac{\hat\w_n}{\ba_0\herm\hat\w_n},
&\ba_0\herm\hat\w_n\ne0,\\[7pt]
\displaystyle\frac{\ba_0}{\ba_0\herm\ba_0},
&\ba_0\herm\hat\w_n=0.
\end{cases}
\end{equation}
Then $T_n$ is measurable and finite, and satisfies $T_n\herm\ba_0=1$. On the nonzero-look-gain event, scale invariance gives $L(\hat\w_n)=L(T_n)$. On the zero-look-gain event, the original loss is one and the fallback in (\ref{eq:general-normalize}) has a finite quadratic loss.

At the local truth, put
\begin{equation*}
\ell(T_n)=\|T_n-\psi_{n,\bm h}\|_{\mathbf W_{n,\bm h}}^2,
\qquad
\ell_0(T_n)=\|T_n-\psi_{n,\bm h}\|_{\mathbf W_0}^2.
\end{equation*}
According to (\ref{eq:general-metric-compare}), $\ell(T_n)\ge(1-\varepsilon_n)\ell_0(T_n)$ uniformly on the fixed local ball.

\begin{lemma}[From finite subsets to expanding balls]\label{lem:ball}
Let $\mathcal T$ be a nonempty set, $B\in\R$, and let $F_n:\mathcal T\times\R^d\to[0,\infty)$ be arbitrary. If every sequence $(S_n)\subset\mathcal T$ satisfies
\begin{equation*}
\sup_{I\ \mathrm{finite}}\liminf_n\sup_{\bm h\in I}F_n(S_n,\bm h)\ge B,
\end{equation*}
then
\begin{equation*}
\lim_{r\to\infty}\liminf_n\inf_{T\in\mathcal T}
\sup_{\|\bm h\|\le r}F_n(T,\bm h)\ge B.
\end{equation*}
\end{lemma}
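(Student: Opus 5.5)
We argue by contradiction, using a diagonal choice of near-optimal rules. Put
\[
G(r)=\liminf_n\inf_{T\in\mathcal T}\sup_{\|\bm h\|\le r}F_n(T,\bm h).
\]
Enlarging the ball can only increase the supremum, so $G$ is nondecreasing in $r$ and $\lim_{r\to\infty}G(r)$ exists in $[0,\infty]$. Suppose the conclusion fails. Then there is $\delta>0$ with $G(r)<B-\delta$ for every $r$, by monotonicity of $G$.

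The plan is to build a single sequence $(S_n)$ that does well on larger and larger balls simultaneously.

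First, fix integers $r=1,2,\ldots$. Since $G(r)<B-\delta$, for each $r$ there are infinitely many $n$ with $\inf_T\sup_{\|\bm h\|\le r}F_n(T,\bm h)<B-\delta$.

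Second, choose an increasing sequence $n_1<n_2<\cdots$ such that, for every $r$, there is an element $T_{r}\in\mathcal T$ with
\[
\sup_{\|\bm h\|\le r}F_{n_r}(T_{r},\bm h)<B-\delta .
\]

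Third, define $S_{n_r}=T_r$, and let $S_n$ be an arbitrary element of $\mathcal T$ for all other $n$. This is possible because $\mathcal T$ is nonempty.

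Now take any finite set $I\subset\R^d$. It is bounded, so $I\subset\{\|\bm h\|\le r_0\}$ for some $r_0$. For every $r\ge r_0$,
\[
\sup_{\bm h\in I}F_{n_r}(S_{n_r},\bm h)\le\sup_{\|\bm h\|\le r}F_{n_r}(T_r,\bm h)<B-\delta .
\]
Hence along the subsequence $(n_r)_{r\ge r_0}$ the values stay below $B-\delta$, and therefore
\[
\liminf_n\sup_{\bm h\in I}F_n(S_n,\bm h)\le B-\delta .
\]
The set $I$ was arbitrary, so taking the supremum over finite $I$ gives at most $B-\delta<B$. This contradicts the hypothesis for the sequence $(S_n)$.

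The only delicate point is the diagonal step. A single sequence must be bad for every finite $I$ at once. The argument gets this by requiring each $T_r$ to be bad on the whole ball of radius $r$, so it is automatically bad on every finite subset of that ball. Beyond this, no measurability or regularity is needed, because $F_n$ is arbitrary and the choices are made pointwise. If the liminf were $+\infty$ there is nothing to prove, so the contradiction argument covers all cases.
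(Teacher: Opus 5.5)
Your proof is correct and follows essentially the same route as the paper's. Both argue by contradiction and use monotonicity in $r$. Both pick indices $n_m$ and rules that are bad on the whole ball of radius $m$, fill the remaining indices with an arbitrary element of $\mathcal T$, and note that any finite $I$ eventually lies inside these balls. The only cosmetic difference is that the paper extracts one specific finite $I$ from the hypothesis, whereas you bound $\liminf_n\sup_{\bm h\in I}F_n(S_n,\bm h)$ by $B-\delta$ for every finite $I$ at once.
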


\begin{proof}
Set $V_{n,r}=\inf_{T\in\mathcal T}\sup_{\|\bm h\|\le r}F_n(T,\bm h)$. The quantity $V_{n,r}$ is nondecreasing in $r$. Suppose that the asserted lower bound fails. Then there is an $\epsilon>0$ such that
\begin{equation*}
\lim_{r\to\infty}\liminf_nV_{n,r}<B-3\epsilon.
\end{equation*}
For each integer $m$, choose $n_m>n_{m-1}$ and $S_{n_m}\in\mathcal T$ such that
\begin{equation*}
\sup_{\|\bm h\|\le m}F_{n_m}(S_{n_m},\bm h)<B-\frac{3\epsilon}{2}.
\end{equation*}
Complete $(S_{n_m})$ to a full sequence by assigning a fixed element of $\mathcal T$ at the remaining indices. The hypothesis then gives a finite set $I$ for which
\begin{equation*}
\liminf_n\sup_{\bm h\in I}F_n(S_n,\bm h)\ge B-\epsilon.
\end{equation*}
For all sufficiently large $m$, the set $I$ lies in the ball $\|\bm h\|\le m$, which contradicts the construction of $S_{n_m}$. This completes the proof.
\end{proof}

\begin{proof}[Proof of Lemma~\ref{lem:transfer}]
The action $T_n$ is the normalized estimator in (\ref{eq:general-normalize}), and the loss $\ell_n$ in (\ref{eq:transfer-loss}) is the quantity denoted by $\ell$ below. On the nonzero-look-gain event, Lemma~\ref{lem:excess} gives $nL=n\ell/(1+\ell)$. If $n\ell\le A$, then $1+\ell\le1+A/n$. If $n\ell>A$, monotonicity of $x\mapsto nx/(1+x)$ gives $nL\ge A/(1+A/n)$. On the zero-look-gain event, $nL=n$, which is also no smaller than the right-hand side of (\ref{eq:main-transfer}).
\end{proof}

Combining Lemma~\ref{lem:transfer} and (\ref{eq:general-metric-compare}) gives
\begin{equation}\label{eq:general-transfer}
nL(\hat\w_n)
\ge\frac{1-\varepsilon_n}{1+A/n}
\big[(n\ell_0(T_n))\wedge A\big]
\end{equation}
for all $\|\bm h\|\le r$ and all sufficiently large $n$.

We now apply the local asymptotic minimax theorem. Regard $\Cx^N$ as $\R^{2N}$ and define the bounded bowl-shaped loss
\begin{equation*}
\mathcal L_A(\mathbf z)=\|\mathbf z\|_{\mathbf W_0}^2\wedge A.
\end{equation*}
Quadratic-mean differentiability gives local asymptotic normality with information $\Jb_{\vartheta}$. The optimum-weight functional is differentiable with derivative $\mathbf G_{\vartheta}$. Therefore, the efficient limit variable is
\begin{equation*}
\mathbf Z=\mathbf G_{\vartheta}\mathbf U,
\qquad
\mathbf U\sim\mathcal N(\bm0,\Jb_{\vartheta}^{-1}).
\end{equation*}
For an estimator $S$, define
\begin{equation*}
\begin{split}
\mathcal Q_{n,A}(S,\bm h)
=\E_{\bm\vartheta+\bm h/\sqrt n}\!\Big[&
\big\|\sqrt n\{S-\psi(\bm\vartheta+\bm h/\sqrt n)\}\big\|_{\mathbf W_0}^2\\
&\wedge A\Big].
\end{split}
\end{equation*}
For every estimator sequence $(S_n)$, the finite-subset form of the H\'ajek--Le Cam theorem \cite[Thm.~8.11]{vanderVaart1998} gives
\begin{equation*}
\sup_{I\ \mathrm{finite}}\liminf_n\sup_{\bm h\in I}
\mathcal Q_{n,A}(S_n,\bm h)
\ge\E[\|\mathbf Z\|_{\mathbf W_0}^2\wedge A].
\end{equation*}
Applying Lemma~\ref{lem:ball} and then (\ref{eq:general-transfer}) yields
\begin{equation*}
\begin{split}
&\lim_{r\to\infty}\liminf_n\inf_{\hat\w_n}
\sup_{\|\bm h\|\le r}
\E_{\bm\vartheta+\bm h/\sqrt n}[nL(\hat\w_n)]\\
&\qquad\ge\E[\|\mathbf Z\|_{\mathbf W_0}^2\wedge A].
\end{split}
\end{equation*}
Finally, let $A\to\infty$. Monotone convergence and cyclicity of trace give
\begin{equation*}
\E\|\mathbf Z\|_{\mathbf W_0}^2
=\tr\!\left(\mathbf W_0\mathbf G_{\vartheta}\Jb_{\vartheta}^{-1}\mathbf G_{\vartheta}\herm\right)
=\tr(\Mb_{\vartheta}\Jb_{\vartheta}^{-1}).
\end{equation*}
This proves (\ref{eq:general-converse}). \hfill$\square$

\section{Fisher orthogonality and diagonal dominance of \texorpdfstring{$\Jb_L$}{J L}}\label{app:a1}
This appendix establishes the two facts used later: distinct-node
coherences remain small after whitening, and the efficient Fisher matrix is
a diagonally dominant perturbation of its diagonal.

Throughout, set
\[
\begin{gathered}
\Sb=\Rb\inv,\qquad
\at_j=\Rb^{-1/2}\ba_j,\\
\qt_j=\Rb^{-1/2}\q_j,\qquad
g_j=\ba_j\herm\Sb\ba_j,
\end{gathered}
\]
and let
\[
\rho_{\mathrm{off}}:=\frac{1}{2\alpha_0}.
\]
The covariance derivatives are
\[
\partial_{\phi_j}\Rb
=p_j(\q_j\ba_j\herm+\ba_j\q_j\herm),\qquad
\partial_{p_j}\Rb=\ba_j\ba_j\herm,\qquad
\partial_{\sigma^2}\Rb=\Ib.
\]
We repeatedly use
\[
\tr(\mathbf u\mathbf v\herm\mathbf w\mathbf x\herm)
=(\mathbf v\herm\mathbf w)(\mathbf x\herm\mathbf u)
\]
and the Slepian--Bangs form
\[
\Jb_{ab}=\tr(\Sb\,\partial_a\Rb\,\Sb\,\partial_b\Rb).
\]
All $O(\rho_{\mathrm{off}})$ coherence bounds below are entrywise unless an
operator norm is displayed. Thus a sum of $K$ single-coherence terms costs
$K\rho_{\mathrm{off}}$, whereas a sum of products of two coherences costs
$K\rho_{\mathrm{off}}^2$.

The separation threshold is $\alpha_*=C_0K$. The constant
$C_0=C_0(\eta,p_\pm,\sigma_\pm^2)$ is enlarged once, after the constants in
the estimates below have been fixed, so that
\[
K\rho_{\mathrm{off}}<1,\qquad
\|\mathbf v_\phi\|_\infty^2\le\frac13,
\]
the scene coupling introduced in Lemma~\ref{lem:diagdom} is at most
$1/2$, and the normalized angle diagonal used in the proof of
Corollary~\ref{cor:ceiling} is at least $3/4$. This enlargement depends
only on the ranges in (A1) and on $\eta$.

\begin{lemma}[Dirichlet--Woodbury coherence bounds]\label{lem:coh}
Under Assumption~\ref{ass:all} with $\alpha_0\gtrsim K$, index the half-wavelength ULA symmetrically about $0$, so
$\ba\herm\dot\ba=0$, $\q_j=\dot\ba_j$, and $\|\q_j\|\asymp N$. Let $\Ab=[\ba_1,\dots,\ba_K]$,
$\boldsymbol\Gamma=\Ab\herm\Ab$, $\boldsymbol\Lambda_p=\mathrm{diag}(p_1,\dots,p_K)$, and
$\mathbf B=(\sigma^2\boldsymbol\Lambda_p^{-1}+\boldsymbol\Gamma)\inv$. There is a constant $C$, depending only on
$(p_\pm,\sigma^2_\pm,\eta)$ and uniform over all separations $\alpha_0\ge C_0K$ (in particular free of
$\alpha_0$ itself), such that, for distinct separated nodes $j\ne k$,
\begin{equation*}
|\ba_j\herm\ba_k|\le C\rho_{\mathrm{off}},\quad
\frac{|\q_j\herm\ba_k|}{\|\q_j\|}\le C\rho_{\mathrm{off}},\quad
\frac{|\q_j\herm\q_k|}{\|\q_j\|\|\q_k\|}\le C\rho_{\mathrm{off}},
\end{equation*}
and $\boldsymbol\Gamma=\Ib+O(\rho_{\mathrm{off}})$ entrywise. The operator deviation is a factor of $K$ larger than the entrywise bound,
$\|\boldsymbol\Gamma-\Ib\|=O(K\rho_{\mathrm{off}})$, still below one under $\alpha_*\asymp K$, so
$\boldsymbol\Gamma\succ0$ and $\mathbf B$ has diagonal $\asymp 1$ and off-diagonal
$B_{jk}=O(\rho_{\mathrm{off}})$. The same orders hold after whitening: for $j\ne k$,
$|\at_j\herm\at_k|\le C\rho_{\mathrm{off}}$,
$|\qt_j\herm\at_k|\le C\rho_{\mathrm{off}}\|\qt_j\|\|\at_k\|$, and
$|\qt_j\herm\qt_k|\le C\rho_{\mathrm{off}}\|\qt_j\|\|\qt_k\|$, while $\at_j\herm\at_j\asymp1$
and $\|\qt_j\|^2\asymp\|\q_j\|^2$. The same orders hold with $\Sb^2$ in place of $\Sb$ under either the
$\Sb$- or the $\Sb^2$-weighted normalizer. If, in addition, the
look direction is separated from every source by at least $\Delta_\ell$ on the torus ($\alpha_\ell:=N\Delta_\ell$,
$\rho_\ell:=1/(2\alpha_\ell)$), all bounds above (raw and whitened) also hold with $\ba_0$ in place of
$\ba_k$, at scale $\rho_\ell$ for each pair involving $\theta_0$.
\end{lemma}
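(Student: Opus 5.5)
The plan proceeds in three layers: raw Dirichlet-kernel estimates, then the Gram matrix and the Woodbury core $\mathbf B$, and finally transfer of all orders through the whitening. \textbf{Raw bounds.} With the symmetric indexing, $\ba_j\herm\ba_k=D_N(\phi_k-\phi_j)$, where $D_N(t)=N^{-1}\sum_m e^{i2\pi mt}=\sin(\pi Nt)/(N\sin\pi t)$ is real. On the torus, $|\sin\pi t|\ge 2|t|_{\mathbb T}$, so $|D_N(t)|\le 1/(2N|t|_{\mathbb T})\le 1/(2\alpha_0)=\rho_{\rm off}$ whenever $|t|_{\mathbb T}\ge\Delta$. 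The tangent inner products are derivatives of the kernel: $\q_j\herm\ba_k=\pm D_N'(\phi_k-\phi_j)$ and $\q_j\herm\q_k=-D_N''(\phi_k-\phi_j)$. Differentiating the closed form gives $|D_N'(t)|\lesssim N\cdot(N|t|_{\mathbb T})^{-1}+(N|t|_{\mathbb T}^2)^{-1}\cdot N^{-1}\cdot N$, i.e.\ $|D_N'|\lesssim N\rho_{\rm off}$, and similarly $|D_N''|\lesssim N^2\rho_{\rm off}$, using $N|t|\ge\alpha_0\ge1$. Together with $\|\q_j\|^2=4\pi^2N^{-1}\sum_m m^2\asymp N^2$ and $\ba\herm\dot\ba=\sum_m i2\pi m/N=0$, this yields the three normalized raw coherence bounds. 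The same computation with $\ba_0$ and separation $\Delta_\ell$ gives the look-direction versions at scale $\rho_\ell$.

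\textbf{Gram and Woodbury core.} The diagonal of $\boldsymbol\Gamma$ is exactly $1$ and its off-diagonal entries are $O(\rho_{\rm off})$. Gershgorin (or a row-sum bound) then gives $\|\boldsymbol\Gamma-\Ib\|\le(K-1)C\rho_{\rm off}$, which is $<1/2$ once $C_0$ is enlarged. Next write $\mathbf B\inv=\mathbf D+\mathbf E$ with $\mathbf D=\operatorname{diag}(\sigma^2/p_j+1)\asymp\Ib$ by (A1), and with $\mathbf E$ having zero diagonal, entries $O(\rho_{\rm off})$, and $\|\mathbf E\|=O(K\rho_{\rm off})$ small. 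Expanding the Neumann series $\mathbf B=\mathbf D\inv-\mathbf D\inv\mathbf E\mathbf D\inv+\cdots$, the first-order term has entries $O(\rho_{\rm off})$. For the tail I control entries rather than norms: the $r$-th power of $\mathbf E$ has entries at most $K^{r-1}(C\rho_{\rm off})^r$, so the tail is $O(\rho_{\rm off}\sum_r(CK\rho_{\rm off})^{r-1})=O(\rho_{\rm off})$ entrywise. This gives a diagonal $\asymp1$ and off-diagonal entries $B_{jk}=O(\rho_{\rm off})$.

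\textbf{Whitened bounds.} Woodbury gives $\Sb=\sigma^{-2}(\Ib-\Ab\mathbf B\Ab\herm)$. Hence for any vectors $\mathbf x,\mathbf y\in\{\ba_k,\q_k,\ba_0\}$,
\[
\mathbf x\herm\Sb\mathbf y=\sigma^{-2}\Big(\mathbf x\herm\mathbf y-\sum_{l,l'}(\mathbf x\herm\ba_l)B_{ll'}(\ba_{l'}\herm\mathbf y)\Big).
\]
I expand this double sum term by term and sort by how many factors are diagonal. Take a distinct pair $j\ne k$ with $\mathbf x\in\{\ba_j,\q_j\}$ and $\mathbf y\in\{\ba_k,\q_k\}$. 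The terms with $l=j$, $l'=k$ contribute $O(\rho_{\rm off})$ through $B_{jk}$. The terms with $l=l'=j$ or $l=l'=k$ contribute one raw coherence. Every other term carries at least two small factors, so summing over $O(K)$ or $O(K^2)$ indices gives $O(K\rho_{\rm off}^2)$ or $O(K^2\rho_{\rm off}^3)$, which is $O(\rho_{\rm off})$ because $K\rho_{\rm off}<1$. Normalizations carry factors of $\|\q\|\asymp N$ and match. For the diagonal terms, $\at_j\herm\at_j=\sigma^{-2}(1-B_{jj}-O(K\rho_{\rm off}^2))=\sigma^{-2}(\sigma^2/p_j)/(1+\sigma^2/p_j)+O(\rho_{\rm off})\asymp1$, using $B_{jj}=1/(1+\sigma^2/p_j)+O(K\rho_{\rm off}^2)$. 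Also $\|\qt_j\|^2=\sigma^{-2}(\|\q_j\|^2-\sum|\q_j\herm\ba_l|^2B_{ll}-\ldots)$; because $\q_j\herm\ba_j=0$, the subtracted part is $O(K\rho_{\rm off}^2N^2)$, which is $\ll N^2$. For $\Sb^2=\sigma^{-4}(\Ib-\Ab\mathbf B\Ab\herm)^2$ I expand one further product, $\Ab\mathbf B\boldsymbol\Gamma\mathbf B\Ab\herm$. Its core $\mathbf B\boldsymbol\Gamma\mathbf B$ obeys the same entrywise pattern, so the argument repeats verbatim under either normalizer. The look-direction bounds follow by replacing one of $\mathbf x,\mathbf y$ with $\ba_0$, which has small overlap with every $\ba_l$ and $\q_l$ at scale $\rho_\ell$, and rerunning the same bookkeeping.

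\textbf{Main obstacle.} The delicate step is this entrywise bookkeeping, specifically keeping $K$-factors out of the entrywise bounds so that the constant $C$ stays free of $K$ and $\alpha_0$. Operator-norm bounds alone would lose a factor of $K$. I handle this by always counting at least two small coherences in any sum over a free index and then absorbing $K\rho_{\rm off}<1$. A secondary technical point is the derivative estimate for $D_N''$ near $|t|_{\mathbb T}=1/2$, where $\sin\pi t$ stays bounded away from zero, so that region is harmless. Uniformly in $\alpha_0\ge C_0K$, these bounds give all the stated orders.
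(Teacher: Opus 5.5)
Your proposal is correct and follows essentially the same route as the paper: Dirichlet-kernel raw bounds with $D_N'$ and $D_N''$ controlled via $N|t|\ge\alpha_0\ge1$, an entrywise Neumann-series bound of order $K^{m-1}\rho_{\mathrm{off}}^m$ for $\mathbf B$, term-by-term Woodbury bookkeeping that counts at least two small coherences in every free-index sum, and the core $2\mathbf B-\mathbf B\boldsymbol\Gamma\mathbf B$ for $\Sb^2$. The differences are cosmetic. The paper obtains $\|\at_j\|^2\asymp1$, $\|\qt_j\|^2\asymp\|\q_j\|^2$ and the $\Sb$--$\Sb^2$ normalizer equivalence from the Loewner sandwich $\sigma_-^2\Ib\preceq\Rb\preceq(\sigma_+^2+p_+\|\boldsymbol\Gamma\|)\Ib$, rather than from your direct Woodbury evaluation; for the $\Sb^2$ lower bound that evaluation needs the cancellation $1-B_{jj}(2-B_{jj})=(1-B_{jj})^2$, which you did not spell out. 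The paper also disposes of the case $\alpha_\ell\le1$ by Cauchy--Schwarz, since then $\rho_\ell\ge1/2$, whereas your derivative estimate assumes $N|t|\ge1$.
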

\begin{proof}
The proof has three steps: raw Dirichlet estimates, the Woodbury
whitening, and the transfer to $\Sb^2$ and to the look direction.

\emph{Step 1: raw coherences.}
Write $\phi_k-\phi_j=t_{jk}+\ell_{jk}$, where
$t_{jk}\in[-\tfrac12,\tfrac12]$ is a minimal torus representative and
$\ell_{jk}\in\mathbb Z$. Define
\[
D_N(t)=\frac{\sin(\pi Nt)}{N\sin(\pi t)},
\qquad
s_{jk}=(-1)^{(N-1)\ell_{jk}}.
\]
For the symmetrically indexed ULA,
$\ba(\phi+\ell)=(-1)^{(N-1)\ell}\ba(\phi)$, and therefore
\[
\ba_j\herm\ba_k=s_{jk}D_N(t_{jk}).
\]
The sign accounts for the half-integer sensor indices when $N$ is even;
it disappears from all absolute-value bounds below.
Assumption (A2) gives $|t_{jk}|\ge\Delta=\alpha_0/N$. Since
$|\sin(\pi x)|\ge2|x|$ for $|x|\le\tfrac12$,
\begin{equation}\label{eq:dirichlet}
|\ba_j\herm\ba_k|
\le\frac{1}{2N|t_{jk}|}
\le\frac{1}{2N\Delta}
=\frac{1}{2\alpha_0}
=\rho_{\mathrm{off}}.
\end{equation}
Symmetric indexing also gives
$\ba_j\herm\dot\ba_j=N^{-1}\sum_m i2\pi m=0$. Hence
$\q_j=\dot\ba_j$ and
\[
\|\q_j\|^2
=\frac{4\pi^2}{N}\sum_m m^2
\asymp N^2.
\]
The tangent products are derivatives of the same kernel:
\[
\q_j\herm\ba_k=-s_{jk}D_N'(t_{jk}),\qquad
\q_j\herm\q_k=-s_{jk}D_N''(t_{jk}).
\]
Writing $s=\sin(\pi\delta)$, direct differentiation gives
\[
D_N'(\delta)
=\frac{\pi\cos(\pi N\delta)}{s}
-\frac{\pi\cos(\pi\delta)\sin(\pi N\delta)}{Ns^2}
\]
and
\[
\begin{aligned}
D_N''(\delta)
&=-\frac{\pi^2N\sin(\pi N\delta)}{s}
  -\frac{2\pi^2\cos(\pi N\delta)\cos(\pi\delta)}{s^2}\\
&\quad
  +\frac{\pi^2\sin(\pi N\delta)}{Ns}
  +\frac{2\pi^2\sin(\pi N\delta)\cos^2(\pi\delta)}{Ns^3}.
\end{aligned}
\]
On $\alpha_0/N\le|\delta|\le1/2$, the bound
$|s|\ge2|\delta|\ge2\alpha_0/N$, together with $\alpha_0\ge1$, yields
\begin{equation}\label{eq:dirderiv}
|D_N'(\delta)|\le\frac{CN}{\alpha_0},
\qquad
|D_N''(\delta)|\le\frac{CN^2}{\alpha_0}.
\end{equation}
Dividing by $\|\q_j\|\asymp N$ proves the raw normalized coherence
bounds. Equation \eqref{eq:dirichlet} also gives
$\boldsymbol\Gamma=\Ib+O(\rho_{\mathrm{off}})$ entrywise and
\[
\|\boldsymbol\Gamma-\Ib\|
\le (K-1)\rho_{\mathrm{off}}
=O(K\rho_{\mathrm{off}}).
\]

\emph{Step 2: Woodbury whitening.}
Let
\[
\mathbf K_\Gamma=\sigma^2\boldsymbol\Lambda_p^{-1}+\boldsymbol\Gamma,\qquad
\mathbf D_\Gamma=\operatorname{diag}(\mathbf K_\Gamma),
\]
and symmetrically normalize the off-diagonal part:
\[
\mathbf K_\Gamma
=\mathbf D_\Gamma^{1/2}(\Ib+\bm\Xi)\mathbf D_\Gamma^{1/2}.
\]
By (A1), $\mathbf D_\Gamma\asymp\Ib$. The matrix $\bm\Xi$ has zero
diagonal, entrywise order $O(\rho_{\mathrm{off}})$, and
\[
\|\bm\Xi\|\le C K\rho_{\mathrm{off}}<1
\]
after the fixed enlargement of $C_0$. Therefore
\[
\mathbf B
=\mathbf D_\Gamma^{-1/2}
  \sum_{m=0}^{\infty}(-\bm\Xi)^m
  \mathbf D_\Gamma^{-1/2}.
\]
The $m$th off-diagonal contribution is bounded entrywise by
$C^mK^{m-1}\rho_{\mathrm{off}}^m$. The resulting geometric series
shows that
\[
B_{jj}\asymp 1,\qquad
B_{jk}=O(\rho_{\mathrm{off}})\quad(j\ne k).
\]
In particular, $\boldsymbol\Gamma\succ0$ and $\mathbf B$ has the asserted
diagonal/off-diagonal structure.

Woodbury gives
\[
\Sb=\sigma^{-2}(\Ib-\Ab\mathbf B\Ab\herm).
\]
Set
\[
\mathbf c_j=\Ab\herm\ba_j=\mathbf e_j+\bm\epsilon_j,
\qquad
(\bm\epsilon_j)_j=0,\qquad
\|\bm\epsilon_j\|_\infty\le\rho_{\mathrm{off}}.
\]
For $j\ne k$,
\[
\ba_j\herm\Sb\ba_k
=\sigma^{-2}
 \bigl(\ba_j\herm\ba_k-\mathbf c_j\herm\mathbf B\mathbf c_k\bigr),
\]
and the correction has the explicit decomposition
\[
\mathbf c_j\herm\mathbf B\mathbf c_k
=B_{jk}
 +\bm\epsilon_j\herm\mathbf B\mathbf e_k
 +\mathbf e_j\herm\mathbf B\bm\epsilon_k
 +\bm\epsilon_j\herm\mathbf B\bm\epsilon_k.
\]
The first three terms are $O(\rho_{\mathrm{off}})$. The last satisfies
\[
|\bm\epsilon_j\herm\mathbf B\bm\epsilon_k|
\le C\{K\rho_{\mathrm{off}}^2+K^2\rho_{\mathrm{off}}^3\}
=O(\rho_{\mathrm{off}}),
\]
because $K\rho_{\mathrm{off}}=O(1)$. Together with
\eqref{eq:dirichlet}, this proves
$|\at_j\herm\at_k|=|\ba_j\herm\Sb\ba_k|
=O(\rho_{\mathrm{off}})$.

For the tangent terms, put $\mathbf d_j=\Ab\herm\q_j$. Its $j$th
entry is zero, and \eqref{eq:dirderiv} gives
\[
|(\mathbf d_j)_l|
\le C\rho_{\mathrm{off}}\|\q_j\|,
\qquad l\ne j.
\]
Substitution of $\mathbf d_j$ for one or both of the vectors
$\mathbf c_j,\mathbf c_k$ in the preceding expansion yields
\[
|\q_j\herm\Sb\ba_k|
\le C\rho_{\mathrm{off}}\|\q_j\|,
\qquad
|\q_j\herm\Sb\q_k|
\le C\rho_{\mathrm{off}}\|\q_j\|\,\|\q_k\|.
\]
Finally, the Gram bound and (A1) give the Loewner sandwich
\[
\begin{gathered}
\sigma_-^2\Ib
\preceq\Rb
\preceq
\{\sigma_+^2+p_+\|\boldsymbol\Gamma\|\}\Ib,\\
\|\boldsymbol\Gamma\|
\le1+(K-1)\rho_{\mathrm{off}}=O(1).
\end{gathered}
\]
Hence
\[
\|\at_j\|^2=\ba_j\herm\Sb\ba_j\asymp1,
\qquad
\|\qt_j\|^2=\q_j\herm\Sb\q_j\asymp\|\q_j\|^2,
\]
which converts the preceding raw estimates into the normalized
whitened bounds.

\emph{Step 3: $\Sb^2$ and look-direction transfer.}
Squaring the Woodbury representation gives the same finite-rank
structure:
\[
\begin{aligned}
\Sb^2
&=\sigma^{-4}
  \{\Ib-\Ab\mathbf C_2\Ab\herm\},\\
\mathbf C_2
&:=2\mathbf B-\mathbf B\boldsymbol\Gamma\mathbf B
 =\mathbf B+\sigma^2\mathbf B\boldsymbol\Lambda_p^{-1}\mathbf B.
\end{aligned}
\]
The second identity follows from
$\mathbf B\boldsymbol\Gamma\mathbf B
=\mathbf B-\sigma^2\mathbf B\boldsymbol\Lambda_p^{-1}\mathbf B$.
For $j\ne k$, the $l=j,k$ summands in
$(\mathbf B\boldsymbol\Lambda_p^{-1}\mathbf B)_{jk}$ are
$O(\rho_{\mathrm{off}})$, while the remaining sum is
$O(K\rho_{\mathrm{off}}^2)$. On the diagonal, the $l=j$ summand is
$\asymp 1$ and the rest is $O(K\rho_{\mathrm{off}}^2)$. Thus
$\mathbf C_2$ has diagonal $\asymp 1$ and off-diagonal
$O(\rho_{\mathrm{off}})$. Repeating Step~2 with $\mathbf C_2$ proves
all $\Sb^2$ coherence bounds. The Loewner sandwich also gives, uniformly
in $\mathbf x$,
\[
\mathbf x\herm\Sb^2\mathbf x
\asymp
\mathbf x\herm\Sb\mathbf x,
\]
so either family of normalizers may be used.

If the look direction is separated from source $j$, the same Dirichlet
calculation gives the raw source--look estimates at scale $\rho_\ell$.
In the Woodbury expansions of
$\ba_j\herm\Sb\ba_0$, $\q_j\herm\Sb\ba_0$, and their $\Sb^2$
counterparts, every term contains one such source--look factor. All
remaining sums are bounded by $1+O(K\rho_{\mathrm{off}})$. This proves
the whitened bounds at scale $\rho_\ell$. When $\alpha_\ell\le1$, the
same estimates follow directly from Cauchy--Schwarz because
$\rho_\ell\ge1/2$.
\end{proof}

\begin{lemma}\label{lem:a1}
Under (A1)--(A2) with $\alpha_0\gtrsim K$ and the coherence bounds of Lemma~\ref{lem:coh}:
for a single source, $\Jb_{\phi_jp_j}=\Jb_{\phi_j\sigma^2}=0$; for $K\ge2$, both entries are
$O(\rho_{\mathrm{off}}\sqrt{\Jb_{\phi_j\phi_j}\Jb_{bb}})$ ($\Jb_{bb}$ the partner diagonal entry,
$b=p_j$ or $\sigma^2$). Equivalently, the same-source whitened coherence obeys
$|\qt_j\herm\at_j|\le C\rho_{\mathrm{off}}\|\qt_j\|\|\at_j\|$, exactly $0$ at $K=1$. Here
$\rho_{\mathrm{off}}=O(1/\alpha_0)$. Consequently, $\Jb_{\phi_j\beta_k}=\Jb_{\phi_j\gamma}=O(\rho_{\mathrm{off}})$
(after normalization) and $\Jb_L$ is block-diagonal $(\bm\phi)\oplus(\bm\beta)$ up to $O(\rho_{\mathrm{off}})$.
\end{lemma}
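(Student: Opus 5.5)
The plan is to compute the two Fisher cross entries in closed form from the Slepian--Bangs expression, reduce both to whitened inner products between the tangent $\q_j$ and the steering vector $\ba_j$, and then control those inner products with the Woodbury representation from Lemma~\ref{lem:coh}.

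\emph{Closed forms.} Insert $\partial_{\phi_j}\Rb=p_j(\q_j\ba_j\herm+\ba_j\q_j\herm)$, $\partial_{p_j}\Rb=\ba_j\ba_j\herm$ and $\partial_{\sigma^2}\Rb=\Ib$ into $\Jb_{ab}=\tr(\Sb\,\partial_a\Rb\,\Sb\,\partial_b\Rb)$. The trace rule $\tr(\mathbf u\mathbf v\herm\mathbf w\mathbf x\herm)=(\mathbf v\herm\mathbf w)(\mathbf x\herm\mathbf u)$ then gives
\[
\Jb_{\phi_jp_j}=2p_jg_j\operatorname{Re}(\q_j\herm\Sb\ba_j),\qquad
\Jb_{\phi_j\sigma^2}=2p_j\operatorname{Re}(\q_j\herm\Sb^2\ba_j).
\]
So everything reduces to the same-source coherences $\qt_j\herm\at_j$ and their $\Sb^2$ analogues.

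\emph{The case $K=1$.} Here $\Sb=\sigma^{-2}(\Ib-c\,\ba_1\ba_1\herm)$ for a scalar $c$, and $\Sb^2$ has the same rank-one form. Because the ULA is centered, $\ba_1\herm\q_1=0$. Both coherences therefore vanish exactly, and so do both cross entries.

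\emph{The case $K\ge2$.} Use $\Sb=\sigma^{-2}(\Ib-\Ab\mathbf B\Ab\herm)$ with $\mathbf c_j=\Ab\herm\ba_j=\mathbf e_j+\bm\epsilon_j$ and $\mathbf d_j=\Ab\herm\q_j$. Since $\ba_j\herm\q_j=0$,
\[
\q_j\herm\Sb\ba_j=-\sigma^{-2}\mathbf d_j\herm\mathbf B\mathbf c_j .
\]
The vector $\mathbf d_j$ has zero $j$th entry and all other entries $O(\rho_{\mathrm{off}}\|\q_j\|)$ by \eqref{eq:dirderiv}. Expanding, the term $\mathbf d_j\herm\mathbf B\mathbf e_j=\sum_{l\ne j}\bar d_{jl}B_{lj}$ is a sum of $K$ products of two coherences. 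Hence it is $O(K\rho_{\mathrm{off}}^2\|\q_j\|)=O(\rho_{\mathrm{off}}\|\q_j\|)$, using $K\rho_{\mathrm{off}}<1$. The term $\mathbf d_j\herm\mathbf B\bm\epsilon_j$ is bounded the same way, as in Step~2 of Lemma~\ref{lem:coh}. Replacing $\mathbf B$ by $\mathbf C_2$ treats $\Sb^2$.

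\emph{Normalization.} Lemma~\ref{lem:coh} gives $\|\qt_j\|\asymp\|\q_j\|$ and $\|\at_j\|\asymp1$, which yields $|\qt_j\herm\at_j|\le C\rho_{\mathrm{off}}\|\qt_j\|\|\at_j\|$. To convert this into the statement about $\sqrt{\Jb_{\phi_j\phi_j}\Jb_{bb}}$, lower-bound the diagonals. Write
\[
\Jb_{\phi_j\phi_j}=2p_j^2\operatorname{Re}\bigl[(\ba_j\herm\Sb\q_j)^2+g_j\,\q_j\herm\Sb\q_j\bigr]\ge p_j^2g_j\|\qt_j\|^2,
\]
since the first term is a squared small coherence. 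For the partners, $\Jb_{p_jp_j}=g_j^2$ and $\Jb_{\sigma^2\sigma^2}=\tr\Sb^2\gtrsim N\ge\|\Sb\at_j\|^2$; the constants depend only on (A1). The ratios are then $O(\rho_{\mathrm{off}})$.

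\emph{Passing to the chart $(\bm\phi,\bm\beta,\gamma)$.} The chain rule gives $\partial_{\beta_k}=p_k\partial_{p_k}$ and $\partial_\gamma=\sigma^2\partial_{\sigma^2}+\sum_kp_k\partial_{p_k}$. This makes the cross-source entries $\Jb_{\phi_jp_k}$ ($k\ne j$) necessary too. They equal $2p_j\operatorname{Re}[(\ba_k\herm\Sb\q_j)(\ba_j\herm\Sb\ba_k)]$, a product of two whitened coherences, so they are $O(\rho_{\mathrm{off}}^2)$ after normalization. Their sum over $k$ inside $\Jb_{\phi_j\gamma}$ is then $O(K\rho_{\mathrm{off}}^2)=O(\rho_{\mathrm{off}})$.

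Finally, the Schur complement \eqref{eq:Jeff} subtracts $\Jb_{(\phi\beta)\gamma}\Jb_{\gamma\gamma}^{-1}\Jb_{\gamma(\phi\beta)}$. Its $\phi$--$\beta$ block is a product of an $O(\rho_{\mathrm{off}})$ normalized factor and a bounded one. This gives block-diagonality of $\Jb_L$ up to $O(\rho_{\mathrm{off}})$.

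I expect the main obstacle to be the bookkeeping of normalizations. Each entry must be measured against the correct diagonal: $\Jb_{\phi\phi}$ scales like $N^2$, while $\Jb_{\gamma\gamma}$ scales like $N$. The $K$-fold sums must also always pair two coherences, so that $K\rho_{\mathrm{off}}=O(1)$ absorbs the extra factor. This is especially delicate for the $\gamma$-row, where the chain rule aggregates all sources.
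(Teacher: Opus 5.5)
Your proposal is correct and is essentially the paper's proof. It uses the same closed forms $\Jb_{\phi_jp_j}=2p_jg_j\operatorname{Re}(\q_j\herm\Sb\ba_j)$ and $\Jb_{\phi_j\sigma^2}=2p_j\operatorname{Re}(\q_j\herm\Sb^2\ba_j)$, the exact vanishing at $K=1$ from the rank-one inverse and $\ba_1\herm\q_1=0$, the Woodbury bound through $\mathbf d_j\herm\mathbf B\mathbf c_j$ (with $\mathbf C_2$ for $\Sb^2$), the chain rule to $(\bm\beta,\gamma)$ with $O(K\rho_{\mathrm{off}}^2)=O(\rho_{\mathrm{off}})$ cross-source sums, and the Schur-complement profiling; the one detail the paper states and you leave implicit is that $(\Jb_L)_{aa}=(\Jb_0)_{aa}(1-v_a^2)$ with $v_{\beta_k}^2=1/N$ and $v_{\phi_j}^2=O(\rho_{\mathrm{off}}^2)$, so normalizing by the profiled diagonal changes your $O(\rho_{\mathrm{off}})$ block-coupling bound only by a constant.
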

\begin{proof}
The same-source trace identities isolate the only quantities that must
be controlled. Using the rank-one trace formula,
\[
\begin{aligned}
\Jb_{\phi_jp_j}
&=p_j\tr\!\left[
\Sb(\q_j\ba_j\herm+\ba_j\q_j\herm)
\Sb\ba_j\ba_j\herm\right]\\
&=2p_jg_j\operatorname{Re}(\ba_j\herm\Sb\q_j),
\end{aligned}
\]
and
\[
\Jb_{\phi_j\sigma^2}
=p_j\tr\!\left[
(\q_j\ba_j\herm+\ba_j\q_j\herm)\Sb^2\right]
=2p_j\operatorname{Re}(\ba_j\herm\Sb^2\q_j).
\]

For $K=1$, Sherman--Morrison gives
\[
\Sb
=\sigma^{-2}
 \left(\Ib-\frac{p_j}{\sigma^2+p_j}\ba_j\ba_j\herm\right).
\]
Because $\ba_j\herm\q_j=0$, one has
$\Sb\q_j=\sigma^{-2}\q_j$ and
$\Sb^2\q_j=\sigma^{-4}\q_j$. Both trace identities therefore vanish
exactly.

For $K\ge2$, Woodbury gives
\[
\ba_j\herm\Sb\q_j
=-\sigma^{-2}
(\Ab\herm\ba_j)\herm\mathbf B(\Ab\herm\q_j).
\]
Here $\Ab\herm\ba_j=\mathbf e_j+O(\rho_{\mathrm{off}})$ entrywise,
whereas $\Ab\herm\q_j$ has zero $j$th coordinate and all other
coordinates of order
$O(\rho_{\mathrm{off}}\|\q_j\|)$. Since $\mathbf B$ has diagonal
$\asymp 1$ and off-diagonal $O(\rho_{\mathrm{off}})$,
\[
|\ba_j\herm\Sb\q_j|
\le
C\rho_{\mathrm{off}}
(\ba_j\herm\Sb\ba_j)^{1/2}
(\q_j\herm\Sb\q_j)^{1/2}.
\]
Equivalently,
\[
|\qt_j\herm\at_j|
\le C\rho_{\mathrm{off}}\|\qt_j\|\,\|\at_j\|.
\]
The $\Sb^2$ representation established in Lemma~\ref{lem:coh} gives,
by the same calculation,
\[
|\ba_j\herm\Sb^2\q_j|
\le
C\rho_{\mathrm{off}}
(\ba_j\herm\Sb^2\ba_j)^{1/2}
(\q_j\herm\Sb^2\q_j)^{1/2}.
\]
The $\Sb$- and $\Sb^2$-normalizers are uniformly equivalent.
Moreover,
\[
\begin{aligned}
\Jb_{\phi_j\phi_j}
&=2p_j^2\left\{\|\qt_j\|^2\|\at_j\|^2
+\operatorname{Re}\big[(\qt_j\herm\at_j)^2\big]\right\}\\
&\asymp p_j^2\|\qt_j\|^2\|\at_j\|^2,\\
\Jb_{p_jp_j}&=g_j^2,\qquad
\Jb_{\sigma^2\sigma^2}=\tr(\Sb^2)\asymp N.
\end{aligned}
\]
The first trace identity is therefore bounded by
\[
C\rho_{\mathrm{off}}
\sqrt{\Jb_{\phi_j\phi_j}\Jb_{p_jp_j}}.
\]
After the uniform $\Sb$--$\Sb^2$ comparison, the second is bounded by
$C\rho_{\mathrm{off}}\sqrt{\Jb_{\phi_j\phi_j}}$, and hence by
\[
C\rho_{\mathrm{off}}
\sqrt{\Jb_{\phi_j\phi_j}\Jb_{\sigma^2\sigma^2}},
\]
because $\Jb_{\sigma^2\sigma^2}\asymp N\ge1$. This proves the two
same-source Fisher bounds in the statement.

For $k\ne j$, another application of the rank-one trace formula gives
\[
\Jb_{\phi_jp_k}
=p_j\left\{
(\ba_j\herm\Sb\ba_k)(\ba_k\herm\Sb\q_j)
+(\q_j\herm\Sb\ba_k)(\ba_k\herm\Sb\ba_j)
\right\}.
\]
Each summand contains two distinct-source whitened coherences, and hence
has normalized order $O(\rho_{\mathrm{off}}^2)$. The change of
coordinates
\[
\partial_{\beta_k}=p_k\partial_{p_k},\qquad
\partial_\gamma
=\sigma^2\partial_{\sigma^2}
 +\sum_{k=1}^K p_k\partial_{p_k}
\]
therefore yields
\[
\Jb_{\phi_j\beta_k}=p_k\Jb_{\phi_jp_k},\qquad
\Jb_{\phi_j\gamma}
=\sigma^2\Jb_{\phi_j\sigma^2}
 +\sum_{k=1}^Kp_k\Jb_{\phi_jp_k}.
\]
The same-source contribution is $O(\rho_{\mathrm{off}})$ after
normalization, and the sum of the cross-source contributions is
$O(K\rho_{\mathrm{off}}^2)=O(\rho_{\mathrm{off}})$. Thus every
unprofiled angle--nuisance entry has the claimed order.

It remains to check that profiling the global scale preserves this
order. Let
\[
\Jb_0=\Jb_{(\phi\beta)},\qquad
\mathbf D_0=\operatorname{diag}(\Jb_0),\qquad
\mathbf v
=\frac{\mathbf D_0^{-1/2}\Jb_{(\phi\beta)\gamma}}
       {\sqrt{\Jb_{\gamma\gamma}}}.
\]
Since $\partial_\gamma\Rb=\Rb$,
\[
\Jb_{\gamma\gamma}=N,\qquad
\Jb_{\beta_k\gamma}=p_kg_k,\qquad
\Jb_{\beta_k\beta_k}=p_k^2g_k^2.
\]
Hence $v_{\beta_k}=N^{-1/2}$ exactly, whereas the preceding bounds give
$v_{\phi_j}=O(\rho_{\mathrm{off}})$. The efficient information is
\[
\Jb_L
=\Jb_0
 -\Jb_{(\phi\beta)\gamma}\Jb_{\gamma\gamma}^{-1}
  \Jb_{\gamma(\phi\beta)},
\]
and its diagonal satisfies
\[
(\mathbf D_L)_{aa}=(\mathbf D_0)_{aa}(1-v_a^2).
\]
By the fixed choice of $C_0$ and (A3), $1-v_a^2$ is bounded away from
zero. Therefore, for every angle--power pair,
\[
\begin{aligned}
|(\Jb_L)_{\phi_j\beta_k}|
&\le
|(\Jb_0)_{\phi_j\beta_k}|\\
&\quad+
\sqrt{(\mathbf D_0)_{\phi_j\phi_j}
      (\mathbf D_0)_{\beta_k\beta_k}}\,
|v_{\phi_j}v_{\beta_k}|.
\end{aligned}
\]
After division by the profiled diagonal normalizers, the two terms have
orders $O(\rho_{\mathrm{off}})$ and
$O(\rho_{\mathrm{off}}/\sqrt N)$, respectively. Hence
\[
\frac{|(\Jb_L)_{\phi_j\beta_k}|}
{\sqrt{(\Jb_L)_{\phi_j\phi_j}(\Jb_L)_{\beta_k\beta_k}}}
=O(\rho_{\mathrm{off}}).
\]
This proves the asserted block decoupling of $\Jb_L$.
\end{proof}

\begin{lemma}[Diagonal dominance of $\Jb_L$]\label{lem:diagdom}
Under Assumption~\ref{ass:all} with $\alpha_0\gtrsim K$, let $\Jb_0=\Jb_{(\phi\beta)}$ be the unprofiled
$(\bm\phi,\bm\beta)$ Fisher block, $\Jb_L=\Jb_0-\Jb_{(\phi\beta)\gamma}\Jb_{\gamma\gamma}\inv\Jb_{\gamma(\phi\beta)}$
($\Jb_{\gamma\gamma}=N$) the profiled efficient information, and $\mathbf D_L=\operatorname{diag}((\Jb_L)_{11},\ldots,(\Jb_L)_{2K,2K})$. Then the
scene coupling $\varepsilon_\theta=\|\mathbf D_L^{-1/2}(\Jb_L-\mathbf D_L)\mathbf D_L^{-1/2}\|$ obeys
\[
\varepsilon_\theta\ \le\ \frac{K-1}{N-1}+C_\varepsilon\,K\rho_{\mathrm{off}},\qquad \rho_{\mathrm{off}}=O(1/\alpha_0),
\]
where $C_\varepsilon$ depends only on the regime parameters of (A1); note that $\tfrac{K-1}{N-1}\le\tfrac KN$ since
$K\le N$. Moreover, as shown along the way, $\varepsilon_\theta=O(\rho_{\mathrm{off}}+K\rho_{\mathrm{off}}^2+K/N)$.
Consequently, for $\varepsilon_\theta<1$ and any $\Mb\succeq0$,
\[
\Big|\tr(\Mb\Jb_L\inv)-\sum_a\frac{\Mb_{aa}}{(\Jb_L)_{aa}}\Big|
\le\frac{\varepsilon_\theta}{1-\varepsilon_\theta}\sum_a\frac{\Mb_{aa}}{(\Jb_L)_{aa}},
\]
equivalently $\tr(\Mb\Jb_L\inv)=(1+O(\varepsilon_\theta))\sum_a\Mb_{aa}/(\Jb_L)_{aa}$.
\end{lemma}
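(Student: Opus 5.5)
We split the statement into two parts. The first is a bound on the scene coupling $\varepsilon_\theta$. The second is a purely algebraic perturbation inequality for $\tr(\Mb\Jb_L^{-1})$, which holds for any $\Mb\succeq0$ once $\varepsilon_\theta<1$.

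\emph{Perturbation inequality.} Write $\Jb_L=\mathbf D_L^{1/2}(\Ib+\bm\Xi_L)\mathbf D_L^{1/2}$ with $\|\bm\Xi_L\|=\varepsilon_\theta<1$, and set $\widetilde\Mb=\mathbf D_L^{-1/2}\Mb\mathbf D_L^{-1/2}\succeq0$. Then
\[
\tr(\Mb\Jb_L^{-1})=\tr\bigl(\widetilde\Mb(\Ib+\bm\Xi_L)^{-1}\bigr)
\quad\text{and}\quad
\sum_a\Mb_{aa}/(\Jb_L)_{aa}=\tr\widetilde\Mb .
\]
The difference is $-\tr\bigl(\widetilde\Mb\,\bm\Xi_L(\Ib+\bm\Xi_L)^{-1}\bigr)$. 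Since $\widetilde\Mb\succeq0$, we have $|\tr(\widetilde\Mb\mathbf X)|\le\|\mathbf X\|\tr\widetilde\Mb$. Moreover $\|\bm\Xi_L(\Ib+\bm\Xi_L)^{-1}\|\le\varepsilon_\theta/(1-\varepsilon_\theta)$, because the eigenvalues $x$ of the Hermitian matrix $\bm\Xi_L$ satisfy $|x/(1+x)|\le\varepsilon_\theta/(1-\varepsilon_\theta)$ for $|x|\le\varepsilon_\theta$. This gives the stated inequality.

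\emph{Bounding $\varepsilon_\theta$.} We decompose the normalized off-diagonal matrix into blocks.

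\textbf{Power--power block.} Lemma~\ref{lem:a1} gives $v_{\beta_k}=N^{-1/2}$ exactly. Profiling contributes $-(\mathbf D_0)^{1/2}_{\beta\beta}\mathbf v_\beta\mathbf v_\beta^\top(\mathbf D_0)^{1/2}_{\beta\beta}$, which after normalization by $(\mathbf D_L)_{\beta\beta}=(\mathbf D_0)_{\beta\beta}(1-1/N)$ has off-diagonal entries $-1/(N-1)$. The off-diagonal part of the rank-one matrix $\frac{1}{N-1}\mathbf 1\mathbf 1^\top$ on $K$ coordinates has spectral norm $(K-1)/(N-1)$; this is the first term of the bound. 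The unprofiled entries are
\[
\Jb_{\beta_j\beta_k}=p_jp_k|\ba_j\herm\Sb\ba_k|^2 ,
\]
which is $O(\rho_{\rm off}^2)$ after normalization by Lemma~\ref{lem:coh}.

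\textbf{Angle--power block.} Lemma~\ref{lem:a1} gives normalized entries of order $O(\rho_{\rm off})$.

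\textbf{Angle--angle block.} For $j\ne k$, the Slepian--Bangs formula expresses $\Jb_{\phi_j\phi_k}$ as $p_jp_k$ times sums of products of two whitened coherences, for example $(\qt_j\herm\qt_k)(\at_k\herm\at_j)$ and $(\qt_j\herm\at_k)(\qt_k\herm\at_j)$. Normalized by $\sqrt{\Jb_{\phi_j\phi_j}\Jb_{\phi_k\phi_k}}\asymp p_jp_k\|\qt_j\|\|\qt_k\|$, each product is $O(\rho_{\rm off}^2)$ or $O(\rho_{\rm off})$. The profiling correction is $v_{\phi_j}v_{\phi_k}=O(\rho_{\rm off}^2)$.

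\textbf{Assembling.} The denominators $1-v_a^2$ stay bounded below, by (A3) and the choice of $C_0$. Every remaining entry is $O(\rho_{\rm off})$ entrywise. A $2K\times2K$ matrix with such entries has spectral norm at most $2K\cdot O(\rho_{\rm off})$, by the Schur/Gershgorin row-sum bound. This yields $C_\varepsilon K\rho_{\rm off}$. Tracking the finer orders gives $O(\rho_{\rm off}+K\rho_{\rm off}^2+K/N)$: the same-source and cross-block $O(\rho_{\rm off})$ terms are sparse (one per row), while the dense terms are $O(\rho_{\rm off}^2)$.

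\emph{Main obstacle.} The delicate step is getting the constants uniform in $N$ and $K$. That requires checking that the profiled normalizers $(\mathbf D_L)_{aa}$ are comparable to $(\mathbf D_0)_{aa}$, and that the angle diagonal satisfies $\Jb_{\phi_j\phi_j}\asymp p_j^2\|\qt_j\|^2\|\at_j\|^2$ (from Lemma~\ref{lem:a1}). Without these, the Dirichlet--Woodbury coherence bounds of Lemma~\ref{lem:coh} cannot be converted into normalized Fisher entries.
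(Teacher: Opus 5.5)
Your proposal is correct and follows the paper's proof essentially step for step. You use the same normalization $\Jb_L=\mathbf D_L^{1/2}(\Ib+\mathbf F_L)\mathbf D_L^{1/2}$ with the trace comparison via $|\tr(\mathbf B_M\mathbf C)|\le\|\mathbf C\|\tr\mathbf B_M$, the same exact isolation of the power-block profiling term $-\frac{1}{N-1}(\bm 1\bm 1^\top-\Ib_K)$ of norm $(K-1)/(N-1)$ from $v_{\beta_k}=N^{-1/2}$, and a Gershgorin/row-sum bound on the remaining coherence entries from Lemmas~\ref{lem:coh} and~\ref{lem:a1}.

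One small fix is needed in the finer-order claim. With $v_{\phi_j}=O(\rho_{\mathrm{off}})$ as Lemma~\ref{lem:a1} supplies it, the cross-source angle--power profiling entries $v_{\phi_j}v_{\beta_k}$ are dense and controlled only at order $\rho_{\mathrm{off}}/\sqrt N$, not $O(\rho_{\mathrm{off}}^2)$. Their row sum $K\rho_{\mathrm{off}}/\sqrt N$ must therefore be absorbed, as the paper does, via $K\rho_{\mathrm{off}}/\sqrt N\le\frac12(K\rho_{\mathrm{off}}^2+K/N)$.
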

\begin{proof}
The proof proceeds through the normalized score Gram, the unprofiled
Gershgorin bound, the exact rank-one profiling update, and the final trace
comparison.

\emph{Step 1: normalized score entries.}
The whitened angle and relative-power scores are
\[
\mathbf E_{\phi_j}
=p_j(\qt_j\at_j\herm+\at_j\qt_j\herm),
\qquad
\mathbf E_{\beta_j}
=p_j\at_j\at_j\herm.
\]
Expanding the four rank-one products gives
\[
\begin{aligned}
\Jb_{\phi_j\phi_j}
&=\|\mathbf E_{\phi_j}\|_F^2\\
&=2p_j^2
\left\{\|\qt_j\|^2\|\at_j\|^2
       +\operatorname{Re}\big[(\qt_j\herm\at_j)^2\big]\right\}\\
&=2p_j^2\|\qt_j\|^2\|\at_j\|^2
\{1+O(\rho_{\mathrm{off}}^2)\},\\
\Jb_{\beta_j\beta_j}
&=\|\mathbf E_{\beta_j}\|_F^2
=p_j^2\|\at_j\|^4.
\end{aligned}
\]
The same-source cross entry is
\[
\Jb_{\phi_j\beta_j}
=2p_j^2\|\at_j\|^2
\operatorname{Re}(\qt_j\herm\at_j),
\]
which has normalized order $O(\rho_{\mathrm{off}})$ by
Lemma~\ref{lem:a1}.

For $j\ne k$, the cross-source expansions needed below are
\[
\begin{aligned}
\Jb_{\phi_j\phi_k}
&=p_jp_k\big[
(\at_j\herm\qt_k)(\at_k\herm\qt_j)
+(\at_j\herm\at_k)(\qt_k\herm\qt_j)\\
&\hspace{29mm}
+(\qt_j\herm\qt_k)(\at_k\herm\at_j)
+(\qt_j\herm\at_k)(\qt_k\herm\at_j)
\big],\\
\Jb_{\phi_j\beta_k}
&=p_jp_k\big[
(\at_j\herm\at_k)(\at_k\herm\qt_j)
+(\qt_j\herm\at_k)(\at_k\herm\at_j)
\big],\\
\Jb_{\beta_j\beta_k}
&=p_jp_k|\at_j\herm\at_k|^2.
\end{aligned}
\]
Every summand contains two distinct-node coherences. Lemma~\ref{lem:coh}
therefore makes every normalized cross-source entry
$O(\rho_{\mathrm{off}}^2)$.

\emph{Step 2: the unprofiled block.}
Let
\[
\mathbf D_0=\operatorname{diag}(\Jb_0),\qquad
\mathbf F_0
=\mathbf D_0^{-1/2}(\Jb_0-\mathbf D_0)\mathbf D_0^{-1/2}.
\]
Each row of $\mathbf F_0$ has one
$O(\rho_{\mathrm{off}})$ same-source angle--power entry and
$2K-2$ cross-source entries of order
$O(\rho_{\mathrm{off}}^2)$. Since $\mathbf F_0$ is Hermitian,
Gershgorin's theorem gives
\[
\|\mathbf F_0\|
\le C\{\rho_{\mathrm{off}}+K\rho_{\mathrm{off}}^2\}.
\]

\emph{Step 3: exact profiling update.}
Use the vector $\mathbf v$ from the proof of Lemma~\ref{lem:a1}. Then
\[
\begin{aligned}
\Jb_L
&=\mathbf D_0^{1/2}
  (\Ib+\mathbf F_0-\mathbf v\mathbf v\herm)
  \mathbf D_0^{1/2},\\
\mathbf D_L
&=\mathbf D_0^{1/2}
  \{\Ib-\operatorname{diag}(\mathbf v\odot\mathbf v)\}
  \mathbf D_0^{1/2}.
\end{aligned}
\]
Set
\[
\mathbf Q_v
=\{\Ib-\operatorname{diag}(\mathbf v\odot\mathbf v)\}^{-1/2}.
\]
The fixed choice of $C_0$ gives
$1-v_a^2\in[2/3,1]$ and
$\|\mathbf Q_v\|^2\le3/2$. Hence
\[
\begin{aligned}
\mathbf F_L
&:=\mathbf D_L^{-1/2}
   (\Jb_L-\mathbf D_L)
   \mathbf D_L^{-1/2}\\
&=\mathbf Q_v
\left\{
\mathbf F_0
 -\bigl(\mathbf v\mathbf v\herm
        -\operatorname{diag}(\mathbf v\odot\mathbf v)\bigr)
\right\}
\mathbf Q_v.
\end{aligned}
\]

The power components are exactly
$\mathbf v_\beta=N^{-1/2}\bm1$. On the power block the profiling term,
after the two $\mathbf Q_v$ factors, is therefore
\[
-\frac{1}{N-1}(\bm1\bm1^\top-\Ib_K),
\]
whose operator norm is exactly $(K-1)/(N-1)$. The remaining profiling
pieces satisfy
\[
\|\mathbf v_\phi\mathbf v_\phi\herm\|
=O(K\rho_{\mathrm{off}}^2),
\qquad
\|\mathbf v_\phi\mathbf v_\beta\herm\|
=O(K\rho_{\mathrm{off}}/\sqrt N).
\]
Combining these estimates with the bound on $\mathbf F_0$ gives
\[
\begin{aligned}
\varepsilon_\theta:=\|\mathbf F_L\|
&\le
\frac{K-1}{N-1}
+\frac32\|\mathbf F_0\|
+O\!\left(
K\rho_{\mathrm{off}}^2
+\frac{K\rho_{\mathrm{off}}}{\sqrt N}
\right)\\
&\le
\frac{K-1}{N-1}
+C_\varepsilon K\rho_{\mathrm{off}}.
\end{aligned}
\]
Here $C_\varepsilon$ depends only on the ranges in (A1). Also,
\[
\frac{K\rho_{\mathrm{off}}}{\sqrt N}
\le\frac12
\left(K\rho_{\mathrm{off}}^2+\frac KN\right),
\qquad
\frac{K-1}{N-1}\le\frac KN,
\]
so the more explicit bound is
\[
\varepsilon_\theta
=O\!\left(
\rho_{\mathrm{off}}
+K\rho_{\mathrm{off}}^2
+\frac KN
\right).
\]
Under (A3), $(K-1)/(N-1)<(1-\eta)/2$. Enlarging $C_0$ once more, if
needed, makes $C_\varepsilon K\rho_{\mathrm{off}}\le\eta/2$, and hence
$\varepsilon_\theta\le1/2$. The same choice makes the angle factor
$1+O(\rho_{\mathrm{off}}^2)$ above at least $3/4$. These are the two
numerical margins used later.
Since $\mathbf D_L\succ0$ and
$\Ib+\mathbf F_L\succeq(1-\varepsilon_\theta)\Ib\succ0$, this also verifies
$\Jb_L\succ0$ before its inverse is used.

\emph{Step 4: trace comparison.}
Write
\[
\Jb_L
=\mathbf D_L^{1/2}(\Ib+\mathbf F_L)\mathbf D_L^{1/2},
\qquad
\mathbf B_M
=\mathbf D_L^{-1/2}\Mb\mathbf D_L^{-1/2}\succeq0.
\]
Then
\[
\tr(\Mb\Jb_L\inv)
=\tr\{\mathbf B_M(\Ib+\mathbf F_L)\inv\}.
\]
For $\varepsilon_\theta<1$,
\[
\|(\Ib+\mathbf F_L)\inv-\Ib\|
\le\frac{\varepsilon_\theta}{1-\varepsilon_\theta}.
\]
Using
$|\tr(\mathbf B_M\mathbf C)|
\le\|\mathbf C\|\tr(\mathbf B_M)$ for
$\mathbf B_M\succeq0$,
\[
\left|
\tr(\Mb\Jb_L\inv)-\tr(\mathbf B_M)
\right|
\le
\frac{\varepsilon_\theta}{1-\varepsilon_\theta}\tr(\mathbf B_M).
\]
Finally,
\[
\tr(\mathbf B_M)
=\sum_a\frac{\Mb_{aa}}{(\mathbf D_L)_{aa}}
=\sum_a\frac{\Mb_{aa}}{(\Jb_L)_{aa}},
\]
which is the claimed comparison.
\end{proof}

\section{Whitened geometry and proofs of the finite-source bounds}\label{app:vis}
This appendix expresses the loss sensitivity and the efficient Fisher
information in a common whitened basis. The resulting identities prove
the finite-source dimension bound, the per-source ratios, the high-INR limit,
and the look-separation bound.

Write
\[
\Sb=\Rb\inv,\qquad
\zeta=\ba_0\herm\Sb\ba_0=P_\star^{-1},\qquad
\mathbf u=\frac{\Rb^{-1/2}\ba_0}{\sqrt\zeta},
\]
so $\|\mathbf u\|=1$, and define
\[
\mathbf P_{\mathbf u}^{\perp}=\Ib-\mathbf u\mathbf u\herm,\qquad
\mathbf E_a=\Rb^{-1/2}(\partial_a\Rb)\Rb^{-1/2}.
\]

\emph{Whitened sensitivity and Fisher.}
Let $\fb=\Sb\ba_0$, so that $\wopt=\fb/\zeta$. Since
\[
\begin{aligned}
\partial_a\Sb&=-\Sb(\partial_a\Rb)\Sb,\\
\partial_a\zeta
&=-\ba_0\herm\Sb(\partial_a\Rb)\Sb\ba_0\\
&=-\zeta\,\mathbf u\herm\mathbf E_a\mathbf u,
\end{aligned}
\]
differentiation gives
\[
\begin{aligned}
\Rb^{1/2}\partial_a\wopt
&=-\frac1\zeta
  \Rb^{-1/2}(\partial_a\Rb)\Sb\ba_0
  -\frac{\partial_a\zeta}{\zeta^2}
  \Rb^{-1/2}\ba_0\\
&=-\frac1{\sqrt\zeta}
  \{\mathbf E_a\mathbf u-(\mathbf u\herm\mathbf E_a\mathbf u)\mathbf u\}\\
&=-\frac1{\sqrt\zeta}\mathbf P_{\mathbf u}^{\perp}\mathbf E_a\mathbf u.
\end{aligned}
\]
The projection is the differential form of the distortionless identity
$\wopt\herm\ba_0=1$. Dividing the $\Rb$-inner product of two derivatives
by $P_\star=1/\zeta$ yields
\begin{equation}\label{eq:Mwhite}
\Mb_{ab}
=\operatorname{Re}\big[
(\mathbf P_{\mathbf u}^{\perp}\mathbf E_a\mathbf u)\herm
(\mathbf P_{\mathbf u}^{\perp}\mathbf E_b\mathbf u)
\big].
\end{equation}

The Fisher block is the Frobenius Gram
$\Jb_{ab}=\tr(\mathbf E_a\mathbf E_b)$. Since
$\partial_\gamma\Rb=\Rb$, the whitened scale score is
$\mathbf E_\gamma=\Ib$, so
\[
\Jb_{a\gamma}=\tr(\mathbf E_a),\qquad
\Jb_{\gamma\gamma}=N.
\]
Profiling this one-dimensional score gives
\begin{equation}\label{eq:Jwhite}
(\Jb_L)_{ab}
=\tr(\mathbf E_a\mathbf E_b)
 -\frac1N\tr(\mathbf E_a)\tr(\mathbf E_b)
=\tr(\mathbf E_a^\circ\mathbf E_b^\circ),
\end{equation}
where
\[
\mathbf E_a^\circ
=\mathbf E_a-\frac{\tr(\mathbf E_a)}{N}\Ib.
\]

\emph{Proof of Proposition~\ref{prop:visbound}.}
For any Hermitian $\Xb$ and unit vector $\mathbf u$, choose a unitary
matrix $\mathbf Q$ whose first column is $\mathbf u$ and write
\[
\mathbf Q\herm\Xb\mathbf Q
=\begin{bmatrix}\alpha&\mathbf b\herm\\
\mathbf b&\mathbf B\end{bmatrix},
\qquad \alpha\in\R,\quad\mathbf B=\mathbf B\herm.
\]
The task projection keeps exactly the off-diagonal column:
\[
\|\mathbf P_{\mathbf u}^{\perp}\Xb\mathbf u\|^2=\|\mathbf b\|^2,
\qquad
\|\Xb\|_F^2=\alpha^2+2\|\mathbf b\|^2+\|\mathbf B\|_F^2.
\]
Consequently,
\begin{equation}\label{eq:hermitian-task-bound}
\|\mathbf P_{\mathbf u}^{\perp}\Xb\mathbf u\|^2
\le\frac12\|\Xb\|_F^2.
\end{equation}
The factor $1/2$ follows from the two conjugate off-diagonal blocks and
does not require $\tr\Xb=0$.

Since $\mathbf P_{\mathbf u}^{\perp}\Ib\mathbf u=0$, replacing
$\mathbf E_a$ by $\mathbf E_a^\circ$ leaves \eqref{eq:Mwhite} unchanged.
For a real vector $\mathbf z$, put
$\mathbf E_{\mathbf z}^\circ=\sum_a z_a\mathbf E_a^\circ$. Equations
\eqref{eq:Mwhite}, \eqref{eq:Jwhite}, and
\eqref{eq:hermitian-task-bound} give
\[
\begin{aligned}
\mathbf z^\top\Mb\mathbf z
&=\|\mathbf P_{\mathbf u}^{\perp}\mathbf E_{\mathbf z}^\circ\mathbf u\|^2\\
&\le\frac12\|\mathbf E_{\mathbf z}^\circ\|_F^2
=\frac12\mathbf z^\top\Jb_L\mathbf z.
\end{aligned}
\]
Thus $\bm0\preceq\Mb\preceq\Jb_L/2$. Congruence by
$\Jb_L^{-1/2}$ yields $0\le\lambda_s\le1/2$, and summing the $2K$
eigenvalues gives $C_\theta\le K$.

\emph{Positivity of the constant.}
At every interior $\tha\in\Theta_{\rm reg}$,
$\Jb_L\succ0$ by Lemma~\ref{lem:reg}(ii), and $\Mb\succeq0$ by
\eqref{eq:Mwhite}. If
$C_\theta=\tr(\Mb\Jb_L\inv)=0$, then
$\Jb_L^{-1/2}\Mb\Jb_L^{-1/2}$ is positive semidefinite with zero trace, hence
$\Mb=\bm0$. Equation \eqref{eq:Mwhite} then gives
\[
\mathbf P_{\mathbf u}^{\perp}\mathbf E_a\mathbf u=\bm0
\]
for every loss-relevant coordinate.

Let $\fb=\Sb\ba_0$. For the relative-power coordinate,
\[
\mathbf P_{\mathbf u}^{\perp}\mathbf E_{\beta_j}\mathbf u
=p_j(\at_j\herm\mathbf u)\mathbf P_{\mathbf u}^{\perp}\at_j.
\]
If $\phi_j=\theta_0$, then $\at_j=\sqrt\zeta\,\mathbf u$, whereas the angle
coordinate gives
\[
\mathbf P_{\mathbf u}^{\perp}\mathbf E_{\phi_j}\mathbf u
=p_j\sqrt\zeta\,\mathbf P_{\mathbf u}^{\perp}\qt_j\ne\bm0,
\]
because $\q_j\ne\bm0$ and $\q_j\herm\ba_j=0$. Hence $\phi_j\ne\theta_0$.
Distinct ULA steering nodes are not proportional, and invertible
whitening preserves this property, so
$\mathbf P_{\mathbf u}^{\perp}\at_j\ne\bm0$. The vanishing power derivative therefore
forces
\[
\ba_j\herm\fb=0.
\]
With this equality, the angle derivative reduces to a nonzero multiple
of $(\q_j\herm\fb)\at_j$, and hence also forces
$\q_j\herm\fb=0$.

Since $\Rb\fb=\ba_0$ and $\ba_k\herm\fb=0$ for every $k$,
\[
\fb=\frac{\ba_0}{\sigma^2}.
\]
Thus $\ba_j\herm\ba_0=0$ and
$\dot\ba_j\herm\ba_0=0$ for every $j$. Under symmetric indexing,
$\q_j=\dot\ba_j$, while
\[
\ba(\phi)\herm\ba_0=D_N(\theta_0-\phi),
\qquad
D_N(t)=\frac{\sin(N\pi t)}{N\sin(\pi t)}.
\]
Every zero of $D_N$ is simple: at such a zero,
\[
D_N'(t)=\frac{\pi\cos(N\pi t)}{\sin(\pi t)}\ne0.
\]
The two orthogonality conditions cannot therefore hold simultaneously.
This contradiction proves $C_\theta>0$.

\emph{Angle sensitivity.}
Fix source $j$ and put
\[
\begin{gathered}
\at_0=\Rb^{-1/2}\ba_0,\qquad
c_{1j}=\at_j\herm\at_0,\\
c_{2j}=\qt_j\herm\at_0,\qquad
z_j=\qt_j\herm\at_j,\qquad
\mathbf r_j=c_{1j}\qt_j+c_{2j}\at_j.
\end{gathered}
\]
Since $\|\at_0\|^2=\zeta$, the whitened derivative specializes to
\[
\Rb^{1/2}\partial_{\phi_j}\wopt
=-\frac{p_j}{\zeta}
\mathbf P_{\at_0}^{\perp}\mathbf r_j.
\]
The projected norm is
\[
\begin{aligned}
\left\|\mathbf P_{\at_0}^{\perp}\mathbf r_j\right\|^2
&=
|c_{1j}|^2\|\qt_j\|^2
+|c_{2j}|^2\|\at_j\|^2\\
&\quad
+2\operatorname{Re}(\overline{c_{1j}}c_{2j}z_j)
-\frac{4\{\operatorname{Re}(c_{1j}c_{2j}^*)\}^2}{\zeta}.
\end{aligned}
\]
The final term is the distortionless-projection subtraction. By
\eqref{eq:cosines},
\[
|c_{2j}|^2
=\kappa_j\|\qt_j\|^2\zeta,\qquad
|c_{1j}|^2
=\chi_j\|\at_j\|^2\zeta.
\]
Consequently,
\[
\Mb_{\phi_j\phi_j}
=\frac{p_j^2}{\zeta}
\left\|
\mathbf P_{\at_0}^{\perp}(c_{1j}\qt_j+c_{2j}\at_j)
\right\|^2,
\]
whereas
\[
\Jb_{\phi_j\phi_j}
=2p_j^2
\left\{
\|\qt_j\|^2\|\at_j\|^2
+\operatorname{Re}(z_j^2)
\right\}.
\]
This exhibits the cancellation of the common power and aperture factors.
When $K=1$, Lemma~\ref{lem:a1} gives
$z_j=0$ and $\Jb_{\phi\gamma}=0$. Moreover, the Sherman--Morrison formula and $\q_1\herm\ba_1=0$ give
\[
c_1=\frac{\ba_1\herm\ba_0}{\sigma^2+p_1},
\qquad
c_2=\frac{\q_1\herm\ba_0}{\sigma^2}.
\]
Under the centered ULA convention,
$\ba_1\herm\ba_0=D_N(\theta_0-\phi_1)$ and
$\q_1\herm\ba_0=-D_N'(\theta_0-\phi_1)$ are real. Hence
$\{\operatorname{Re}(c_1c_2^*)\}^2=|c_1|^2|c_2|^2$ and
\[
\frac{\Mb_{\phi\phi}}{(\Jb_L)_{\phi\phi}}
=\frac12(\kappa+\chi)-2\kappa\chi.
\]
For $K\ge2$, the same projected-norm identity, together with the
profiled diagonal
$(\Jb_L)_{\phi_j\phi_j}
=\Jb_{\phi_j\phi_j}(1-v_{\phi_j}^2)$, gives the corresponding
per-source angle ratio.

\emph{Relative-power sensitivity.}
Let
\[
c_{0j}=\ba_j\herm\Sb\ba_0.
\]
At fixed $\gamma$, $\partial_{\beta_j}\Rb=p_j\ba_j\ba_j\herm$, and
direct differentiation gives
\[
\partial_{\beta_j}\wopt
=-\frac{p_jc_{0j}}{\zeta}
\left(\Sb\ba_j-\wopt c_{0j}^*\right).
\]
Using $\Rb\wopt=\ba_0/\zeta$,
$g_j=\ba_j\herm\Sb\ba_j$, and
$\ba_j\herm\Sb\ba_0=c_{0j}$,
\[
\left\|\Sb\ba_j-c_{0j}^*\wopt\right\|_{\Rb}^2
=g_j-\frac{|c_{0j}|^2}{\zeta}.
\]
Since
$\chi_j=|c_{0j}|^2/(g_j\zeta)$,
\[
\Mb_{\beta_j\beta_j}
=p_j^2g_j^2\chi_j(1-\chi_j).
\]
On the Fisher side,
\[
\Jb_{\beta_j\beta_j}=p_j^2g_j^2,\qquad
\Jb_{\beta_j\gamma}=p_jg_j,\qquad
\Jb_{\gamma\gamma}=N.
\]
Therefore
\[
(\Jb_L)_{\beta_j\beta_j}
=p_j^2g_j^2\left(1-\frac1N\right),
\]
and, for every $K$,
\[
\frac{\Mb_{\beta_j\beta_j}}
     {(\Jb_L)_{\beta_j\beta_j}}
=\frac{\chi_j(1-\chi_j)}{1-1/N}.
\]
Together with the trace comparison in Lemma~\ref{lem:diagdom}, these
identities give the per-source accounting of
Corollary~\ref{cor:persource}.

\emph{Proof of Proposition~\ref{prop:highinr}.}
Let
\[
\begin{gathered}
p_j=\tau\nu_j,\qquad
\nu_j\in[\nu_-,\nu_+],\\
\mathbf D_\nu=\operatorname{diag}(\nu_1,\ldots,\nu_K),
\qquad
\boldsymbol\Gamma=\Ab\herm\Ab.
\end{gathered}
\]
The powers here follow the separate high-INR asymptotic regime stated in the
proposition and are not restricted by the compact power range in (A1).
All remainders below are uniform over
$\nu_j\in[\nu_-,\nu_+]$, with the array geometry, $N$, $K$, and
$\sigma^2$ fixed.

Because $\Ab$ has full column rank, $\boldsymbol\Gamma\succ0$. Woodbury and a
first-order inverse expansion give
\[
\begin{aligned}
\Sb_\tau
&=\left(\sigma^2\Ib+\tau\Ab\mathbf D_\nu\Ab\herm\right)^{-1}\\
&=\sigma^{-2}\mathbf P_{\Ab}^{\perp}
 +\tau^{-1}
  \Ab\boldsymbol\Gamma^{-1}\mathbf D_\nu^{-1}
  \boldsymbol\Gamma^{-1}\Ab\herm
 +O(\tau^{-2}),
\end{aligned}
\]
where
\[
\mathbf P_{\Ab}^{\perp}
=\Ib-\Ab\boldsymbol\Gamma^{-1}\Ab\herm.
\]
Since $\mathbf P_{\Ab}^{\perp}\ba_j=0$ and
$\Ab\herm\ba_j=\boldsymbol\Gamma\mathbf e_j$,
\[
\Sb_\tau\ba_j
=\tau^{-1}\nu_j^{-1}\Ab\boldsymbol\Gamma^{-1}\mathbf e_j
+O(\tau^{-2}),
\]
and hence
\[
\ba_j\herm\Sb_\tau\ba_k
=(\tau\nu_j)^{-1}\delta_{jk}+O(\tau^{-2}),
\qquad
\q_j\herm\Sb_\tau\ba_k=O(\tau^{-1}).
\]
Also,
\[
\Sb_\tau\q_j
=\sigma^{-2}\mathbf P_{\Ab}^{\perp}\q_j+O(\tau^{-1}),
\]
so
\[
\q_j\herm\Sb_\tau\q_k
=\sigma^{-2}
(\mathbf P_{\Ab}^{\perp}\q_j)\herm
(\mathbf P_{\Ab}^{\perp}\q_k)
+O(\tau^{-1}).
\]
The hypothesis $\mathbf P_{\Ab}^{\perp}\ba_0\ne0$ gives
\[
\zeta_\tau
=\ba_0\herm\Sb_\tau\ba_0
\longrightarrow
\zeta_\infty
:=\sigma^{-2}\|\mathbf P_{\Ab}^{\perp}\ba_0\|^2>0.
\]
Thus, with the notation used above,
\[
\begin{aligned}
g_j
&=(\tau\nu_j)^{-1}+O(\tau^{-2}),\\
\|\qt_j\|^2
&=\sigma^{-2}\|\mathbf P_{\Ab}^{\perp}\q_j\|^2+O(\tau^{-1}),\\
z_j&=O(\tau^{-1}),\qquad
c_{1j}=O(\tau^{-1}),\\
c_{2j}
&=\sigma^{-2}
(\mathbf P_{\Ab}^{\perp}\q_j)\herm
(\mathbf P_{\Ab}^{\perp}\ba_0)
+O(\tau^{-1}).
\end{aligned}
\]

Substitution into the angle Fisher diagonal and the projected sensitivity
norm yields
\[
\Jb_{\phi_j\phi_j}
=2\tau\nu_j\sigma^{-2}
\|\mathbf P_{\Ab}^{\perp}\q_j\|^2+O(1)
\]
and
\[
\Mb_{\phi_j\phi_j}
=\tau\,
\frac{\nu_j
\left|
\sigma^{-2}(\mathbf P_{\Ab}^{\perp}\q_j)\herm
(\mathbf P_{\Ab}^{\perp}\ba_0)
\right|^2}
{\zeta_\infty}
+O(1).
\]
Define
\[
\begin{aligned}
\mathbf D_J
&=\operatorname{diag}\!\left(
2\nu_j\sigma^{-2}
\|\mathbf P_{\Ab}^{\perp}\q_j\|^2
\right),\\
\mathbf D_M
&=\operatorname{diag}\!\left(
\nu_j\sigma^{-2}
\frac{
\left|
(\mathbf P_{\Ab}^{\perp}\q_j)\herm
(\mathbf P_{\Ab}^{\perp}\ba_0)
\right|^2}
{\|\mathbf P_{\Ab}^{\perp}\ba_0\|^2}
\right).
\end{aligned}
\]
The assumption $\mathbf P_{\Ab}^{\perp}\q_j\ne0$ makes
$\mathbf D_J\succ0$.

For $j\ne k$,
\[
\at_j\herm\at_k=O(\tau^{-2}),\qquad
\qt_j\herm\at_k=O(\tau^{-1}),\qquad
\qt_j\herm\qt_k=O(1).
\]
The four-term Fisher expansion in Lemma~\ref{lem:diagdom} then shows
that every off-diagonal angle entry is $O(1)$ after multiplication by
$p_jp_k\asymp\tau^2$. The same order holds for the off-diagonal
angle sensitivity. For the vector $\mathbf r_j$ defined above, each of
the four terms in $\mathbf r_j\herm\mathbf r_k$ is
$O(\tau^{-2})$, and the rank-one subtraction induced by
$\mathbf P_{\at_0}^{\perp}$ has the same order. Multiplication by
$p_jp_k/\zeta_\tau$ therefore leaves $O(1)$.
Moreover,
$\Jb_{\phi_j\gamma}
=2p_j\operatorname{Re}(\qt_j\herm\at_j)=O(1)$, so profiling the global
scale changes the angle block only by $O(1)$ and does not alter its
$\tau\mathbf D_J$ leading term. The angle--power Fisher block is also
$O(1)$ by the same rank-one trace expansions.

For the relative-power block,
$p_jg_j=1+O(\tau^{-1})$, and
\[
\chi_j
=\frac{|c_{0j}|^2}{g_j\zeta_\tau}
=O(\tau^{-1}).
\]
Hence
$\Mb_{\beta\beta}=O(\tau^{-1})$, while the positive-semidefinite Cauchy--Schwarz inequality gives
$\Mb_{\phi\beta}=O(1)$. The Fisher power block satisfies
\[
[\Jb_L]_{\beta\beta}
=\Ib_K-\frac1N\bm1\bm1^\top+O(\tau^{-1}).
\]
Indeed,
\[
\begin{aligned}
\Jb_{\beta_j\beta_j}
&=p_j^2g_j^2=1+O(\tau^{-1}),\\
\Jb_{\beta_j\beta_k}
&=p_jp_k|\ba_j\herm\Sb_\tau\ba_k|^2
=O(\tau^{-2}),\qquad j\ne k,\\
\Jb_{\beta_j\gamma}
&=p_jg_j=1+O(\tau^{-1}).
\end{aligned}
\]
These entries give the displayed scale-profiled block.
The condition $\mathbf P_{\Ab}^{\perp}\ba_0\ne0$ forces $K<N$, so the limiting
matrix is positive definite, with minimum eigenvalue $1-K/N$.
Collecting the block orders,
\[
\Jb_L
=
\begin{bmatrix}
\tau\mathbf D_J+O(1)&O(1)\\
O(1)&\Ib_K-\frac1N\bm1\bm1^\top+O(\tau^{-1})
\end{bmatrix}.
\]
Similarly,
\[
\Mb
=
\begin{bmatrix}
\tau\mathbf D_M+O(1)&O(1)\\
O(1)&O(\tau^{-1})
\end{bmatrix}.
\]

Block inversion now gives
\[
\begin{aligned}
{[\Jb_L\inv]}_{\phi\phi}
&=\tau^{-1}\mathbf D_J^{-1}
  \{\Ib+O(\tau^{-1})\},\\
{[\Jb_L\inv]}_{\phi\beta}
&=O(\tau^{-1}),\qquad
{[\Jb_L\inv]}_{\beta\beta}=O(1).
\end{aligned}
\]
Therefore the power and cross contributions to
$\tr(\Mb\Jb_L\inv)$ are $O(\tau^{-1})$, while
\[
\tr\!\left(
\Mb_{\phi\phi}{[\Jb_L\inv]}_{\phi\phi}
\right)
=\tr(\mathbf D_M\mathbf D_J^{-1})+O(\tau^{-1}).
\]
Finally,
\[
\begin{aligned}
\tr(\mathbf D_M\mathbf D_J^{-1})
&=\frac12\sum_{j=1}^K
\frac{
\left|
(\mathbf P_{\Ab}^{\perp}\q_j)\herm
(\mathbf P_{\Ab}^{\perp}\ba_0)
\right|^2}
{\|\mathbf P_{\Ab}^{\perp}\q_j\|^2
 \|\mathbf P_{\Ab}^{\perp}\ba_0\|^2}\\
&=\frac12\sum_{j=1}^K\kappa_{j,\infty},
\end{aligned}
\]
which proves Proposition~\ref{prop:highinr}.
\hfill$\square$

\emph{Proof of Corollary~\ref{cor:ceiling}.}
By the look-separation clause of Lemma~\ref{lem:coh},
\[
|c_{1j}|
\le C\rho_\ell\|\at_j\|\,\|\at_0\|,
\qquad
|c_{2j}|
\le C\rho_\ell\|\qt_j\|\,\|\at_0\|.
\]
Thus
\[
\chi_j\le C^2\rho_\ell^2,\qquad
\kappa_j\le C^2\rho_\ell^2.
\]
Dropping the nonnegative projection subtraction in the exact angle
formula and using $\|\mathbf x+\mathbf y\|^2
\le2\|\mathbf x\|^2+2\|\mathbf y\|^2$ gives
\[
\Mb_{\phi_j\phi_j}
\le
2p_j^2\|\qt_j\|^2\|\at_j\|^2
(\kappa_j+\chi_j).
\]
From the profiled diagonal in Lemma~\ref{lem:diagdom},
$1-v_{\phi_j}^2\ge2/3$, while the fixed choice of $C_0$ gives
$1+O(\rho_{\mathrm{off}}^2)\ge3/4$. Hence
\[
(\Jb_L)_{\phi_j\phi_j}
\ge
p_j^2\|\qt_j\|^2\|\at_j\|^2,
\]
and therefore
\[
\frac{\Mb_{\phi_j\phi_j}}
     {(\Jb_L)_{\phi_j\phi_j}}
\le2(\kappa_j+\chi_j)
\le4C^2\rho_\ell^2.
\]
The exact power ratio gives, since $N\ge2$,
\[
\frac{\Mb_{\beta_j\beta_j}}
     {(\Jb_L)_{\beta_j\beta_j}}
=\frac{\chi_j(1-\chi_j)}{1-1/N}
\le2\chi_j
\le2C^2\rho_\ell^2.
\]
Summing the $2K$ diagonal ratios gives at most
$6C^2K\rho_\ell^2$. Lemma~\ref{lem:diagdom}, with the established bound
$\varepsilon_\theta\le1/2$, converts this diagonal sum to the trace:
\[
C_\theta
=\tr(\Mb\Jb_L\inv)
\le12C^2K\rho_\ell^2
=3C^2\frac{K}{\alpha_\ell^2}.
\]
Taking $C_v=3C^2$ proves the corollary.
\hfill$\square$

\section{Pointwise and uniform regularity}\label{app:reg}
This appendix distinguishes three levels of regularity. Lemma~\ref{lem:reg} holds at every regular scene. Lemma~\ref{lem:compactreg} supplies the fixed-dimensional compact-set bounds needed by the attaining construction. Lemma~\ref{lem:unifreg} adds dimension-uniform conditioning under Assumption~\ref{ass:all}.

\begin{lemma}[Pointwise regularity]\label{lem:reg}
At every interior $\tha\in\Theta_{\rm reg}$: (i) the covariance map is real-analytic and positive definite; (ii) the family $\{\mathcal{CN}(\bm0,\Rb(\tha))\}$ is quadratic-mean differentiable with continuous Fisher information
\[
\Jb_{ab}=\tr(\Rb^{-1}\partial_a\Rb\,\Rb^{-1}\partial_b\Rb),
\]
and $\Jb\succ0$ and $\Jb_L\succ0$; (iii) the $n$-snapshot experiment is locally asymptotically normal at rate $\sqrt n$, with the efficient score for $\thL$ given below; and (iv) the MVDR functional $\psi(\tha)=\wopt(\Rb(\tha))$ is real-analytic and satisfies $\partial_\gamma\psi=0$.
\end{lemma}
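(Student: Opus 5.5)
Items (i), (ii) smoothness, (iii) and (iv) follow from general facts about Gaussian covariance families once positive definiteness is in hand; the real content is the nonsingularity $\Jb\succ0$ (and hence $\Jb_L\succ0$). I will prove them in the order (i), (iv), (ii)–(iii), with the rank argument last.

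For (i), each entry of $\ba(\phi)$ is an exponential in $\phi$, $p_j=\sigma^2e^{\beta_j}$ and $\sigma^2=e^\gamma$, so $\Rb(\tha)$ is real-analytic. It is positive definite because $\Rb\succeq\sigma^2\Ib\succ0$.

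For (iv), matrix inversion and $\Rb\mapsto\ba_0\herm\Rb^{-1}\ba_0>0$ are analytic on $\mathbb H_{++}^N$, so $\psi$ is analytic. Moreover $\Rb(\bm\phi,\bm\beta,\gamma)=e^\gamma\Rb(\bm\phi,\bm\beta,0)$, and $\wopt$ is invariant under positive scaling of $\Rb$, which gives $\partial_\gamma\psi=0$.

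For (ii), I will use that the zero-mean $\mathcal{CN}$ density is a smooth function of $\Rb$ on a neighborhood where the eigenvalues are bounded away from $0$ and $\infty$. Its score $\dot\ell_a$, displayed earlier, has finite second moments, and these moments are continuous in $\tha$. The standard criterion (continuously differentiable root density with continuous Fisher information, van der Vaart Lemma 7.6) then gives quadratic-mean differentiability. The Slepian–Bangs form follows from the Gaussian fourth-moment identity. Statement (iii) is then LAN via van der Vaart Thm. 7.2. The efficient score for $\thL$ is $\dot\ell_{\thL}-\Jb_{(\phi\beta)\gamma}\Jb_{\gamma\gamma}^{-1}\dot\ell_\gamma$, and its covariance is $\Jb_L$.

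\textbf{Main obstacle: $\Jb\succ0$.} In whitened form, $\mathbf z^\top\Jb\mathbf z=\|\sum_a z_a\mathbf E_a\|_F^2$. So $\Jb$ is singular exactly when some nontrivial real combination of the derivatives vanishes:
\[
\sum_a z_a\partial_a\Rb=\bm0,
\]
that is,
\[
\sum_j\big[z_{\phi_j}p_j(\dot\ba_j\ba_j\herm+\ba_j\dot\ba_j\herm)+z'_{j}\ba_j\ba_j\herm\big]+z_0\Ib=\bm0
\]
(in the $(p,\sigma^2)$ coordinates, which is equivalent to the log chart by an invertible linear change). The matrix is Hermitian Toeplitz for the ULA, with entries indexed by lags $|l|\le N-1$. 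Taking entries gives, for every lag $l$,
\[
\sum_j e^{i2\pi l\phi_j}\big(z'_j+i2\pi l\,z_{\phi_j}p_j\big)+z_0N\delta_{l0}=0.
\]
The lags $l=1,\dots,2K$ are available because $2K+1\le N$. On them the system is a confluent Vandermonde system in the distinct nodes $e^{i2\pi\phi_j}$ (values and derivative multipliers), with $2K$ unknowns and $2K$ equations, and it is nonsingular. Hence $z'_j=0$ and $z_{\phi_j}=0$, and then the $l=0$ equation gives $z_0=0$.

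The care needed is that the nodes $e^{i2\pi\phi_j}$ are distinct. This holds because $\Phi_{\rm op}$ has length less than $1$ and the $\phi_j$ are distinct. Then $\Jb_L\succ0$ is immediate, since it is the Schur complement of $\Jb_{\gamma\gamma}$ in the positive definite matrix $\Jb$. Continuity of $\Jb$ in $\tha$ is inherited from analyticity.
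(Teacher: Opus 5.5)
Your proposal is correct. Parts (i), (iii), (iv) and the quadratic-mean differentiability in (ii) follow the paper. The one small difference there is how Lemma~7.6 is verified: the paper uses a square-integrable envelope $C(1+\|\x\|^2)e^{-c\|\x\|^2}$ for the derivative of $\sqrt{p_\vartheta}$, while you use continuity of the Fisher information, which is the hypothesis as that lemma is stated. Either works.

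The genuine difference is how you prove $\Jb\succ0$. The paper (Lemma~\ref{lem:pointwise}) argues in whitened sensor space:
\begin{itemize}
\item confluent Vandermonde independence of the $N$-vectors $\{\ba_j,\dot\ba_j\}$ gives a $2K$-dimensional span;
\item $2K+1\le N$ supplies a unit vector orthogonal to that span, which annihilates every source term and forces the identity coefficient to vanish;
\item a dual basis $\{\mathbf u_j,\mathbf w_j\}$ of the span then isolates each power and angle coefficient.
\end{itemize}
You instead use the Toeplitz structure of the unwhitened derivatives $\partial_a\Rb$ and read the relation lag by lag. The identity coefficient then appears only at lag $0$, and the source coefficients are fixed by a $2K\times2K$ exponential-sum system at lags $1,\ldots,2K$.

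Your route is more elementary, with no whitening and no dual basis, and it shows exactly where $2K+1\le N$ enters: you need $2K$ positive lags. It does, however, depend on the shift invariance of the ULA. The paper's argument needs only linear independence of the steering and tangent vectors plus one spare dimension. It therefore carries over unchanged to the other calibrated arrays mentioned in the Discussion, provided that independence holds.

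One step deserves to be written out. With $x_j=e^{i2\pi\phi_j}$, your columns are $(x_j^l)_{l=1}^{2K}$ and $(l\,x_j^l)_{l=1}^{2K}$. These are not literally the confluent Vandermonde columns $(x_j^k)_{k=0}^{2K-1}$ and $(k\,x_j^{k-1})_{k=0}^{2K-1}$. They are obtained from them by an invertible $2\times2$ recombination with determinant $x_j^3\ne0$, using $(k+1)x_j^{k+1}=x_j^2\,k\,x_j^{k-1}+x_j\,x_j^k$. The matrix is therefore nonsingular over $\Cx$ because the nodes are distinct, and in particular your real unknowns $z'_j$ and $z_{\phi_j}$ (using $p_j>0$) vanish.
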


\begin{proof}
\emph{(i) Covariance smoothness.}
Every entry of $\ba(\phi)$ is an exponential in $\phi$. Hence $\tha\mapsto\Rb(\tha)$ is real-analytic in the ordered chart. Since $\sigma^2>0$, $\Rb(\tha)\succ0$ at every point of $\Theta_{\rm reg}$, and matrix inversion is real-analytic in a neighborhood of the point.

\emph{(ii) Quadratic-mean differentiability and the information identity.}
Put $\Sb=\Rb\inv$. With respect to Lebesgue measure on $\Cx^N$, one
snapshot has density
\[
    p_\tha(\x)
    =
    \pi^{-N}\det(\Rb)\inv
    \exp(-\x\herm\Sb\x).
\]
Differentiation in a real parameter coordinate $a$ gives the centered score
\[
    \dot\ell_a(\x;\tha)
    =
    \x\herm\Sb(\partial_a\Rb)\Sb\x
    -
    \tr(\Sb\,\partial_a\Rb).
\]
Indeed, $\E_\tha[\x\herm\mathbf C\x]=\tr(\mathbf C\Rb)$, so
$\E_\tha\dot\ell_a=0$. For Hermitian $\mathbf X_1,\mathbf X_2$, the
circular complex-Gaussian fourth-moment identity \cite{Reed1962} reads
\[
    \operatorname{Cov}_\tha
    (\x\herm\mathbf X_1\x,\x\herm\mathbf X_2\x)
    =
    \tr(\Rb\mathbf X_1\Rb\mathbf X_2).
\]
Taking $\mathbf X_a=\Sb(\partial_a\Rb)\Sb$ and using
$\Rb\Sb=\Ib$ on both sides of the product gives
\[
    \operatorname{Cov}_\tha(\dot\ell_a,\dot\ell_b)
    =
    \tr(\Sb\,\partial_a\Rb\,\Sb\,\partial_b\Rb)
    =
    \Jb_{ab}(\tha).
\]
This is the Slepian--Bangs formula in the real parameter chart.

Fix an interior $\tha$. There is an open neighborhood $U$ on which all
powers and the noise variance remain positive and the angle representatives
remain ordered and distinct. On $U$, $\sqrt{p_\vartheta(\x)}$ is continuously differentiable in $\vartheta$ for every $\x$. After shrinking $U$ if necessary, the covariance eigenvalues are bounded above and away from zero. The derivative of $\sqrt{p_\vartheta(\x)}$ is then dominated uniformly on $U$ by $C(1+\|\x\|^2)e^{-c\|\x\|^2}$ for some $c,C>0$. This envelope is square integrable. The quadratic-mean differentiability criterion in \cite[Lem.~7.6]{vanderVaart1998} therefore applies at $\tha$, with score $\dot\ell$ and information $\Jb$.

The following algebraic lemma gives the required pointwise nonsingularity without a quantitative separation condition.

\begin{lemma}[Pointwise nonsingularity without separation]\label{lem:pointwise}
At any interior $\tha$ with distinct (not necessarily separated) angles and $2K+1\le N$, the full Fisher
matrix is nonsingular, $\Jb\succ0$, and $\Jb_L\succ0$ as a Schur complement.
\end{lemma}
\begin{proof}
Whiten the steering and tangent vectors:
\[
    \at_j=\Rb^{-1/2}\ba_j,
    \qquad
    \qt_j=\Rb^{-1/2}\q_j.
\]
By the Slepian--Bangs identity, $\Jb$ is the real Gram matrix, under
$\langle\Xb,\Yb\rangle=\tr(\Xb\Yb)$, of the $2K+1$ Hermitian directions
\[
    p_j(\qt_j\at_j\herm+\at_j\qt_j\herm),
    \qquad
    p_j\at_j\at_j\herm,
    \qquad
    \Ib.
\]
It is thus enough to prove their real linear independence. Suppose
\[
    \mathbf M
    =
    \sum_{j=1}^Kc_jp_j
    (\qt_j\at_j\herm+\at_j\qt_j\herm)
    +
    \sum_{j=1}^Kd_jp_j\at_j\at_j\herm
    +
    e\Ib
    =
    \bm0
\]
for real coefficients $c_j,d_j,e$.

Under the centered array convention, $\q_j=\dot\ba_j$. The $2K$ vectors
$\{\ba_j,\dot\ba_j\}_{j=1}^K$ form a confluent Vandermonde system at $K$
distinct nodes, and hence have rank $2K$ whenever $2K\le N$. Invertible
whitening preserves this rank, so
$\{\at_j,\qt_j\}_{j=1}^K$ are linearly independent. Because
$2K+1\le N$, choose a unit vector $\mathbf z$ orthogonal to their span.
Then
\[
    0=\mathbf z\herm\mathbf M\mathbf z=e,
\]
so the identity component vanishes.

Let $\{\mathbf u_j,\mathbf w_j\}_{j=1}^K$ be the dual basis in
$\operatorname{span}\{\at_j,\qt_j:j\le K\}$, chosen so that
\[
    \mathbf u_j\herm\at_k=\delta_{jk},
    \quad
    \mathbf u_j\herm\qt_k=0,
    \quad
    \mathbf w_j\herm\at_k=0,
    \quad
    \mathbf w_j\herm\qt_k=\delta_{jk}.
\]
Testing the remaining matrix identity on the two pairs
$(\mathbf u_j,\mathbf u_j)$ and $(\mathbf u_j,\mathbf w_j)$ gives
\[
    0=\mathbf u_j\herm\mathbf M\mathbf u_j=p_jd_j,
    \qquad
    0=\mathbf u_j\herm\mathbf M\mathbf w_j=p_jc_j.
\]
Since $p_j>0$, all $c_j$ and $d_j$ vanish. Thus the score directions are
linearly independent, and their Gram matrix satisfies $\Jb\succ0$.
Positive definiteness of the Schur complement $\Jb_L$ follows from the
positive definiteness of the full block matrix.
\end{proof}

\emph{(iii) Local asymptotic normality and the efficient score.}
Quadratic-mean differentiability from part (ii), together with i.i.d.\ sampling, gives local asymptotic normality at rate
$\sqrt n$ \cite[Thm.~7.2]{vanderVaart1998}. For each fixed local shift
$\bm h$,
\[
    \log
    \frac{dP_{\tha+\bm h/\sqrt n}^{\otimes n}}
         {dP_\tha^{\otimes n}}
    =
    \bm h^\top\Delta_{n,\tha}
    -
    \frac12\bm h^\top\Jb(\tha)\bm h
    +
    o_{P_\tha}(1).
\]
Here
\[
    \Delta_{n,\tha}
    =
    \frac1{\sqrt n}\sum_{\ell=1}^n\dot\ell(\x_\ell;\tha),
\qquad
    \Delta_{n,\tha}
    \Rightarrow
    \mathcal N(\bm0,\Jb(\tha)).
\]
Partition $\tha=(\thL,\gamma)$. Orthogonal projection of the
$(\bm\phi,\bm\beta)$ score onto the orthocomplement of the nuisance score
gives
\[
    \dot\ell_L
    =
    \dot\ell_{(\phi\beta)}
    -
    \Jb_{(\phi\beta)\gamma}
    \Jb_{\gamma\gamma}\inv
    \dot\ell_\gamma.
\]
Direct covariance calculation yields
\[
    \operatorname{Cov}_\tha(\dot\ell_L)
    =
    \Jb_{(\phi\beta)}
    -
    \Jb_{(\phi\beta)\gamma}
    \Jb_{\gamma\gamma}\inv
    \Jb_{\gamma(\phi\beta)}
    =
    \Jb_L.
\]
The information is nonsingular by Lemma~\ref{lem:pointwise} and
$\Jb_{\gamma\gamma}=N>0$. Hence the locally asymptotically normal experiment has the nonsingular
limit information required by the local asymptotic minimax theorem used in
Appendix~\ref{app:general}.

\emph{(iv) Analyticity and scale invariance of the MVDR map.}
In the log-power chart,
\[
    \Rb(\tha)
    =
    e^\gamma
    \left(
        \Ib+\sum_{j=1}^Ke^{\beta_j}\ba_j\ba_j\herm
    \right)
    \succ0.
\]
The map $\Rb\mapsto\Rb\inv$ is real-analytic on the open positive-definite
cone, and $\ba_0\herm\Rb\inv\ba_0>0$ there. Therefore
\[
    \psi(\tha)
    =
    \frac{\Rb(\tha)\inv\ba_0}
         {\ba_0\herm\Rb(\tha)\inv\ba_0}
\]
is real-analytic on the whole chart. For every $c>0$,
\[
    \wopt(c\Rb)
    =
    \frac{c^{-1}\Rb\inv\ba_0}
         {c^{-1}\ba_0\herm\Rb\inv\ba_0}
    =
    \wopt(\Rb).
\]
Since $\gamma$ enters $\Rb$ only through the factor $e^\gamma$, the last
identity gives $\partial_\gamma\psi=0$.
This proves Lemma~\ref{lem:reg}.
\end{proof}

\begin{lemma}[Fixed-dimensional compact regularity]\label{lem:compactreg}
Fix $N$ and $K$ with $2K+1\le N$, and let $\Theta$ be the compact set
in (\ref{eq:Theta-chart}). At every $\vartheta\in\Theta$,
\begin{equation}\label{eq:compact-covariance}
\sigma_-^2\Ib\preceq\Rb(\vartheta)
\preceq(\sigma_+^2+Kp_+)\Ib.
\end{equation}
The matrices $\Jb$ and $\Jb_L$ are uniformly positive definite on
$\Theta$, and their inverses are bounded and Lipschitz there. The maps
$\Rb$ and $\psi$ have bounded derivatives of every fixed finite order on
$\Theta$. For every $0<q<\infty$ and $r\in\{0,1,2\}$,
\begin{equation}\label{eq:compact-score-moments}
\sup_{\vartheta\in\Theta}\E_\vartheta
\left[\sup_{t\in\Theta}
\|\partial_t^r\dot\ell(\x;t)\|^q\right]<\infty.
\end{equation}
The constants may depend on $\Theta$, $N$, and $K$; no condition of the
form $N\Delta\ge C_0K$ is imposed.
\end{lemma}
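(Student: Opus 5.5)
The plan is a compactness argument: everything is continuous on the compact set $\Theta$ of (\ref{eq:Theta-chart}), so pointwise statements from Lemma~\ref{lem:reg} and Lemma~\ref{lem:pointwise} become uniform. Every point of $\Theta$ lies in $\Theta_{\rm reg}$. The gap constraints give distinct ordered directions, and $\phi_K-\phi_1\le1-\Delta$ prevents wrap-around coincidence on the torus. The power and noise bounds give positivity. The compact polytope $\Theta$ also sits inside an open neighborhood in the real-analytic chart where these properties persist, so derivatives at boundary points make sense.

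\emph{Step 1 (covariance sandwich).} The lower bound in (\ref{eq:compact-covariance}) follows from $\Rb=\sigma^2\Ib+\sum_jp_j\ba_j\ba_j\herm\succeq\sigma^2\Ib\succeq\sigma_-^2\Ib$. The upper bound follows from $\|\ba_j\|=1$ and $p_j\le p_+$, which give $\Rb\preceq(\sigma_+^2+Kp_+)\Ib$. Here $p_j=e^{\beta_j+\gamma}\in[p_-,p_+]$ by the chart constraints.

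\emph{Step 2 (derivatives, Fisher information, and its inverse).} In the log chart, $\Rb(\vartheta)$ is an entire function of the coordinates. Matrix inversion is analytic on the cone where $\Rb\succeq\sigma_-^2\Ib$. Hence $\Rb$, $\Sb=\Rb^{-1}$, and $\psi$ are analytic on a neighborhood of $\Theta$. Their derivatives of each fixed order are therefore continuous, and so bounded on $\Theta$. $\Jb(\vartheta)=[\tr(\Sb\partial_a\Rb\Sb\partial_b\Rb)]$ is continuous. By Lemma~\ref{lem:pointwise}, which needs only distinct angles and $2K+1\le N$, $\Jb(\vartheta)\succ0$ at every point of $\Theta$. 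Its smallest eigenvalue is a continuous positive function on a compact set, so it is bounded below by some $c>0$. The same holds for the Schur complement $\Jb_L$, whose minimum eigenvalue is at least that of $\Jb$. The inverses are then bounded by $1/c$. They are Lipschitz because $\Jb^{-1}(s)-\Jb^{-1}(t)=\Jb^{-1}(s)\{\Jb(t)-\Jb(s)\}\Jb^{-1}(t)$ and $\Jb$ has bounded derivative on the convex set $\Theta$. Convexity lets the mean value theorem apply along segments.

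\emph{Step 3 (score moments).} For $r\in\{0,1,2\}$, $\partial_t^r\dot\ell(\x;t)=\x\herm\mathbf C_r(t)\x-c_r(t)$. Each $\mathbf C_r(t)$ and $c_r(t)$ is a finite sum of products of $\Sb(t)$ and derivatives of $\Rb(t)$ up to order three, all uniformly bounded on $\Theta$ by Step 2. Hence $\sup_{t\in\Theta}\|\partial_t^r\dot\ell(\x;t)\|\le C(1+\|\x\|^2)$ pointwise in $\x$, with deterministic $C$. Under any $\vartheta\in\Theta$, $\x=\Rb(\vartheta)^{1/2}\mathbf g$ with $\mathbf g\sim\mathcal{CN}(\bm0,\Ib)$. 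This gives $\|\x\|^2\le(\sigma_+^2+Kp_+)\|\mathbf g\|^2$, and Gaussian moments of every order $q$ are finite and independent of $\vartheta$. This yields (\ref{eq:compact-score-moments}), and moving the supremum inside costs nothing because the envelope does not depend on $t$.

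The only step needing care is Step 2, specifically that the pointwise nonsingularity holds on the closed set, including its boundary. I would check this by noting that boundary points of $\Theta$ still have $\Delta$-separated directions and positive powers, so Lemma~\ref{lem:pointwise} applies there verbatim. This holds even though those points are not interior points of $\Theta_{\rm reg}$ in the sense of Lemma~\ref{lem:reg}.
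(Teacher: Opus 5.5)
Your proposal is correct and follows essentially the same route as the paper. Both use the covariance sandwich from unit-norm steering vectors, then apply the algebraic nonsingularity argument of Lemma~\ref{lem:pointwise} on all of $\Theta$ (boundary included) and upgrade it to uniform bounds by continuity and compactness, then use analyticity plus convexity for bounded derivatives and Lipschitz inverses, and finally combine a deterministic $C(1+\|\x\|^2)$ envelope for the score derivatives with uniform Gaussian moments. Your explicit remarks that $\lambda_{\min}(\Jb_L)\ge\lambda_{\min}(\Jb)$ and that the resolvent identity gives the Lipschitz bound only spell out steps the paper states more briefly.
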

\begin{proof}
Since each steering vector has unit norm,
$\sum_jp_j\ba_j\ba_j\herm\preceq Kp_+\Ib$, which proves
(\ref{eq:compact-covariance}). The linear-independence argument in
Lemma~\ref{lem:pointwise} applies throughout $\Theta$, including its
boundary: every scene has distinct frequencies and positive powers and
noise variance. Hence $\Jb$ and its Schur complement $\Jb_L$ are
positive definite at every point. Continuity and compactness give
\[
\inf_{\vartheta\in\Theta}\lambda_{\min}(\Jb(\vartheta))>0,
\qquad
\inf_{\vartheta\in\Theta}\lambda_{\min}(\Jb_L(\vartheta))>0.
\]
The covariance and MVDR maps are analytic on a neighborhood of $\Theta$.
Their derivatives of each fixed order are therefore bounded. The same
holds for the information matrices, and the inverse identity
$\partial_a\Jb^{-1}=-\Jb^{-1}(\partial_a\Jb)\Jb^{-1}$ gives bounded
derivatives of their inverses. Convexity of $\Theta$ then gives the
Lipschitz bounds.

Each entry of $\partial_t^r\dot\ell(\x;t)$ is a polynomial of degree
at most two in the real and imaginary parts of $\x$, with coefficients
bounded uniformly in $t\in\Theta$. It is thus bounded in norm by
$C_r(1+\|\x\|^2)$. The covariance bound gives Gaussian moments of
all orders uniformly over $\vartheta\in\Theta$, proving
(\ref{eq:compact-score-moments}).
\end{proof}

\begin{lemma}[Dimension-uniform regularity under separation]\label{lem:unifreg}
Under Assumption~\ref{ass:all}, at every interior $\tha\in\Theta$: (i) the covariance satisfies
\[
\sigma_-^2\Ib\preceq\Rb(\tha)\preceq(\sigma_+^2+p_+S_0)\Ib,
\qquad
S_0=1+\frac{K-1}{2\alpha_0};
\]
(ii) with $\mathbf T=\operatorname{diag}(\|\q_1\|,\ldots,\|\q_K\|,1,\ldots,1)$, the rescaled efficient information $\widetilde\Jb_L=\mathbf T^{-1}\Jb_L\mathbf T^{-1}$ satisfies
\[
c_J\Ib\preceq\widetilde\Jb_L\preceq C_J\Ib,
\]
where $c_J$ and $C_J$ depend only on $(\eta,p_\pm,\sigma_\pm^2)$; and (iii) for $\widetilde\Mb=\mathbf T^{-1}\Mb\mathbf T^{-1}$,
\[
\tr(\Mb\Jb_L^{-1})=\tr(\widetilde\Mb\widetilde\Jb_L^{-1})\le K
\]
uniformly in $N$.
\end{lemma}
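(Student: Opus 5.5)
The proof treats the three claims in order and uses the coherence machinery of Appendix~\ref{app:a1}.

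For (i), the lower bound is immediate from $\Rb=\sigma^2\Ib+\Ab\boldsymbol\Lambda_p\Ab\herm\succeq\sigma_-^2\Ib$. For the upper bound we write $\Ab\boldsymbol\Lambda_p\Ab\herm\preceq p_+\Ab\Ab\herm$ and note that $\|\Ab\Ab\herm\|=\|\boldsymbol\Gamma\|$. The Dirichlet bound \eqref{eq:dirichlet} gives $|\Gamma_{jk}|\le1/(2\alpha_0)$ for $j\ne k$, while $\Gamma_{jj}=1$. Gershgorin's theorem then yields $\|\boldsymbol\Gamma\|\le1+(K-1)/(2\alpha_0)=S_0$. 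This proves (i).

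For (ii), we reuse the normalized decomposition from Lemma~\ref{lem:diagdom}, $\Jb_L=\mathbf D_L^{1/2}(\Ib+\mathbf F_L)\mathbf D_L^{1/2}$. That lemma, with the fixed enlargement of $C_0$, gives $\|\mathbf F_L\|\le\varepsilon_\theta\le1/2$, so
\[
\tfrac12\mathbf T^{-1}\mathbf D_L\mathbf T^{-1}\preceq\widetilde\Jb_L\preceq\tfrac32\mathbf T^{-1}\mathbf D_L\mathbf T^{-1}.
\]
It therefore suffices to show that the diagonal entries of $\mathbf T^{-1}\mathbf D_L\mathbf T^{-1}$ are $\asymp1$ with constants depending only on $(\eta,p_\pm,\sigma_\pm^2)$.

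\emph{Angle entries.} We have $(\Jb_L)_{\phi_j\phi_j}=2p_j^2\|\qt_j\|^2\|\at_j\|^2\{1+O(\rho_{\rm off}^2)\}(1-v_{\phi_j}^2)$. Lemma~\ref{lem:coh} gives $\|\at_j\|^2\asymp1$ and $\|\qt_j\|^2\asymp\|\q_j\|^2$. The factors satisfy $1-v_{\phi_j}^2\in[2/3,1]$ and $1+O(\rho_{\rm off}^2)\in[3/4,C]$. Dividing by $\|\q_j\|^2$ leaves a quantity $\asymp p_j^2\asymp1$.

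\emph{Power entries.} Here $(\Jb_L)_{\beta_j\beta_j}=p_j^2g_j^2(1-1/N)$ with $g_j\asymp1$ and $1-1/N\ge1/2$.

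The main obstacle is making sure that every constant hidden in ``$\asymp$'' in Lemma~\ref{lem:coh} is free of $N$ and $\alpha_0$. The $\|\q_j\|\asymp N$ normalization is exactly what $\mathbf T$ absorbs. The remaining constants come from the Loewner sandwich in (i), whose bound $S_0\le1+1/(2C_0)$ is uniform, and from the Woodbury diagonal $B_{jj}\asymp1$, which is already uniform. I would check the $g_j$ lower bound directly: $g_j\ge\|\ba_j\|^2/\lambda_{\max}(\Rb)\ge1/(\sigma_+^2+p_+S_0)$, and $g_j\le\sigma_-^{-2}$.

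For (iii), the identity follows from $\mathbf T$ being diagonal and invertible, since
\[
\tr(\Mb\Jb_L^{-1})=\tr(\mathbf T^{-1}\Mb\mathbf T^{-1}\,\mathbf T\Jb_L^{-1}\mathbf T)=\tr(\widetilde\Mb\widetilde\Jb_L^{-1}).
\]
The bound $\le K$ is Proposition~\ref{prop:visbound}: its proof $\Mb\preceq\tfrac12\Jb_L$ used no dimension-dependent constants, and the inequality is preserved under congruence by $\mathbf T^{-1}$. Hence the eigenvalues of $\widetilde\Jb_L^{-1/2}\widetilde\Mb\widetilde\Jb_L^{-1/2}$ lie in $[0,1/2]$, and their sum over the $2K$ coordinates is at most $K$, uniformly in $N$.
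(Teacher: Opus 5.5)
Your proposal is correct and follows the paper's proof essentially step for step: Gershgorin on the unit-diagonal steering Gram matrix for (i); the factorization $\Jb_L=\mathbf D_L^{1/2}(\Ib+\mathbf F_L)\mathbf D_L^{1/2}$ with $\|\mathbf F_L\|\le\tfrac12$ from Lemma~\ref{lem:diagdom}, together with Rayleigh-quotient bounds from (i) on the $\mathbf T$-rescaled diagonal entries, for (ii); and cyclicity of the trace plus Proposition~\ref{prop:visbound} for (iii). Your extra remark that $\Mb\preceq\tfrac12\Jb_L$ survives congruence by $\mathbf T^{-1}$ is valid but unnecessary, because the trace is already invariant under that congruence.
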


\begin{proof}
\emph{Uniform covariance bounds.}
Under (A1),
\[
\Rb(\tha)\succeq\sigma^2\Ib\succeq\sigma_-^2\Ib.
\]
For $j\ne k$, the Dirichlet-kernel estimate and (A2) give
\[
|\ba_j\herm\ba_k|
\le \frac{1}{2N|\phi_j-\phi_k|_{\mathbb T}}
\le \frac{1}{2\alpha_0}.
\]
The Gram matrix of $\{\ba_j\}_{j=1}^K$ has unit diagonal. Gershgorin's theorem therefore yields
\[
\left\|\sum_{j=1}^K\ba_j\ba_j\herm\right\|
\le1+\frac{K-1}{2\alpha_0}=S_0.
\]
Since $p_j\le p_+$,
\[
\sigma_-^2\Ib
\preceq\Rb(\tha)
\preceq(\sigma_+^2+p_+S_0)\Ib.
\]
In particular, $\|\Rb(\tha)^{-1}\|\le\sigma_-^{-2}$ uniformly under Assumption~\ref{ass:all}.

\emph{Uniform conditioning in the loss-relevant coordinates.}
The Fisher matrix is the Gram matrix of the whitened covariance directions.
For the loss-relevant coordinates these directions are
\[
    p_j(\qt_j\at_j\herm+\at_j\qt_j\herm),
    \qquad
    p_j\at_j\at_j\herm.
\]
Lemma~\ref{lem:diagdom} gives their normalized Gram after profiling the
global-scale direction. Write
\[
    \mathbf D_L=\operatorname{diag}(\Jb_L),
    \qquad
    \widetilde{\mathbf D}_L
    =
    \mathbf T\inv\mathbf D_L\mathbf T\inv.
\]
Because $\mathbf T$ and $\mathbf D_L$ are diagonal, the normalized Gram is
unchanged by the rescaling:
\[
    \mathbf D_L^{-1/2}\Jb_L\mathbf D_L^{-1/2}
    =
    \widetilde{\mathbf D}_L^{-1/2}
    \widetilde\Jb_L
    \widetilde{\mathbf D}_L^{-1/2}.
\]
Consequently, Lemma~\ref{lem:diagdom} gives
\[
    \widetilde\Jb_L
    =
    \widetilde{\mathbf D}_L^{1/2}
    (\Ib+\mathbf F_L)
    \widetilde{\mathbf D}_L^{1/2},
    \qquad
    \|\mathbf F_L\|
    \le
    \frac{K-1}{N-1}
    +
    C_\varepsilon K\rho_{\mathrm{off}}.
\]
The two terms on the right have different origins and are controlled
separately. From (A3),
\[
    2K+1\le(1-\eta)N
    \quad\Longrightarrow\quad
    \frac{K-1}{N-1}<\frac{1-\eta}{2}.
\]
Moreover, (A2) gives $\alpha_0\ge C_0K$ and
$\rho_{\mathrm{off}}\le C/\alpha_0$. The choice of $C_0$ at the head of
Appendix~\ref{app:a1} ensures
\[
    C_\varepsilon K\rho_{\mathrm{off}}\le\frac{\eta}{2}.
\]
Thus $\|\mathbf F_L\|\le\tfrac12$.

It remains to check the scale of the diagonal. Lemmas~\ref{lem:coh} and~\ref{lem:diagdom} give, for each angle coordinate,
\[
    (\widetilde{\mathbf D}_L)_{\phi_j\phi_j}
    =
    2p_j^2
    \frac{\|\qt_j\|^2}{\|\q_j\|^2}
    \|\at_j\|^2
    \bigl(1+O(\rho_{\mathrm{off}}^2)\bigr)
    (1-v_{\phi_j}^2),
\]
whereas for each relative-power coordinate,
\[
    (\widetilde{\mathbf D}_L)_{\beta_j\beta_j}
    =
    p_j^2g_j^2\left(1-\frac1N\right).
\]
The uniform covariance bounds and Lemma~\ref{lem:coh} imply that each of
$\|\at_j\|^2$, $\|\qt_j\|^2/\|\q_j\|^2$, and $g_j$ lies in
\[
    \left[
        \frac{1}{\sigma_+^2+p_+S_0},
        \frac{1}{\sigma_-^2}
    \right].
\]
Here
$S_0\le1+1/(2C_0)$ under (A2). The profiling factors satisfy
$1-v_{\phi_j}^2\in[2/3,1]$ and
$1-1/N\in[1/2,1]$. The choice of $C_0$ also makes the
$O(\rho_{\mathrm{off}}^2)$ factor lie in a fixed positive interval.
Together with (A1), these bounds yield constants
$0<c_D\le C_D<\infty$, depending only on
$(\eta,p_\pm,\sigma_\pm^2)$, such that
\[
    c_D
    \le
    (\widetilde{\mathbf D}_L)_{aa}
    \le
    C_D
\]
for every loss-relevant coordinate $a$. Combining this diagonal estimate
with $\|\mathbf F_L\|\le\tfrac12$ gives the claimed two-sided bound
\[
    \frac{c_D}{2}\Ib
    \preceq
    \widetilde\Jb_L
    \preceq
    \frac{3C_D}{2}\Ib.
\]
Thus one may take $c_J=c_D/2$ and $C_J=3C_D/2$. At every fixed $N$,
$\mathbf T$ is invertible, so
$\Jb_L=\mathbf T\widetilde\Jb_L\mathbf T\succ0$ as well.

The trace identity is purely algebraic. Since
\[
    \widetilde\Jb_L\inv
    =
    \mathbf T\Jb_L\inv\mathbf T,
    \qquad
    \widetilde\Mb
    =
    \mathbf T\inv\Mb\mathbf T\inv,
\]
cyclicity of trace gives
\[
    \tr(\widetilde\Mb\widetilde\Jb_L\inv)
    =
    \tr\!\left(
        \mathbf T\inv\Mb\Jb_L\inv\mathbf T
    \right)
    =
    \tr(\Mb\Jb_L\inv).
\]
Proposition~\ref{prop:visbound} now gives
$\tr(\widetilde\Mb\widetilde\Jb_L^{-1})=C_\theta\le K$.
This bound holds without the quantitative separation assumption.

This proves Lemma~\ref{lem:unifreg}.
\end{proof}

\section{Split one-step efficiency}\label{app:onestep}
This appendix proves the efficiency of the fixed-chart estimator
\eqref{eq:pilotdef}--\eqref{eq:onestepdef}. Throughout, $N$ and $K$ are fixed,
$\Theta$ is the compact set in (\ref{eq:Theta-chart}), and
$\tha\in\operatorname{int}(\Theta)$. All suprema over $n$ below are taken
for $n$ large enough that the local alternatives lie in $\Theta$.
The proof separates the two
tasks performed by the sample split. The pilot block localizes the
parameter in Euclidean chart distance at rate $m_n^{-1/2}$. Conditional on
that pilot, the update block supplies the score fluctuation, while the
Fisher-scoring correction cancels the pilot error to first order. The
remaining terms are of order
$r_n+\sqrt n\,r_n^2=o(1)$ after $\sqrt n$ scaling.

We may enlarge an admissible localization rate without invalidating the
moment bound. If the bound below holds with $r_n^{(0)}$, define the buffered rate
\[
    r_n=\max\{r_n^{(0)},n^{-1}\}.
\]
Then the same moment bound still holds, and
$r_n=o(n^{-1/4})$ whenever $r_n^{(0)}=o(n^{-1/4})$. We use this
convention throughout the appendix.

\begin{lemma}[Split one-step efficiency]\label{lem:onestep}
Fix $H<\infty$ and write $\tha_n=\tha+\bm h/\sqrt n$, $\thL_n=\thL+\bm h_L/\sqrt n$. Split the $n$
snapshots into a pilot block of size $m_n=o(n)$ and an update block of size $n_2=n-m_n$, and let
$\tilde\tha_n\in\Theta$ be any pilot computed from the pilot block alone that localizes at some rate
$r_n=o(n^{-1/4})$ with all moments: for every $q<\infty$,
\begin{equation}\tag{P}\label{eq:pilotP}
\sup_{\|\bm h\|\le H}\E_{\tha_n}\big\|\tilde\tha_n-\tha_n\big\|^q\ \le\ C_q\,r_n^q.
\end{equation}
For a random sequence $Y_n$, the notation $Y_n=O_{L^q}(a_n)$ means $\|Y_n\|_{L^q}=O(a_n)$, and $o_{L^q}(a_n)$ is defined analogously. Then, uniformly over $\|\bm h\|\le H$, the one-step update \eqref{eq:onestepdef} obeys
\[
\begin{aligned}
&\sqrt n(\hat\thL_n-\thL_n)=\sqrt{\tfrac n{n_2}}\,\Jb_L(\tha_n)\inv\,\mathbb G_{L,n}+\xi_n,\\
&\mathbb G_{L,n}:=n_2^{-1/2}\!\!\sum_{\ell=m_n+1}^{n}\!\dot\ell_L(\x_\ell;\tha_n),
\end{aligned}
\]
where $\xi_n=o_{L^q}(1)$ for every $q<\infty$. Moreover, for each $p<\infty$,
\[
\sup_n\sup_{\|\bm h\|\le H}
\E_{\tha_n}\|\sqrt n(\hat\thL_n-\thL_n)\|^p<\infty.
\]
\end{lemma}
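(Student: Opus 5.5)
We will condition on the pilot block throughout. Because the update block $\{\x_\ell\}_{\ell>m_n}$ is independent of $\tilde\tha_n$, we may treat $\tilde\tha_n=t$ as fixed, carry out the estimates for each such $t\in\Theta$, and integrate over the pilot law at the end using (\ref{eq:pilotP}).

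First we remove the projection. For $\tha$ interior, $\tha_n$ lies in a fixed ball $B(\tha,2\delta)\subset\Theta$ for large $n$. On the event $\{\|\bar\tha_n-\tha_n\|\le\delta\}$ we have $\Pi_\Theta[\bar\tha_n]=\bar\tha_n$, where $\bar\tha_n$ is the unprojected update. Since $\Pi_\Theta$ is $1$-Lipschitz and $\tha_n\in\Theta$, we also have $\|\hat\tha_n-\tha_n\|\le\|\bar\tha_n-\tha_n\|$ everywhere. The complementary event will be shown to have probability $O(n^{-k})$ for every $k$, via Markov's inequality and the moment bounds below. Combined with the boundedness of $\Theta$, this makes its contribution $o_{L^q}(1)$ after $\sqrt n$ scaling.

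Next we expand the score. Taylor's theorem in $t$ around $\tha_n$ gives
\begin{equation*}
S_n(t)=S_n(\tha_n)+\dot S_n(\tha_n)(t-\tha_n)+\mathcal E_n(t),
\end{equation*}
where $\|\mathcal E_n(t)\|\le\sup_{s\in\Theta}\|\partial^2 S_n(s)\|\,\|t-\tha_n\|^2$. By Lemma~\ref{lem:compactreg} (the $r=2$ moments), the sup factor has all moments bounded. We also write $\dot S_n(\tha_n)=-\Jb(\tha_n)+\mathbf V_n$. Here $\mathbf V_n$ is a centered i.i.d.\ average with $\|\mathbf V_n\|_{L^q}=O(n_2^{-1/2})$, by Rosenthal/Marcinkiewicz--Zygmund and the $r=1$ score moments, together with the information identity $\E[\partial\dot\ell]=-\Jb$. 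Substituting $t=\tilde\tha_n$ yields
\begin{equation*}
\bar\tha_n-\tha_n=\Jb(\tilde\tha_n)^{-1}S_n(\tha_n)+\big[\Ib-\Jb(\tilde\tha_n)^{-1}\Jb(\tha_n)\big](\tilde\tha_n-\tha_n)+\Jb(\tilde\tha_n)^{-1}\big[\mathbf V_n(\tilde\tha_n-\tha_n)+\mathcal E_n\big].
\end{equation*}
By the Lipschitz bound on $\Jb^{-1}$ (Lemma~\ref{lem:compactreg}), the middle term is $O(\|\tilde\tha_n-\tha_n\|^2)$. Likewise, $\Jb(\tilde\tha_n)^{-1}-\Jb(\tha_n)^{-1}=O(\|\tilde\tha_n-\tha_n\|)$ multiplies $S_n(\tha_n)=O_{L^q}(n_2^{-1/2})$. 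We then apply H\"older's inequality with (\ref{eq:pilotP}); independence lets us multiply the conditional moments. All remainders are $O_{L^q}(r_n^2+r_n n^{-1/2})$, so after $\sqrt n$ scaling they are $O(\sqrt n\,r_n^2+r_n)=o(1)$ because $r_n=o(n^{-1/4})$. This leaves the leading term
\begin{equation*}
\sqrt n\,\Jb(\tha_n)^{-1}S_n(\tha_n)=\sqrt{n/n_2}\,\Jb(\tha_n)^{-1}n_2^{-1/2}\sum\dot\ell(\x_\ell;\tha_n).
\end{equation*}

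Finally we extract the $\thL$ block. By block inversion, the $\thL$-rows of $\Jb^{-1}\dot\ell$ equal $\Jb_L^{-1}\big(\dot\ell_{(\phi\beta)}-\Jb_{(\phi\beta)\gamma}\Jb_{\gamma\gamma}^{-1}\dot\ell_\gamma\big)=\Jb_L^{-1}\dot\ell_L$, which gives the stated representation with $\mathbb G_{L,n}$. The moment bound follows because $\mathbb G_{L,n}$ is a normalized i.i.d.\ sum with all score moments finite (Lemma~\ref{lem:compactreg}), hence $O_{L^p}(1)$ uniformly in $\|\bm h\|\le H$; moreover $\sqrt{n/n_2}\to1$, $\Jb_L^{-1}$ is bounded, and $\xi_n=o_{L^p}(1)$.

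The main obstacle is keeping every bound uniform in $t$ and in $\bm h$ while handling the random pilot inside the nonlinear factor $\Jb(\tilde\tha_n)^{-1}$ and the Taylor remainder. This is exactly why the expansion is done conditionally, with suprema over $t\in\Theta$ controlled by (\ref{eq:compact-score-moments}). The point to check carefully is that the second-order remainder uses the sup-over-$\Theta$ moment rather than a pointwise one, since $\tilde\tha_n$ is random.
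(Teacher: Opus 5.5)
Your proposal is correct and follows essentially the same route as the paper. It uses a Taylor expansion of the update-block score about $\tha_n$, with the empirical Hessian written as $-\Jb(\tha_n)$ plus a centered $O_{L^q}(n_2^{-1/2})$ fluctuation, and the first-order cancellation $[\Ib-\Jb(\tilde\tha_n)^{-1}\Jb(\tha_n)](\tilde\tha_n-\tha_n)=O(\|\tilde\tha_n-\tha_n\|^2)$. It then relies on pilot/update independence together with the sup-over-$\Theta$ second-derivative envelope, a super-polynomially small exit probability to remove $\Pi_\Theta$, and the block-inverse identity $[\Jb^{-1}\dot\ell]_L=\Jb_L^{-1}\dot\ell_L$. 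The differences are cosmetic: you factor moments through conditioning on the pilot where the paper uses H\"older, and you bound the projection event directly as $o_{L^q}(1)$ instead of absorbing it through the buffered rate $r_n\ge n^{-1}$, which is enough for the lemma as stated.
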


We first record the deterministic and moment bounds used in the lemmas below. Write
\[
    d_n=\tilde\tha_n-\tha_n,
    \qquad
    \Jb_n=\Jb(\tha_n).
\]
Because $\tha$ is an interior point, all $\tha_n$ with
$\|\bm h\|\le H$ lie a positive distance from the boundary of $\Theta$
for sufficiently large $n$. Lemma~\ref{lem:compactreg} gives
\[
\sup_{\vartheta\in\Theta}\|\Jb(\vartheta)^{-1}\|<\infty,
\qquad
\|\Jb(\vartheta)^{-1}-\Jb(\vartheta')^{-1}\|
\le C\|\vartheta-\vartheta'\|,
\]
as well as the score-derivative moment envelopes in
(\ref{eq:compact-score-moments}). It suffices to prove the moment estimates
at orders $q\ge2$, since smaller positive orders follow by monotonicity.
At the true parameter,
\[
    \E_{\tha_n}\dot\ell(\x;\tha_n)=\bm0,
    \qquad
    \E_{\tha_n}
    [\partial_\vartheta\dot\ell(\x;\tha_n)]
    =
    -\Jb_n,
\]
the second identity being Bartlett's identity. Finally, $d_n$ is
measurable with respect to the pilot block, hence independent of the
update-block sigma-field and of every update-block empirical average
evaluated at deterministic parameters.

\begin{lemma}[Pilot localization]\label{lem:pilot}
The lexicographically selected least-squares pilot $\tilde\tha_n$ is measurable. For every
$q<\infty$, uniformly over $\|\bm h\|\le H$, it satisfies
\begin{equation}\label{eq:pilotmom}
\E_{\tha_n}\|d_n\|^q\le C_q\,m_n^{-q/2};
\end{equation}
in particular \eqref{eq:pilotP} holds with $r_n=m_n^{-1/2}$.
\end{lemma}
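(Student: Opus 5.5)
The plan has three parts: establish measurability, turn identifiability on the compact set $\Theta$ into a quantitative inverse-Lipschitz bound for the covariance map, and then pass the moment bounds for $\hat\Rb_1$ to the pilot error $d_n$.

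\emph{Measurability.} The objective $F(\bm\vartheta,\hat\Rb_1)=\|\hat\Rb_1-\Rb(\bm\vartheta)\|_F^2$ is jointly continuous. It is also continuous in the data through $\hat\Rb_1$. The set $\Theta$ is compact. Hence the argmin correspondence is a nonempty, compact-valued, upper hemicontinuous measurable correspondence. Its lexicographically least selection can be built coordinate by coordinate. Take the minimum of the first coordinate over the argmin set, restrict to that value, then minimize the second coordinate, and so on. Each step is a minimum of a continuous function over a measurable compact-valued correspondence, so it is measurable by the measurable maximum theorem. This part is routine.

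\emph{Identifiability to a two-sided bound.} The key deterministic claim is
\[
\|\Rb(\bm\vartheta)-\Rb(\bm\vartheta')\|_F\ \ge\ c_\Theta\,\|\bm\vartheta-\bm\vartheta'\|,\qquad \bm\vartheta,\bm\vartheta'\in\Theta.
\]
I will prove it in three steps.
\begin{enumerate}
\item \emph{Global injectivity.} Suppose $\Rb(\bm\vartheta)=\Rb(\bm\vartheta')$ with $2K+1\le N$. The matrix $\sum_jp_j\ba_j\ba_j\herm-\sum_jp'_j\ba'_j\ba_j'\herm$ then has rank at most $2K<N$. It equals $(\sigma'^2-\sigma^2)\Ib$, so the noise variances agree. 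Vandermonde independence of at most $2K\le N-1$ distinct nodes then forces the source sets and powers to coincide. The ordering within the chart removes the labeling ambiguity.
\item \emph{Local bound.} By Lemma~\ref{lem:compactreg}, the full Fisher matrix is uniformly nonsingular on $\Theta$. In the Slepian--Bangs form, this says the derivative $D\Rb(\bm\vartheta)$ is injective, with a smallest singular value bounded below uniformly on $\Theta$. This gives the bound in a neighbourhood of the diagonal.
\item \emph{Combination.} Suppose the global bound failed. Then there would be sequences $\bm\vartheta_k,\bm\vartheta'_k$ along which the ratio tends to $0$. By compactness they converge. If the limits differ, injectivity is contradicted. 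If the limits coincide, the local bound is contradicted through a first-order Taylor argument with bounded second derivatives. Convexity of $\Theta$ lets the segment stay in $\Theta$.
\end{enumerate}

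\emph{From the bound to the moment estimate.} By the minimizing property and the triangle inequality,
\[
\|\Rb(\tilde\tha_n)-\Rb(\tha_n)\|_F\le\|\hat\Rb_1-\Rb(\tilde\tha_n)\|_F+\|\hat\Rb_1-\Rb(\tha_n)\|_F\le 2\|\hat\Rb_1-\Rb(\tha_n)\|_F .
\]
Combining this with the two-sided bound gives $\|d_n\|\le (2/c_\Theta)\|\hat\Rb_1-\Rb(\tha_n)\|_F$ pointwise. The matrix $\hat\Rb_1-\Rb(\tha_n)$ is an average of $m_n$ i.i.d. centered matrices $\x\x\herm-\Rb$. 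Under $\Theta$ these have Gaussian moments of every order, uniformly, by \eqref{eq:compact-covariance}. Rosenthal's (or the Marcinkiewicz--Zygmund) inequality, applied entrywise, then gives $\E\|\hat\Rb_1-\Rb(\tha_n)\|_F^q\le C_q m_n^{-q/2}$. The constant is uniform over $\tha_n\in\Theta$, and therefore uniform over $\|\bm h\|\le H$ once $n$ is large.

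\emph{Main obstacle.} I expect the hardest step to be the global inverse-Lipschitz constant, specifically the injectivity argument. It must be checked carefully that the ordered chart together with $\phi_K-\phi_1\le1-\Delta$ removes every aliasing and permutation ambiguity. It must also be checked that the Vandermonde rank count ($2K$ nodes counting the union, plus the identity component) really uses $2K+1\le N$ as intended.
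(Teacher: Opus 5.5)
Your proposal is correct, but it handles the identification step differently from the paper.

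\textbf{The paper's route.} It proves only a \emph{local} inverse bound, $\|\Rb(\vartheta)-\Rb(\tha_n)\|_F\ge s\|\vartheta-\tha_n\|$, on a $\delta_0$-ball, using injectivity of $D\Rb$ at the single interior point $\tha$. It pairs this with an identification gap $c_{\delta_0}>0$ outside that ball. It then splits on the event $\{\|\hat\Rb_1-\Rb(\tha_n)\|_F<c_{\delta_0}/4\}$. On that event the minimizer is forced into the ball. Off it, the paper uses $\|d_n\|\le\operatorname{diam}(\Theta)$ together with a Markov bound of order $m_n^{-p/2}$ for every $p$.

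\textbf{Your route.} You prove a global bi-Lipschitz constant $c_\Theta$ on all of $\Theta\times\Theta$. This needs $D\Rb$ injective at every point of $\Theta$, boundary included. Lemma~\ref{lem:compactreg} supplies this, because $v^\top\Jb v=\|\Rb^{-1/2}D\Rb[v]\Rb^{-1/2}\|_F^2\le\sigma_-^{-4}\|D\Rb[v]\|_F^2$.

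\textbf{What each buys.} Your global constant gives the deterministic inequality $\|d_n\|\le(2/c_\Theta)\|\hat\Rb_1-\Rb(\tha_n)\|_F$. The moment bound then follows immediately, with no event splitting. It also holds uniformly over every truth in $\Theta$, not only on local balls around an interior point. The paper's consistency-then-rate template is more portable, since it needs nondegeneracy of $D\Rb$ only at the truth.

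\textbf{Injectivity.} Your argument runs as follows: the difference of the two signal terms has rank at most $2K<N$, which forces equal noise variances. Linear independence of at most $2K\le N$ distinct steering vectors then matches the nodes and powers. This is more elementary than the paper's spectral identification of $\sigma^2$ followed by uniqueness of the Vandermonde decomposition of a rank-$K$ Toeplitz matrix. The price is that you use $2K+1\le N$ rather than only $K<N$.

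\textbf{Aliasing.} The worry you flag is settled by $\Phi_{\rm op}\subset(\theta_0-1/2,\theta_0+1/2)$. The matrix $\ba(\phi)\ba(\phi)\herm$ depends only on $\phi$ modulo one. Two points of an interval of length less than one that agree modulo one must be equal.
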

\begin{proof}
We prove measurability, a global identification gap, and a local inverse
bound.

\emph{Measurability.}
The criterion
\[
    (\mathbf C,\vartheta)
    \longmapsto
    \|\mathbf C-\Rb(\vartheta)\|_F^2
\]
is jointly continuous, and $\Theta$ is compact. The argmin is therefore a
nonempty compact-valued measurable correspondence of $\hat\Rb_1$ by the
measurable maximum theorem \cite[Thm.~18.19]{AliprantisBorder2006}.
Successively minimizing the first coordinate, then the second, and so on
over the surviving compact sets selects the lexicographically least
minimizer. Each step is measurable. Thus $\tilde\tha_n$ is measurable.

\emph{Sample-covariance moments.}
The matrices
$\x_\ell\x_\ell\herm-\Rb(\tha_n)$ are independent and centered, and have
moments of every order, uniformly over the local alternatives. The
fixed-dimensional covariance bound in (\ref{eq:compact-covariance}) and the
Marcinkiewicz--Zygmund inequality give, for every $q\ge2$,
\begin{equation}\label{eq:scmom}
\E_{\tha_n}\|\hat\Rb_1-\Rb(\tha_n)\|_F^q\le C_q\,m_n^{-q/2}.
\end{equation}
The same conclusion for $0<q<2$ follows from Jensen's inequality applied
to the $q=2$ bound.

\emph{Identification.}
The map $\vartheta\mapsto\Rb(\vartheta)$ is injective on the ordered
operational chart. To see this, write
\[
    \Rb(\vartheta)-\sigma^2\Ib
    =
    \sum_{j=1}^Kp_j\ba(\phi_j)\ba(\phi_j)\herm.
\]
The steering vectors are linearly independent because the nodes are
distinct and $K<N$. Hence the signal term is positive semidefinite of
rank $K$, and $\sigma^2$ is the common value of the smallest $N-K$
eigenvalues of $\Rb(\vartheta)$. Once $\sigma^2$ is known, the signal term
is a rank-$K$ positive semidefinite Hermitian Toeplitz matrix. Its
Vandermonde decomposition into $K$ distinct nodes and positive weights is
unique for $K<N$ \cite{YangXieStoica2016}. The ordered
chart fixes the label permutation, so the nodes, powers, and noise variance
are all determined by the covariance.

Fix $\delta>0$ and take $n_0$ large enough that every admissible
$\tha_n$ lies in the interior of the operational chart. Compactness and
the preceding injectivity imply
\begin{equation}\label{eq:idgap}
c_\delta:=\inf_{\substack{\vartheta\in\Theta,\ \|\bm h\|\le H,\ n\ge n_0\\ \|\vartheta-\tha_n\|\ge\delta}}\|\Rb(\vartheta)-\Rb(\tha_n)\|_F>0.
\end{equation}
Indeed, if the infimum were zero, compactness would give subsequences
$\vartheta_s\to\vartheta^\star$ and
$\tha_{n_s}\to\tha^\star$ such that
\[
    \|\vartheta^\star-\tha^\star\|\ge\delta,
    \qquad
    \Rb(\vartheta^\star)=\Rb(\tha^\star).
\]
Injectivity would then force
$\vartheta^\star=\tha^\star$, a contradiction.

\emph{Local inverse bound.}
Let $D(\vartheta)$ denote the Jacobian of
$\vartheta\mapsto\Rb(\vartheta)$, viewed as a linear map from the real
parameter space to Hermitian matrices with the Frobenius norm. The Fisher
matrix is the Gram matrix of the columns of $D(\tha)$ after invertible
whitening. Lemma~\ref{lem:pointwise} therefore implies that $D(\tha)$ is
injective. Set
\[
    s=\frac12\sigma_{\min}(D(\tha))>0.
\]
By continuity of $D$, choose $\delta_0>0$ so that
\[
    \|D(\vartheta)-D(\tha)\|\le s
    \quad\text{whenever}\quad
    \|\vartheta-\tha\|\le2\delta_0,
\]
and enlarge $n_0$ so that
$\|\tha_n-\tha\|\le\delta_0$ uniformly over $\|\bm h\|\le H$. If
$\|\vartheta-\tha_n\|\le\delta_0$, convexity of $\Theta$ keeps the entire
segment between $\tha_n$ and $\vartheta$ in the chart, and
\[
    \Rb(\vartheta)-\Rb(\tha_n)
    =
    \int_0^1
    D\bigl(\tha_n+t(\vartheta-\tha_n)\bigr)
    [\vartheta-\tha_n]\,dt.
\]
Subtracting and adding $D(\tha)$ inside the integral gives
\begin{equation}\label{eq:loclb}
\begin{gathered}
\|\Rb(\vartheta)-\Rb(\tha_n)\|_F
\ge s\,\|\vartheta-\tha_n\|,\\
\|\vartheta-\tha_n\|\le\delta_0.
\end{gathered}
\end{equation}
Indeed, with $v=\vartheta-\tha_n$,
\[
\begin{aligned}
    &\|\Rb(\vartheta)-\Rb(\tha_n)\|_F\\
    &\quad\ge
    \|D(\tha)v\|_F
    -
    \int_0^1
    \|
        [D(\tha_n+tv)-D(\tha)]v
    \|_F\,dt\\
    &\ge
    2s\|v\|-s\|v\|
    =
    s\|v\|.
\end{aligned}
\]
This is the local inverse estimate used below.

Let
\[
    \mathcal E_n
    =
    \left\{
        \|\hat\Rb_1-\Rb(\tha_n)\|_F<c_{\delta_0}/4
    \right\}.
\]
On $\mathcal E_n$, every $\vartheta$ satisfying
$\|\vartheta-\tha_n\|\ge\delta_0$ obeys
\[
    \|\hat\Rb_1-\Rb(\vartheta)\|_F
    \ge
    c_{\delta_0}
    -
    \|\hat\Rb_1-\Rb(\tha_n)\|_F
    >
    \frac34c_{\delta_0},
\]
whereas the residual norm at $\tha_n$ is below $c_{\delta_0}/4$. Hence every
least-squares minimizer lies inside the $\delta_0$-ball. Using
\eqref{eq:loclb}, the triangle inequality, and the minimizing property,
\[
\begin{aligned}
    s\|d_n\|
    &\le
    \|\Rb(\tilde\tha_n)-\Rb(\tha_n)\|_F\\
    &\le
    \|\hat\Rb_1-\Rb(\tilde\tha_n)\|_F
    +
    \|\hat\Rb_1-\Rb(\tha_n)\|_F\\
    &\le
    2\|\hat\Rb_1-\Rb(\tha_n)\|_F.
\end{aligned}
\]
Thus
$\|d_n\|\le(2/s)\|\hat\Rb_1-\Rb(\tha_n)\|_F$ on $\mathcal E_n$.
On its complement, $\|d_n\|\le\operatorname{diam}(\Theta)$. By
\eqref{eq:scmom} and Markov's inequality, for every $p<\infty$,
\[
    \Pr_{\tha_n}(\mathcal E_n^c)\le C_p m_n^{-p/2}.
\]
Combining the two events, choosing $p\ge q$, and applying
\eqref{eq:scmom} proves \eqref{eq:pilotmom}. Finally,
$r_n=m_n^{-1/2}=o(n^{-1/4})$ for
$m_n=\min\{\lceil n^\omega\rceil,n-1\}$ with $\tfrac12<\omega<1$.
\end{proof}

\begin{lemma}[One-step expansion]\label{lem:onestepexp}
Assume $m_n=o(n)$ and that the pilot satisfies \eqref{eq:pilotP} with a
buffered rate $r_n=o(n^{-1/4})$. With
$\bar\tha_n=\tilde\tha_n+\Jb(\tilde\tha_n)\inv S_n(\tilde\tha_n)$ and
$\hat\tha_n=\Pi_\Theta(\bar\tha_n)$, uniformly over $\|\bm h\|\le H$,
\begin{equation}\label{eq:onestep}
\sqrt n(\hat\tha_n-\tha_n)=\Jb_n\inv\sqrt n\,S_n(\tha_n)+o_{L^2}(1).
\end{equation}
Moreover, the $\sqrt n$-scaled remainder is $O_{L^q}(r_n+\sqrt n\,r_n^2)=o_{L^q}(1)$ for every $q<\infty$.
\end{lemma}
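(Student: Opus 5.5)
We prove \eqref{eq:onestep} by a second-order Taylor expansion of the update-block score around the local truth $\tha_n$, conditionally on the pilot block, followed by a separate treatment of the projection $\Pi_\Theta$. Write $d_n=\tilde\tha_n-\tha_n$ and define the empirical Hessian $\mathbb H_n(t)=n_2^{-1}\sum_{\ell>m_n}\partial_t\dot\ell(\x_\ell;t)$. The mean-value form gives
\[
S_n(\tilde\tha_n)=S_n(\tha_n)+\mathbb H_n(\tha_n)d_n+\mathcal Q_n,\qquad
\|\mathcal Q_n\|\le\Big(n_2^{-1}\sum_{\ell>m_n}\sup_{t\in\Theta}\|\partial_t^2\dot\ell(\x_\ell;t)\|\Big)\|d_n\|^2 .
\]
Here convexity of $\Theta$ keeps the segment inside the chart. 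Lemma~\ref{lem:compactreg} with $r=2$ bounds every $L^q$ norm of the averaged envelope uniformly. The envelope depends only on the update block, so it is independent of $d_n$, and \eqref{eq:pilotP} therefore gives $\mathcal Q_n=O_{L^q}(r_n^2)$.

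Next we replace $\mathbb H_n(\tha_n)$ by its mean $-\Jb_n$ using Bartlett's identity. The centered average $\mathbb H_n(\tha_n)+\Jb_n$ is an average of i.i.d.\ mean-zero terms evaluated at a deterministic parameter. Marcinkiewicz--Zygmund and Lemma~\ref{lem:compactreg} with $r=1$ make it $O_{L^q}(n_2^{-1/2})$, and independence from $d_n$ makes its product with $d_n$ of order $O_{L^q}(n_2^{-1/2}r_n)$. Similarly $S_n(\tha_n)=O_{L^q}(n_2^{-1/2})$.

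Substituting into $\bar\tha_n-\tha_n=d_n+\Jb(\tilde\tha_n)^{-1}S_n(\tilde\tha_n)$, the leading pilot error cancels:
\[
\bar\tha_n-\tha_n=\Jb(\tilde\tha_n)^{-1}S_n(\tha_n)+\big[\Ib-\Jb(\tilde\tha_n)^{-1}\Jb_n\big]d_n+\Jb(\tilde\tha_n)^{-1}\big(\mathcal Q_n+(\mathbb H_n+\Jb_n)d_n\big).
\]
The bounded, Lipschitz inverse from Lemma~\ref{lem:compactreg} gives $\|\Ib-\Jb(\tilde\tha_n)^{-1}\Jb_n\|\le C\|d_n\|$, so the second term is $O_{L^q}(r_n^2)$. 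Replacing $\Jb(\tilde\tha_n)^{-1}$ by $\Jb_n^{-1}$ in the first term costs $O_{L^q}(r_nn_2^{-1/2})$, again by independence. After multiplying by $\sqrt n$ and using $n/n_2\to1$, all remainders are $O_{L^q}(r_n+\sqrt n\,r_n^2)$, which is $o_{L^q}(1)$ because $r_n=o(n^{-1/4})$. The buffered convention $r_n\ge n^{-1}$ only enlarges this rate. Finally, the statement's normalization $\sqrt n\,S_n(\tha_n)=\sqrt{n/n_2}\,\mathbb G_n$ matches the $\sqrt{n/n_2}$ factor in Lemma~\ref{lem:onestep}.

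The main obstacle is the projection $\Pi_\Theta$, which must not spoil the moment statement. Since $\tha_n$ stays a fixed distance $\delta>0$ from $\partial\Theta$, we have $\hat\tha_n=\bar\tha_n$ on $\{\|\bar\tha_n-\tha_n\|<\delta\}$. Markov's inequality at high order, using the moment bounds just derived, makes the complement have probability $O(n^{-p})$ for every $p$. On the complement, nonexpansiveness of the projection together with $\tha_n\in\Theta$ gives $\|\hat\tha_n-\tha_n\|\le\|\bar\tha_n-\tha_n\|$, and it is also bounded by $\operatorname{diam}\Theta$. Hölder's inequality then shows that the $\sqrt n$-scaled difference between projected and unprojected estimates is $o_{L^q}(1)$. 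This proves \eqref{eq:onestep} with $L^q$ remainder, uniformly over $\|\bm h\|\le H$, because every constant used is uniform on $\Theta$.
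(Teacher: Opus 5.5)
Your argument follows the paper's proof essentially step for step. It uses the same Taylor expansion of the update-block score about $\tha_n$ and the same use of Bartlett's identity to cancel the pilot error at first order. It controls the same three residuals (inverse-information perturbation, quadratic pilot term, empirical-Hessian/Taylor remainder) through block independence and Rosenthal/Marcinkiewicz--Zygmund moments. It also handles $\Pi_\Theta$ with the same exit-probability plus H\"older argument. Factoring expectations by independence instead of applying H\"older, and bounding the second-order term by the averaged sup-envelope instead of the sup of the average, are cosmetic differences.

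One step of the ``moreover'' clause is left open. You conclude only that the projection displacement $\sqrt n(\hat\tha_n-\bar\tha_n)$ is $o_{L^q}(1)$. That suffices for \eqref{eq:onestep}, but not for the claimed rate $O_{L^q}(r_n+\sqrt n\,r_n^2)$ of the full remainder. Your own Markov/H\"older bound in fact gives $\sqrt n\,\|\hat\tha_n-\bar\tha_n\|_{L^q}=O(n^{-M})$ for every $M$. The buffer $r_n\ge n^{-1}$ is exactly what lets you take $M>1$ and absorb this term into $O(r_n)$. That, rather than merely ``enlarging'' the rate, is the role the buffered convention plays in the paper.
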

\begin{proof}
The proof is the first-order cancellation of the pilot error followed by
moment control of the three residual terms.

Let
\[
    \mathbf K_n
    =
    \frac1{n_2}
    \sum_{\ell>m_n}
    \partial_\vartheta\dot\ell(\x_\ell;\tha_n).
\]
Taylor's formula along the segment from $\tha_n$ to
$\tilde\tha_n$ gives
\[
    S_n(\tilde\tha_n)
    =
    S_n(\tha_n)-\Jb_nd_n+y_n,
    \qquad
    y_n=(\mathbf K_n+\Jb_n)d_n+R_n.
\]
Because the segment lies in the convex set $\Theta$, the second-order
remainder satisfies
\[
    \|R_n\|
    \le
    A_n\|d_n\|^2,
    \qquad
    A_n
    =
    \sup_{\vartheta\in\Theta}
    \left\|
        \frac1{n_2}
        \sum_{\ell>m_n}
        \partial_\vartheta^2\dot\ell(\x_\ell;\vartheta)
    \right\|.
\]
The derivative envelope recorded above implies
$\|A_n\|_{L^q}\le C_q$ for every $q<\infty$.

Substitute the score expansion into the unprojected Fisher-scoring update.
The identity
\[
    d_n-\Jb(\tilde\tha_n)\inv\Jb_nd_n
    =
    \Jb(\tilde\tha_n)\inv
    [\Jb(\tilde\tha_n)-\Jb_n]d_n
\]
exhibits the first-order cancellation. Adding and subtracting
$\Jb_n\inv S_n(\tha_n)$ then gives
\[
    \bar\tha_n-\tha_n
    =
    \Jb_n\inv S_n(\tha_n)
    +
    A_n^{(1)}+A_n^{(2)}+A_n^{(3)},
\]
where
\[
\begin{aligned}
    A_n^{(1)}
    &=
    [\Jb(\tilde\tha_n)\inv-\Jb_n\inv]S_n(\tha_n),\\
    A_n^{(2)}
    &=
    \Jb(\tilde\tha_n)\inv
    [\Jb(\tilde\tha_n)-\Jb_n]d_n,\\
    A_n^{(3)}
    &=
    \Jb(\tilde\tha_n)\inv y_n.
\end{aligned}
\]
These are, respectively, the inverse-information perturbation, the
quadratic pilot term, and the empirical-Hessian/Taylor remainder.

Fix $q<\infty$. Rosenthal's inequality, uniformly over the local
alternatives, gives
\[
    \|S_n(\tha_n)\|_{L^{2q}}
    =
    O(n_2^{-1/2}),
    \qquad
    \|\mathbf K_n+\Jb_n\|_{L^{2q}}
    =
    O(n_2^{-1/2}).
\]
The pilot condition gives
$\|d_n\|_{L^{4q}}\le C_qr_n$. The uniform inverse and Lipschitz bounds
then yield the following estimates. For brevity, let
$B_n=(\mathbf K_n+\Jb_n)d_n$. Then
\[
\begin{aligned}
    \sqrt n\,\|A_n^{(1)}\|_{L^q}
    &\le
    C\sqrt n\,
    \|d_n\|_{L^{2q}}
    \|S_n(\tha_n)\|_{L^{2q}}
    =
    O(r_n),\\
    \sqrt n\,\|A_n^{(2)}\|_{L^q}
    &\le
    C\sqrt n\,\|d_n\|_{L^{2q}}^2
    =
    O(\sqrt n\,r_n^2),\\
    \sqrt n\,\|B_n\|_{L^q}
    &\le
    \sqrt n\,
    \|\mathbf K_n+\Jb_n\|_{L^{2q}}
    \|d_n\|_{L^{2q}}
    =O(r_n),\\
    \sqrt n\,\|R_n\|_{L^q}
    &\le
    \sqrt n\,
    \|A_n\|_{L^{2q}}
    \|d_n\|_{L^{4q}}^2
    =
    O(\sqrt n\,r_n^2).
\end{aligned}
\]
Therefore
\[
    \sqrt n\,
    \|A_n^{(1)}+A_n^{(2)}+A_n^{(3)}\|_{L^q}
    =
    O(r_n+\sqrt n\,r_n^2)
    =
    o(1).
\]
This proves the expansion for $\bar\tha_n$.

It remains to show that the fixed projection does not alter the expansion.
Choose $r_\Theta>0$ such that
$B(\tha,r_\Theta)\subset\Theta$. Uniformly over $\|\bm h\|\le H$,
$\tha_n\in B(\tha,r_\Theta/2)$ for all large $n$. The preceding
expansion, the update-score moment bounds, and
$r_n=o(n^{-1/4})$ imply, for every $s<\infty$,
\[
    \|\bar\tha_n-\tha_n\|_{L^s}=O(n^{-1/2}).
\]
Hence, by taking arbitrarily high moments,
\[
    \Pr_{\tha_n}(\bar\tha_n\notin\Theta)
    \le
    \Pr_{\tha_n}
    \bigl(
        \|\bar\tha_n-\tha_n\|>r_\Theta/2
    \bigr)
\]
decays faster than any prescribed polynomial order.

Let
$\delta_n^\Pi=\Pi_\Theta(\bar\tha_n)-\bar\tha_n$. It vanishes whenever
$\bar\tha_n\in\Theta$. Since $\tha_n\in\Theta$, nearest-point projection
also gives
\[
    \|\delta_n^\Pi\|
    =
    \operatorname{dist}(\bar\tha_n,\Theta)
    \le
    \|\bar\tha_n-\tha_n\|.
\]
H\"older's inequality, the all-order moment bound, and the preceding exit
probability show
\[
    \sqrt n\,\|\delta_n^\Pi\|_{L^q}=O(n^{-M})
\]
for every $q<\infty$ and every prescribed $M>0$, after choosing a
sufficiently high auxiliary moment. By the rate convention above,
$r_n\ge n^{-1}$. Taking $M>1$ gives
$n^{-M}=O(r_n)$. Thus the projection displacement is absorbed into
$O_{L^q}(r_n+\sqrt n\,r_n^2)$ for every admissible localization-rate
envelope. Adding $\delta_n^\Pi$ proves
\eqref{eq:onestep}, including the stated all-order remainder.
\end{proof}

\emph{Proof of Lemma~\ref{lem:onestep}.}
Lemma~\ref{lem:onestepexp} applies to any pilot satisfying
\eqref{eq:pilotP} with $r_n=o(n^{-1/4})$, and Lemma~\ref{lem:pilot}
verifies this condition for the least-squares pilot. Together they supply
the full-parameter expansion. It remains only to identify its
loss-relevant block and record the moments.

Partition the Fisher matrix according to
$\tha=(\thL,\gamma)$. The efficient score and information are
\[
    \dot\ell_L
    =
    \dot\ell_{(\phi\beta)}
    -
    \Jb_{(\phi\beta)\gamma}
    \Jb_{\gamma\gamma}\inv
    \dot\ell_\gamma,
    \qquad
    \operatorname{Cov}(\dot\ell_L)=\Jb_L.
\]
The block-inverse identity gives, for every score vector,
\[
    [\Jb\inv\dot\ell]_L
    =
    \Jb_L\inv\dot\ell_L.
\]
Since
\[
    \sqrt n\,S_n(\tha_n)
    =
    \sqrt{\frac n{n_2}}\,
    \mathbb G_n,
    \qquad
    \mathbb G_n
    =
    n_2^{-1/2}
    \sum_{\ell>m_n}\dot\ell(\x_\ell;\tha_n),
\]
taking the $\thL$ block in \eqref{eq:onestep} yields
\[
    \sqrt n(\hat\thL_n-\thL_n)
    =
    \sqrt{\frac n{n_2}}\,
    \Jb_L(\tha_n)\inv\mathbb G_{L,n}
    +
    \xi_n,
\]
where
\[
    \mathbb G_{L,n}
    =
    n_2^{-1/2}
    \sum_{\ell>m_n}\dot\ell_L(\x_\ell;\tha_n)
\]
and $\xi_n=o_{L^q}(1)$ for every $q<\infty$.

The leading term has covariance
\[
    \frac n{n_2}\Jb_L(\tha_n)\inv
    \longrightarrow
    \Jb_L(\tha)\inv.
\]
Rosenthal's inequality gives
$\sup_n\sup_{\|\bm h\|\le H}
\E_{\tha_n}\|\mathbb G_{L,n}\|^p<\infty$ for every $p<\infty$.
Together with the all-order remainder bound and the uniform inverse bound,
this proves
\[
    \sup_n\sup_{\|\bm h\|\le H}
    \E_{\tha_n}
    \|\sqrt n(\hat\thL_n-\thL_n)\|^p
    <\infty.
\]

For the least-squares pilot,
$m_n=\min\{\lceil n^\omega\rceil,n-1\}$ with $\tfrac12<\omega<1$ and
$r_n=m_n^{-1/2}$. Hence
\[
    r_n+\sqrt n\,r_n^2
    =
    O(n^{-\omega/2}+n^{1/2-\omega})
    =
    O(n^{1/2-\omega})
    =
    o(1).
\]
Because $m_n=o(n)$, the pilot consumes a vanishing fraction of the
snapshots and the update block retains the full first-order Fisher
information. This proves Lemma~\ref{lem:onestep}. \hfill$\square$

\section{Proofs of Theorem~\ref{thm:main} and Corollary~\ref{cor:dist}}\label{app:thm}

The converse in Theorem~\ref{thm:main} follows directly from Theorem~\ref{thm:general}. Lemma~\ref{lem:reg} verifies quadratic-mean differentiability, Fisher nonsingularity, and differentiability of the MVDR functional at every regular ULA scene. Lemma~\ref{lem:compactreg} supplies the compact-set bounds used by the attaining construction, without quantitative separation. The remainder of this appendix proves the achievability and the limiting distribution.

\subsection{Achievability}

Fix $H<\infty$ and $\|\bm h\|\le H$, and abbreviate
\[
\begin{gathered}
    \tha_n=\tha+\frac{\bm h}{\sqrt n},
    \qquad
    \thL_n=(\tha_n)_L,\\
    \w_n^\star=\psi(\tha_n),
    \qquad
    \mathbf W_n=\mathbf W(\tha_n).
\end{gathered}
\]
The estimator is exactly the fixed-chart split one-step estimator
\eqref{eq:pilotdef}--\eqref{eq:onestepdef}. Put
\[
    U_n
    =
    \sqrt n(\hat\thL_n-\thL_n).
\]
Lemma~\ref{lem:onestep}, with $n_2=n-m_n$ and $m_n=o(n)$, gives
\begin{equation*}
    U_n
    =
    \sqrt{\frac n{n_2}}\,
    \Jb_L(\tha_n)\inv\mathbb G_{L,n}
    +
    \xi_n,
    \qquad
    \|\xi_n\|_{L^q}=o(1)
\end{equation*}
for every finite $q$, uniformly over $\|\bm h\|\le H$. Moreover,
\[
    \E_{\tha_n}\mathbb G_{L,n}=\bm0,
    \qquad
    \operatorname{Cov}_{\tha_n}(\mathbb G_{L,n})
    =
    \Jb_L(\tha_n),
\]
and $U_n$ has moments of every order uniformly on the local ball.

The map $\psi$ is twice continuously differentiable on a neighborhood of
the compact fixed chart and depends only on $\thL$. Taylor expansion
along the segment from $\tha_n$ to $\hat\tha_n$ gives
\begin{equation*}
\begin{aligned}
\sqrt n(\hat\w_n-\w_n^\star)
&=\nabla\psi(\tha_n)U_n+r_n^{(\psi)},\\
\|r_n^{(\psi)}\|
&\le\frac{C}{\sqrt n}\|U_n\|^2.
\end{aligned}
\end{equation*}
The segment lies in $\Theta$ because $\tha_n,\hat\tha_n\in\Theta$ and
$\Theta$ is convex in the operational chart. The constant is uniform in
the local ball because the second derivative of $\psi$ is bounded there.

Define
\[
    \ell_n
    =
    \|\hat\w_n-\w_n^\star\|_{\mathbf W_n}^2.
\]
Substituting the preceding Taylor expansion gives the quadratic
term:
\begin{equation*}
    n\ell_n
    =
    U_n^\top\Mb(\tha_n)U_n+r_n^{(\ell)},
    \qquad
    \sup_{\|\bm h\|\le H}
    \E_{\tha_n}|r_n^{(\ell)}|
    \longrightarrow0.
\end{equation*}
Indeed, the cross term is bounded by
$Cn^{-1/2}\|U_n\|^3$, and the squared Taylor remainder by
$Cn^{-1}\|U_n\|^4$. The uniform moments from
Lemma~\ref{lem:onestep} make both expectations vanish.

Let
\[
    V_n
    =
    \sqrt{\frac n{n_2}}\,
    \Jb_L(\tha_n)\inv\mathbb G_{L,n}.
\]
Then
\[
    \operatorname{Cov}_{\tha_n}(V_n)
    =
    \frac n{n_2}\Jb_L(\tha_n)\inv.
\]
Since $\xi_n=o_{L^2}(1)$ and $V_n$ is uniformly bounded in $L^2$,
\[
    \E_{\tha_n}
    \left[
    U_n^\top\Mb(\tha_n)U_n
    \right]
    =
    \frac n{n_2}
    \tr\!\left\{
    \Mb(\tha_n)\Jb_L(\tha_n)\inv
    \right\}
    +o(1),
\]
uniformly over $\|\bm h\|\le H$. Hence
\begin{equation*}
    \sup_{\|\bm h\|\le H}
    \left|
    \E_{\tha_n}[n\ell_n]
    -
    \frac n{n_2}
    \tr\!\left\{
    \Mb(\tha_n)\Jb_L(\tha_n)\inv
    \right\}
    \right|
    \longrightarrow0.
\end{equation*}

The plug-in MVDR weight is distortionless, so
Lemma~\ref{lem:excess} applies without an exceptional event:
\[
    L_n=\frac{\ell_n}{1+\ell_n}.
\]
Furthermore, $\psi$ is Lipschitz on $\Theta$, and the metrics
$\mathbf W_n$ are uniformly bounded. Therefore
\[
    n\ell_n\le C\|U_n\|^2
\]
pointwise. The fourth-moment bound for $U_n$ gives
\[
    \sup_{n\ge n_0}\sup_{\|\bm h\|\le H}
    \E_{\tha_n}(n\ell_n)^2<\infty.
\]
The exact difference between the quadratic loss and the SINR loss satisfies
\[
    0
    \le
    n\ell_n-nL_n
    =
    \frac{n\ell_n^2}{1+\ell_n}
    \le
    \frac{(n\ell_n)^2}{n},
\]
and consequently
\[
    \sup_{\|\bm h\|\le H}
    \left|
    \E_{\tha_n}[nL_n]
    -
    \E_{\tha_n}[n\ell_n]
    \right|
    \longrightarrow0.
\]

Finally, $n/n_2\to1$, while continuity of $\Mb$ and $\Jb_L^{-1}$ gives
\[
    \sup_{\|\bm h\|\le H}
    \left|
    \tr\!\left\{
    \Mb(\tha_n)\Jb_L(\tha_n)\inv
    \right\}
    -
    \tr(\Mb(\tha)\Jb_L(\tha)\inv)
    \right|
    \longrightarrow0.
\]
Combining the last four displays proves
\[
    \sup_{\|\bm h\|\le H}
    \left|
    n\,\E_{\tha+\bm h/\sqrt n}[L(\hat\w_n)]
    -\tr(\Mb\Jb_L\inv)
    \right|
    \longrightarrow0,
\]
which is the achievability assertion of Theorem~\ref{thm:main}.
\hfill$\square$

\subsection{Proof of Corollary~\ref{cor:dist}}

Let $\bm h_n$ be any sequence with $\|\bm h_n\|\le H$ and put
$\tha_n=\tha+\bm h_n/\sqrt n$. The one-step expansion gives
\[
    U_n
    =
    \sqrt n(\hat\thL_n-\thL_n)
    =
    \sqrt{\frac n{n_2}}\,
    \Jb_L(\tha_n)\inv\mathbb G_{L,n}
    +o_{L^q}(1)
\]
for every finite $q$. The update-block summands in
$\mathbb G_{L,n}$ are independent and centered. Their moments of every
order are uniformly bounded, and
$\Jb_L(\tha_n)\to\Jb_L(\tha)$. The Lyapunov condition therefore gives
\[
    U_n
    \Rightarrow
    U\sim
    \mathcal N
    (\bm0,\Jb_L(\tha)\inv).
\]

The achievability calculation already established
\[
    n\ell_n
    =
    U_n^\top\Mb(\tha_n)U_n+o_{L^1}(1),
    \qquad
    nL_n-n\ell_n\longrightarrow0
    \quad\text{in }L^1.
\]
Since $\Mb(\tha_n)\to\Mb(\tha)$, the continuous-mapping theorem gives
\[
    nL_n
    \Rightarrow
    U^\top\Mb(\tha)U.
\]
Diagonalize
\[
    \Jb_L^{-1/2}\Mb\Jb_L^{-1/2}
    =
    \mathbf Q\,
    \operatorname{diag}(\lambda_1,\ldots,\lambda_{2K})
    \mathbf Q^\top.
\]
If
$\mathbf Z=\mathbf Q^\top\Jb_L^{1/2}U$, then
$\mathbf Z\sim\mathcal N(\bm0,\Ib)$, and hence
\[
    U^\top\Mb U
    \ \stackrel{d}{=}\
    \sum_{s=1}^{2K}\lambda_s Z_s^2,
\]
where $\stackrel{d}{=}$ denotes equality in distribution.

It remains to pass all moments. For every finite $p>0$,
Lemma~\ref{lem:onestep} gives a uniform $4p$-moment bound for $U_n$, and
the pointwise estimate $nL_n\le n\ell_n\le C\|U_n\|^2$ gives
\[
    \sup_n\E_{\tha_n}(nL_n)^{2p}<\infty.
\]
Thus $(nL_n)^p$ is uniformly integrable. Distributional convergence
therefore implies convergence of every moment to the corresponding
moment of $\sum_s\lambda_s Z_s^2$, proving
Corollary~\ref{cor:dist}. \hfill$\square$

\section{Proof of Theorem~\ref{thm:smi}}\label{app:smi}

The proof has three components. Whitening gives the exact covariance-independent SMI risk. The unrestricted Gaussian experiment is then inserted into Theorem~\ref{thm:general}. Finally, an explicit covariance--weight derivative evaluates the resulting information trace as $N-1$.

\subsection{The exact SMI risk}

The Reed--Mallett--Brennan law
\cite{ReedMallettBrennan1974} states that, for signal-free
complex-Gaussian training with $N\ge2$ and $n\ge N$,
\[
    \rho_{\mathrm{SMI}}
    \sim
    \operatorname{Beta}(n-N+2,N-1).
\]
Therefore
\[
    \E[L_{\mathrm{SMI}}]
    =
    1-\E[\rho_{\mathrm{SMI}}]
    =
    \frac{N-1}{n+1}.
\]

To record why the law is independent of the covariance, set
\[
    \widehat\Rb
    =
    \frac1n\sum_{\ell=1}^n\x_\ell\x_\ell\herm,
    \qquad
    \widehat\w_{\mathrm{SMI}}
    =
    \frac{\widehat\Rb\inv\ba_0}
    {\ba_0\herm\widehat\Rb\inv\ba_0}.
\]
Under the whitening transformation
\[
\begin{gathered}
    \mathbf y_\ell=\Rb^{-1/2}\x_\ell,
    \qquad
    \widehat{\mathbf S}
    =
    \Rb^{-1/2}\widehat\Rb\Rb^{-1/2},\\
    \mathbf b_0=\Rb^{-1/2}\ba_0,
\end{gathered}
\]
the output-SINR ratio becomes the same ratio computed from
$(\widehat{\mathbf S},\mathbf b_0)$ in white noise. The ratio is
unchanged by rescaling $\mathbf b_0$ and by a common unitary rotation.
Rotating $\mathbf b_0/\|\mathbf b_0\|$ to $\mathbf e_1$ reduces every
$\Rb\succ0$ to the same white experiment. Hence
\begin{equation*}
    n\,\E_\Rb[L_{\mathrm{SMI}}]
    =
    \frac{n(N-1)}{n+1}
\end{equation*}
at every covariance, including every local alternative.

\subsection{The unrestricted Gaussian experiment}

Let $\{\mathbf H_a\}_{a=1}^{N^2}$ be a fixed real basis of the Hermitian
matrices and parametrize the open cone by
\[
    \Rb(\bm\vartheta)
    =
    \sum_{a=1}^{N^2}\vartheta_a\mathbf H_a
    \succ0.
\]
The complex-Gaussian density is smooth on this cone. The argument of
Lemma~\ref{lem:reg}(ii), which uses only smoothness and positive
definiteness, gives quadratic-mean differentiability with score
\[
    \dot\ell_a(\x;\bm\vartheta)
    =
    \tr\!\left[
    \Rb\inv\mathbf H_a\Rb\inv
    (\x\x\herm-\Rb)
    \right]
\]
and Fisher information
\begin{equation*}
    (\Jb_{\mathrm{unstr}})_{ab}
    =
    \tr(
    \Rb\inv\mathbf H_a
    \Rb\inv\mathbf H_b).
\end{equation*}
For a nonzero Hermitian
$\mathbf H=\sum_a u_a\mathbf H_a$,
\[
    \bm u^\top\Jb_{\mathrm{unstr}}\bm u
    =
    \|\Rb^{-1/2}\mathbf H\Rb^{-1/2}\|_F^2
    >0,
\]
so the information is nonsingular.

The covariance--information identity used below can be verified directly
in this chart. Let $\{\mathbf H^a\}$ be the trace-dual basis,
$\tr(\mathbf H^a\mathbf H_b)=\delta_{ab}$, and define
\[
    Y_a=\tr(\mathbf H^a\x\x\herm).
\]
Then $\E_{\bm\vartheta}Y_a=\vartheta_a$, and
\[
    \x\x\herm-\Rb
    =
    \sum_{b=1}^{N^2}(Y_b-\vartheta_b)\mathbf H_b.
\]
Substitution into the score gives
\[
    \dot{\bm\ell}
    =
    \Jb_{\mathrm{unstr}}
    (\mathbf Y-\bm\vartheta).
\]
Taking covariance and using
$\operatorname{Cov}(\dot{\bm\ell})=\Jb_{\mathrm{unstr}}$ yields
\[
    \operatorname{Cov}(\mathbf Y)
    =
    \Jb_{\mathrm{unstr}}\inv.
\]
Since the coordinate vector of the sample covariance is
$\widehat{\bm\vartheta}=n^{-1}\sum_{\ell=1}^n\mathbf Y_\ell$,
\begin{equation*}
    n\,\operatorname{Cov}(\widehat{\bm\vartheta})
    =
    \Jb_{\mathrm{unstr}}\inv.
\end{equation*}

We also record the derivative that connects covariance estimation to
MVDR-weight estimation. On the full positive-definite cone, let
\[
    \zeta=\ba_0\herm\Rb\inv\ba_0=P_\star^{-1},
    \qquad
    \psi(\Rb)=\frac{\Rb\inv\ba_0}{\zeta}.
\]
For a Hermitian direction $\mathbf H$, differentiation of the inverse and
the normalizing denominator gives
\begin{equation*}
    D\psi_\Rb[\mathbf H]
    =
    -\Rb\inv\mathbf H\,\wopt
    +
    \zeta\,(\wopt\herm\mathbf H\wopt)\wopt.
\end{equation*}
This formula uses no source structure and also shows
$\ba_0\herm D\psi_\Rb[\mathbf H]=0$. In the fixed real chart, define
\[
    (\Mb_{\mathrm{full}})_{ab}
    =
    \frac{
    \operatorname{Re}
    \langle
    D\psi_\Rb[\mathbf H_a],
    D\psi_\Rb[\mathbf H_b]
    \rangle_\Rb
    }{P_\star}.
\]
This is precisely the quadratic form of the linearized normalized excess
loss. Both $\Mb_{\mathrm{full}}$ and
$\Jb_{\mathrm{unstr}}$ transform by congruence under a change of real
chart. Hence,
$\tr(\Mb_{\mathrm{full}}\Jb_{\mathrm{unstr}}\inv)$ is chart-independent.

\subsection{Application of the general converse}

Fix an interior covariance $\Rb(\bm\vartheta)\succ0$. The preceding calculation shows that the unrestricted complex-Gaussian experiment is quadratic-mean differentiable with nonsingular information $\Jb_{\mathrm{unstr}}$. The derivative in (\ref{eq:full-derivative}) shows that the MVDR functional is differentiable on the positive-definite cone. Hence Theorem~\ref{thm:general} applies and gives
\begin{equation*}
\lim_{r\to\infty}\liminf_n\inf_{\hat\w_n}
\sup_{\|\bm h\|\le r}
\E_{\bm\vartheta+\bm h/\sqrt n}[nL(\hat\w_n)]
\ge\tr(\Mb_{\mathrm{full}}\Jb_{\mathrm{unstr}}^{-1}).
\end{equation*}
The next subsection evaluates this trace.

\subsection{Evaluation of the full trace}

The whitening and unitary transformations used above are invertible
equivalences of the experiment and preserve the SINR loss. Together
with chart invariance of the trace, they reduce its evaluation to
\[
    \Rb=\Ib,
    \qquad
    \ba_0=\mathbf e_1,
    \qquad
    P_\star=1.
\]
At this point, the MVDR derivative becomes
\begin{equation*}
    D\psi_{\Ib}[\mathbf H]
    =
    -\mathbf H\mathbf e_1
    +
    H_{11}\mathbf e_1
    =
    -\mathbf P_{\mathbf e_1}^{\perp}
    \mathbf H\mathbf e_1.
\end{equation*}
Thus the linearized excess is
\[
    E_{\mathrm{lin}}(\mathbf H)
    =
    \|D\psi_{\Ib}[\mathbf H]\|^2
    =
    \sum_{i=2}^N|H_{i1}|^2.
\]

Take $\mathbf H=\widehat\Rb-\Ib$. The exact covariance--information
identity gives
\begin{equation*}
\begin{aligned}
    \tr(\Mb_{\mathrm{full}}\Jb_{\mathrm{unstr}}\inv)
    &=
    n\,\E\big[
    E_{\mathrm{lin}}(\widehat\Rb-\Ib)
    \big]\\
    &=
    \sum_{i=2}^N
    n\,\E|\widehat R_{i1}|^2.
\end{aligned}
\end{equation*}
For $i\ne1$,
\[
    \widehat R_{i1}
    =
    \frac1n\sum_{\ell=1}^n
    x_{\ell i}\overline{x_{\ell1}},
    \qquad
    \E|\widehat R_{i1}|^2=\frac1n
\]
at the white point. Hence
\begin{equation*}
    \tr(\Mb_{\mathrm{full}}\Jb_{\mathrm{unstr}}\inv)
    =
    \sum_{i=2}^N1
    =
    N-1.
\end{equation*}

The same coefficient can be read directly from the Hermitian tangent
geometry. For each $j=2,\ldots,N$, the two Frobenius-orthonormal directions
\[
\frac{\mathbf e_j\mathbf e_1^*+\mathbf e_1\mathbf e_j^*}{\sqrt2},
\qquad
\frac{i(\mathbf e_j\mathbf e_1^*-\mathbf e_1\mathbf e_j^*)}{\sqrt2}
\]
each have unit Fisher information and task curvature $1/2$. All other
elements of the standard Hermitian basis have zero task curvature.
Thus the $2(N-1)$ loss-relevant real directions contribute
$2(N-1)\times(1/2)=N-1$, with the same factor $1/2$ as in
(\ref{eq:hermitian-task-bound}).

\subsection{Achievability by SMI}

For every fixed $r<\infty$ and all sufficiently large $n$, the exact covariance-independent risk gives
\[
    \sup_{\|\bm h\|\le r}
    n\,\E_{\bm\vartheta+\bm h/\sqrt n}
    [L_{\mathrm{SMI}}]
    =
    \frac{n(N-1)}{n+1}.
\]
Taking first $n\to\infty$ and then $r\to\infty$ shows that SMI matches the converse in the same iterated expanding-neighborhood sense. The limit $N-1$ equals the unrestricted converse constant
$\tr(\Mb_{\mathrm{full}}\Jb_{\mathrm{unstr}}\inv)$. Therefore, the converse and achievability in Theorem~\ref{thm:smi} meet at the same coefficient. \hfill$\square$

\bibliographystyle{IEEEtran}
\bibliography{references}

\end{document}